\newcommand{\eps}{\varepsilon}
\newcommand{\dd}{\mathrm{d}}
\def\R{{\mathbb R}}
\DeclareMathOperator*{\argmin}{\arg\!\min}
\DeclareMathOperator*{\argmax}{\arg\!\max}
\DeclareMathOperator*{\trace}{\textup{tr}}
\DeclareMathOperator{\cut}{\mathrm{cut}}
\newtheorem{theorem}{Theorem}
\newtheorem{lemma}[theorem]{Lemma}
\newtheorem{definition}[theorem]{Definition}
\newtheorem{conjecture}[theorem]{Conjecture}
\theoremstyle{definition}
\DeclareMathOperator*{\E}{\mathbb{E}}
\DeclareMathOperator*{\Var}{Var}
\DeclareMathOperator{\erf}{erf}
\DeclareMathOperator{\dom}{dom}
\DeclareMathOperator{\Ind}{Ind}
\newcommand{\N}{\mathbb{N}}
\newcommand{\cS}{\mathcal{S}}
\newcommand{\cP}{\mathcal{P}}
\newcommand{\cG}{\mathcal{G}}
\newcommand{\cT}{\mathcal{T}}
\newcommand{\ones}{\pmb{1}}
\newcommand{\comment}[1]{}
\newcommand{\gReg}[2]{\cG_{#1, #2}^{\textup{reg}}}
\newcommand{\gErd}[2]{\cG_{#1, #2}}
\def\bone{{\bf 1}}
\newcommand{\T}{\mathsf{T}}
\newcommand{\iprod}[1]{\left\langle{#1}\right\rangle}
\title{Cut Sparsification of the Clique Beyond the Ramanujan Bound: A Separation of Cut Versus Spectral Sparsification}
\author{Antares Chen\\University of Chicago
 \and
Jonathan Shi \\ Bocconi University
 \and
Luca Trevisan \\ Bocconi University}
\begin{document}
\maketitle


\begin{abstract}
We prove that a random $d$-regular graph, with high probability, is a cut sparsifier of the clique with approximation error at most $\left(2\sqrt{\frac 2 \pi} + o_{n,d}(1)\right)/\sqrt d$, where $2\sqrt{\frac 2 \pi} = 1.595\ldots$ and $o_{n,d}(1)$ denotes an error term that depends on $n$ and $d$ and goes to zero if we first take the limit $n\rightarrow \infty$ and then the limit $d \rightarrow \infty$.

This is established by analyzing linear-size cuts using techniques of Jagannath and Sen \cite{jagannath2017unbalanced} derived from ideas in statistical physics, and analyzing small cuts via martingale inequalities.

We also prove new lower bounds on spectral sparsification of the clique. If $G$ is a spectral sparsifier of the clique and $G$ has average degree $d$, we prove that the approximation error is at least the ``Ramanujan bound'' $(2-o_{n,d}(1))/\sqrt d$, which is met by $d$-regular Ramanujan graphs, provided that either the weighted adjacency matrix of $G$ is a (multiple of) a doubly stochastic matrix, or that $G$ satisfies a certain high ``odd pseudo-girth'' property. The first case can be seen as an ``Alon-Boppana theorem for symmetric doubly stochastic matrices,'' showing that a symmetric doubly stochastic matrix with $dn$ non-zero entries has a non-trivial eigenvalue of magnitude at least $(2-o_{n,d}(1))/\sqrt d$; the second case generalizes a lower bound of Srivastava and Trevisan~\cite{ST18}, which requires a large girth assumption.

Together, these results imply a separation between spectral sparsification and cut sparsification.  If $G$ is a random $\log n$-regular graph on $n$ vertices, we show that, with high probability (this is to ensure that $G$, and consequently any $d$-regular subgraph, has high pseudogirth), $G$ admits a (weighted subgraph) cut sparsifier of average degree $d$ and approximation error at most $(1.595\ldots + o_{n,d}(1))/\sqrt d$, while every (weighted subgraph) spectral sparsifier of $G$ having average degree $d$ has approximation error at least $(2-o_{n,d}(1))/\sqrt d$.
\end{abstract}


\section{Introduction}

If $G=(V,E_G,w_G)$ is a, possibly weighted,  undirected graph, a {\em cut sparsifier} of $G$ with error $\epsilon$ is a weighted graph $H=(V,E_H,w_H)$ over the same vertex set of $G$ and such that

\begin{equation} \label{eq.cut}
    \forall S\subseteq V \ \ \ (1-\epsilon) \ \cut_G(S) \leq \cut_H(S) \leq (1+\epsilon) \  \cut_G(S)
\end{equation}
where $\cut_G(S)$ denotes the number of edges in $G$ with one endpoint in $S$ and one endpoint in $V-S$, or the total weight of such edges in the case of weighted graphs. This definition is due to Benczur and Karger \cite{BK96}.

Spielman and Teng \cite{ST08:sparse} introduced the stronger definition of {\em spectral sparsification}. A  weighted graph $H=(V,E_H,w_H)$ is a spectral sparsifier of $G=(V,E_G,w_G)$ with error $\epsilon$ if
\begin{equation} \label{eq.spectral}
    \forall x \in \R^V \ \ \ (1-\epsilon) \ x^T L_G x \leq x^T L_H x \leq (1+\epsilon) \ x^T L_G x
\end{equation}
where $L_G$ is the Laplacian matrix of the graph $G$. If $A_G$ is the adjacency matrix of $G$ and $D_G$ is the diagonal matrix of weighted degrees, then the Laplacian matrix is $L_G = D_G-A_G$ and it has the property that, for every vector $x\in \R^V$,
\[ x^T L_G x = \sum_{(u,v) \in E_G} w_{u,v} \cdot (x_u - x_v)^2 \]
The definition of spectral sparsifier is  stronger than the definition of cut sparsifier because, if $x = {\mathbf 1}_S$ is the 0/1 indicator vector of a set $S$, then we have $x^TL_Gx = \cut_G(S)$. So we see that the definition in \eqref{eq.cut} is equivalent to a specialization of the definition of \eqref{eq.spectral} to the case of Boolean vectors $x \in \{ 0,1 \}^V$.

In all the known constructions of sparsifiers, the edge set $E_H$ of the sparsifier is a subset of the edge set $E_G$ of the graph $G$. We will take this condition to be part of the definition of sparsifier.

A cut sparsifier $H$ of a graph $G$ has, approximately, the same cut structure of $G$, so that, if we are interested in approximately solving a problem involving cuts or flows in $G$, we may instead solve the problem on $H$ and be guaranteed that an approximate solution computed for $H$ is also an approximate solution for $G$.

As the name suggests, for every graph $G$ it is possible to find a cut sparsifier  $H$ of $G$ which is very sparse, and running an algorithm on a sparse graph yields a faster running time than running it on $G$, if $G$ is not sparse itself.

A spectral sparsifier $H$ of $G$ has all the properties of a cut sparsifier, and, furthermore, it can be substituted for $G$ and it can accelerate computations on $G$ in some additional applications. For example, if we wish to solve a Laplacian linear system $L_Gx = b$, and $H$ is a good spectral sparsifier of $G$, then we can use $L_H$ as a preconditioner and solve $L_H^{-1}L_Gx = L_H^{-1}b$ instead. The condition number of $L_H^{-1}L_G$ will be small, making convergence fast, in return for solving the sparse problem $L_Hy = a$ once per iteration.

Benczur and Karger \cite{BK96} showed that, for every graph $G$,  a cut sparsifier with error $\epsilon$ having $O(\epsilon^{-2} n \log n)$ edges can be computed in nearly linear time. Spielman and Teng \cite{ST08:sparse} proved that a spectral sparsifier with error $\epsilon$ having $O(\epsilon^{-2} n (\log n)^{O(1)} )$ edges can be computed in nearly linear time. Spielman and Srivastava \cite{SS11} improved the number of edges that suffice to construct a spectral
sparsifier to $O(\epsilon^{-2} n\log n)$, and Batson, Spielman and Srivastava \cite{BSS09} reduced it to $O(
\epsilon^{-2} n)$.   Up to the constant in the big-Oh notation, the $O(\epsilon^{-2} n)$ bound is best possible, because every $\epsilon$ cut sparsifier of the clique (and therefore, since it is a stronger condition, every $\epsilon$ spectral sparsifier of the clique) requires
$\Omega(\epsilon^{-2} n)$ edges~\cite{ACKQWZ16}. While the construction of Batson, Spielman and Srivastava does not run in nearly linear time, there have been subsequent faster constructions with $O(\epsilon^{-2} n)$ edges running in nearly quadratic time \cite{ALO15} and nearly linear time \cite{LS17}.

In this paper we focus on the combinatorial problem of understanding the minimum number of edges that suffice to achieve cut and spectral sparsification, regardless of the efficiency of the construction. In particular, we aim to understand the best possible constant in the $\Theta(\epsilon^{-2} n)$ bound mentioned above.

Currently, the  construction (or even non-constructive existence proof) of cut sparsifiers for general graphs with the smallest number of edges is that due to Batson, Spielman and Srivastava, which also achieves spectral sparsification with the same parameters. In particular, prior to this work, there was no evidence that cut sparsification is ``easier'' than spectral sparsification, in the sense of requiring a smaller number of edges. In this paper we show that random $\log n$-regular graphs, with high probability, can be cut-sparsified with better parameters than they can be spectrally-sparsified, if one requires the sparsifier to use a subset of the edges of the graph to be sparisified. Under a conjecture of Srivastava and Trevisan, the same separation would apply to sparsifiers of the clique.

In the following, instead of referring to the number of edges in the sparsifier as a function of the error parameter $\epsilon$ and of the number of vertices $n$, it will be cleaner to refer to the error parameter $\epsilon$ as a function of the average degree $d$ of the sparsifier (that is, we call $dn/2$ the number of edges of the sparsifier).

The construction of Batson, Spielman and Srivastava achieves error $(2\sqrt 2)/\sqrt d$ with a sparsifier of average degree $d$, for general graphs. Batson, Spielman and Srivastava also show that every sparsifier of the clique of average degree $d$ has error at least $1/\sqrt d$.  Srivastava and Trevisan \cite{ST18} prove that every sparsifier of the clique of average degree $d$ and girth $\omega_n (1)$ (that is, with girth that grows with the number of vertices) that spectrally sparsifies the clique has error at least $(2-o_{n,d}(1))/\sqrt d$. Here, $o_{n,d}(1)$ denotes an error term such that there exists an expression depending only on $d$ after taking an $n \rightarrow \infty$ limit on the error term; a subsequent $d \rightarrow \infty$ limit then sends this expression to 0. For example, $1/\sqrt{d} + d/n$ is an expression that is $o_{n,d}(1)$. Furthermore, an appropriately scaled $d$-regular Ramanujan graph is a spectral sparsifier of the clique with error $(2+o_{n,d}(1))/\sqrt d$, so we will refer to $2/\sqrt d$ as the {\em Ramanujan bound} for sparsification. Srivastava and Trevisan conjecture that the Ramanujan bound is best possible for all graphs that sparsify the clique.

\begin{conjecture}[Srivastava and Trevisan] \label{conj.st} Every family of weighted graphs of average degree $d$
that are $\epsilon$ spectral sparsifiers of the clique satisfy $\epsilon > (2-o_d(1))/\sqrt d$.
\end{conjecture}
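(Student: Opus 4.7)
The plan is to extend the Alon-Boppana style trace argument of Srivastava and Trevisan \cite{ST18} to arbitrary weighted graphs, removing the girth (and pseudo-girth) hypothesis entirely. Let $H$ be a weighted graph on $n$ vertices with $dn/2$ edges that is an $\epsilon$ spectral sparsifier of $K_n$, and write $L_H = D_H - A_H$. The sparsification condition forces the $n-1$ nonzero eigenvalues of $L_H$ into $[(1-\epsilon)n, (1+\epsilon)n]$. Passing to the normalized adjacency operator $\tilde A_H := D_H^{-1/2} A_H D_H^{-1/2}$ and using $\tilde L_H = I - \tilde A_H$, the spectrum of $\tilde A_H$ outside of its trivial eigenvector $D_H^{1/2}\bone$ is contained in an interval of radius $\epsilon + o_n(1)$ about $0$, so it would suffice to establish the matching lower bound $\|\tilde A_H|_{\bone^\perp}\| \geq (2 - o_d(1))/\sqrt{d}$.

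To lower-bound this spectral radius I would use $\trace(\tilde A_H^{2k}) \geq \|\tilde A_H|_{\bone^\perp}\|^{2k}$ (after subtracting the contribution of the trivial eigenvalue) for a slowly growing $k = k(n,d)$. The trace expands as a weighted sum of closed walks of length $2k$ in $H$, and the heart of the matter is a combinatorial lower bound on this sum depending only on $d$. Following the classical Alon-Boppana strategy, I would replace ordinary walks by non-backtracking walks (related to $\tilde A_H$ through a weighted Ihara-Bass type identity) and appeal to a volume-growth argument in the universal cover of $H$ around a typical vertex: averaged over a uniform random starting vertex, the total weight of length-$k$ non-backtracking closed walks should grow like $d^{k/2}$ up to multiplicative error vanishing in the $n \to \infty$ then $d \to \infty$ limits.

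The main obstacle, and the reason this remains a conjecture rather than a theorem, is precisely the removal of the pseudo-girth hypothesis used in the present paper. If a positive fraction of vertices lie on short cycles, non-backtracking walks close up prematurely, polluting the volume-growth bound and leaving the resulting trace estimate too weak to force $2/\sqrt d$. One natural angle of attack is structural: prove that every $\epsilon$ spectral sparsifier of the clique must itself satisfy a high-pseudo-girth condition, perhaps by showing that a vertex carrying too many short cycles produces a local distortion of the clique's flat spectrum that no weighting elsewhere in the graph can absorb. Such a claim would reduce the conjecture to the regime already handled here. Absent it, a genuinely new tool would be needed -- for instance, carefully chosen localized test functions on weighted balls, or a dual/optimization argument over the space of edge weightings -- that is simultaneously robust to arbitrary short-cycle structure and to adversarial weighting of the edges.
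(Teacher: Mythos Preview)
The statement you are attempting to prove is labeled a \emph{Conjecture} in the paper, and the paper does not prove it. There is therefore no ``paper's own proof'' to compare your proposal against. The paper explicitly treats Conjecture~\ref{conj.st} as open: it is used only to give a \emph{conditional} separation between cut and spectral sparsification of the clique, and the paper then obtains an \emph{unconditional} separation by proving a weaker statement (Theorem~\ref{th.lowerbound}) that requires precisely the pseudo-girth hypothesis you are trying to remove.

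Your write-up is candid about this: you correctly identify that the pseudo-girth assumption is the load-bearing hypothesis in the known argument, and that its removal is the entire content of the conjecture. So what you have written is a research plan, not a proof, and you say so. That is fine as far as it goes, but a couple of points deserve flagging. First, the passage from the spectral sparsification condition on $L_H$ to a small spectral radius for the normalized adjacency $\tilde A_H$ already uses degree control (the paper's Section~\ref{sec.def.matrices} assumptions that weighted degrees lie in $1 \pm O(1/\sqrt d)$), which is itself derived from the sparsifier hypothesis; you should make that step explicit rather than assert it. Second, the trace/Alon--Boppana route you sketch runs into the well-known difficulty that for \emph{weighted, irregular} graphs there is no clean Ihara--Bass identity, and non-backtracking walk counts no longer translate directly into spectral information; ``a weighted Ihara--Bass type identity'' is doing a lot of unearned work in your outline. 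Finally, your suggested structural reduction --- that any spectral sparsifier of the clique must have high pseudo-girth --- is false as stated (take a Ramanujan graph and add a single short cycle with tiny weight), so any version of that idea would have to be formulated much more carefully, e.g.\ in terms of weighted cycle mass rather than combinatorial girth.
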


\subsection{Our Results}

Our first result is that it is possible to do better than the Ramanujan bound for cut sparsification of the clique.

In the following, we use $\gReg{n}{d}$ to denote the distribution over random $d$-regular multigraphs on $n$ vertices created by taking the disjoint union of $d$ random perfect matchings. We will always assume that $n$ is even.

\begin{restatable}[Main]{theorem}{main}
\label{th.main}
With $1-o_n(1)$ probability, a random regular graph drawn from $\gReg{n}{d}$, in which all edges are weighted $(n-1)/d$, is a $\left(2\sqrt{\frac 2\pi} + o_{n,d} (1)\right)/\sqrt d$ cut sparsifier of the clique, where $2\sqrt{\frac 2\pi}= 1.595\dots$
\end{restatable}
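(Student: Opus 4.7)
The plan is to reduce the sparsification condition to a uniform deviation bound on the number of edges $X_S$ of the random $d$-regular graph $G$ crossing a cut $(S,\bar S)$, and then control that deviation separately for small cuts and for linear-size cuts, using martingales for the former and Jagannath--Sen-style statistical-physics arguments for the latter. Concretely, since a uniform random perfect matching crosses a cut of size $s$ in $s(n-s)/(n-1)$ edges in expectation, we have $\E[X_S] = ds(n-s)/(n-1)$; writing $\cut_H(S) = \frac{n-1}{d} X_S$ and $\cut_{K_n}(S) = s(n-s)$, the sparsification error for $S$ equals $|X_S - \E[X_S]| / (ds(n-s)/(n-1))$, and Theorem~\ref{th.main} is equivalent to
$$
\max_{S \subseteq V} \frac{|X_S - \E[X_S]|}{\sqrt d \cdot s(n-s)/(n-1)} \;\le\; 2\sqrt{2/\pi} + o_{n,d}(1)
$$
with probability $1-o_n(1)$. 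By the symmetry $S \leftrightarrow \bar S$ we may assume $s \le n/2$. Fix a slowly growing cutoff $K=K(n)$ (say $K = (\log n)^{10}$) and split the cuts into \emph{small} ($s \le K$) and \emph{linear-size} ($K < s \le n/2$).

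For the linear-size regime I would invoke the framework of Jagannath and Sen~\cite{jagannath2017unbalanced}, which is tailored to unbalanced min/max-partition problems on random (regular/configuration-model) graphs. Their interpolation and second-moment machinery produces, slice-by-slice in the cut fraction $\alpha = s/n$, a matching upper and lower bound on $\sup_{|S|=\alpha n} (X_S - \E[X_S])$ of order $\sqrt d \cdot n \cdot \Phi(\alpha)$; after the $n\to\infty$ limit at fixed $d$ and then $d\to\infty$, the function $\Phi(\alpha)$ collapses to the Gaussian extremum $\alpha(1-\alpha) \cdot \E|Z| = \alpha(1-\alpha)\sqrt{2/\pi}$ from the CLT (across $d$) applied to the per-matching fluctuations, whose covariance structure across cut sizes is essentially flat. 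The factor $2$ comes from combining the symmetric $\pm$ tails. Taking the supremum over $\alpha \in (K/n,\, 1/2]$ yields precisely $2\sqrt{2/\pi} + o_{n,d}(1)$, uniformly in $\alpha$. The main work here is verifying that Jagannath--Sen's hypotheses apply to $\gReg{n}{d}$ (their setup or its natural adaptation to the permutation model), that the asymptotic can be made uniform in $\alpha$ down to $\alpha = K/n$, and that the constant that emerges from their formula is indeed $2\sqrt{2/\pi}$ rather than a strictly larger spin-glass-style constant; this reduction of the unbalanced optimum to a Gaussian one is the step that most needs justification.

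For the small-cut regime I would exhibit $X_S$ as a Doob martingale over the $d$ matchings and apply sharp concentration. Each matching contributes an independent hypergeometric-type count with mean $s(n-s)/(n-1) \le s$, supported on $[0,s]$ with variance $O(s^2/n)$. Bernstein's inequality then gives
$$
\pr\bigl[\,|X_S - \E[X_S]| \ge t\,\bigr] \;\le\; 2\exp\!\left( - \Omega\!\left(\tfrac{t^2}{ds^2/n + st}\right)\right).
$$
Setting $t = \sqrt d\cdot s(n-s)/n \cdot (2\sqrt{2/\pi} - o(1))$ makes the exponent at least $\Omega(s \log(n/s))$, which dominates $\log \binom{n}{s} = O(s\log(n/s))$ after taking the constants in the $o(1)$ slack; a union bound over $s \le K$ and over $\binom{n}{s}$ choices of $S$ then controls all small cuts simultaneously. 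The small-cut constant is in fact strictly smaller than $2\sqrt{2/\pi}$, so this regime is not tight and leaves room for the $K(n)$ cutoff to be chosen freely.

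The main obstacle, as noted, is the linear-size step: packaging Jagannath--Sen's analysis so that (i) it applies to $\gReg{n}{d}$ uniformly in the cut fraction $\alpha$, (ii) the $n\to\infty$ then $d\to\infty$ limits can be taken in the right order to replace the complicated $d$-regular combinatorics by a Gaussian extremal problem, and (iii) the resulting Gaussian problem evaluates to exactly $2\sqrt{2/\pi}$. Gluing the two regimes together at $s = K$ requires the linear-size bound to remain valid down to $s/n \to 0$ (albeit slowly), which places quantitative demands on the uniformity of the Jagannath--Sen asymptotics; handling this boundary overlap is the technically most delicate point of the proof.
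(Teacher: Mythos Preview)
Your high-level decomposition into small and linear-size cuts, with a martingale argument for the former and Jagannath--Sen for the latter, is exactly the structure of the paper's proof. However, both regimes have a concrete gap in your version, and they compound.

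\textbf{Small cuts.} Your martingale reveals one matching at a time, so each increment can be as large as $s$ (all of $S$ could in principle be matched internally). With variance $O(ds^2/n)$ and increment bound $M=\Theta(s)$, Bernstein (or Bennett) at $t=c\sqrt d\, s$ gives an exponent of order $\sqrt d \cdot \log n$ at best, independent of $s$; your claimed $\Omega(s\log(n/s))$ does not follow, and the union bound over $\binom{n}{s}$ sets fails already for $s$ mildly growing in $n$. The paper repairs this by (i) tracking edges \emph{inside} $S$ rather than crossing edges (since $\cut_H(S)=kd-2e_H(S)$ in a regular graph, the deviations coincide up to a factor and $e_H(S)$ has only $\binom{k}{2}$ terms), and (ii) using a much finer Doob martingale with $d(k-1)$ steps, revealing one matching--vertex pair at a time, so that $|Y_\ell|\le 1$. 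A Freedman/Bennett-type inequality on this martingale then gives exponent $\Omega(dk\log(n/dk))$ for small $k$, enough to beat $\binom{n}{k}$ all the way up to $k=\alpha_0 n$ for a small absolute constant $\alpha_0$ (the paper uses $\alpha_0=1/100$).

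\textbf{Cutoff and linear-size cuts.} Because your small-cut bound is too weak, you try to push the Jagannath--Sen analysis down to $\alpha = K/n\to 0$. This does not work: their result (and the paper's Theorem~\ref{thm:large-cuts}) is stated for $\alpha$ bounded away from $0$, with $o_d(\sqrt d)$ error terms that are not uniform as $\alpha\to 0$. The paper instead places the cutoff at the constant fraction $\alpha_0$ and lets the martingale analysis carry everything below it. Also, the constant $2\sqrt{2/\pi}$ does not drop out of a CLT heuristic: Jagannath--Sen bound the problem by a Parisi-type variational formula $\inf_{\nu,\lambda}\cP^1_{T(\alpha)}(\nu,\lambda)$, and one must actually evaluate a feasible choice of $(\nu,\lambda)$. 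The paper takes the replica-symmetric ansatz $\nu=c\delta_T$ (so $m\equiv 0$), solves the resulting heat equation explicitly, optimizes over $\lambda$, and then proves analytically that the resulting bound, divided by $4\alpha(1-\alpha)$, is maximized at $\alpha=1/2$ with value $2\sqrt{2/\pi}$. That computation is the substance of the linear-size step.
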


Together with Conjecture \ref{conj.st}, the above theorem (proved in Section~\ref{sec:separation}) gives a conditional separation between the error-density tradeoffs of cut sparsification versus spectral sparsification of the clique.

In order to achieve an unconditional separation, we prove generalizations of the result of Srivastava and Trevisan. Our first lower bound, which we think is of independent interest, applies to symmetric double stochastic matrices.

\begin{restatable}[Alon-Boppana for Symmetric Doubly Stochastic Matrices]{theorem}{doublystoc}
\label{th.doublystoc}
If $M$ is a symmetric $n\times n$ doubly stochastic matrix with $dn$ non-zero entries, then $M$ has a non-trivial eigenvalue of magnitude at least $\big( 2 - o_{n,d}(1) \big)/\sqrt d$.
\end{restatable}

The error term $o_{n,d}(1)$ is of the form $O\big( (\ln d)/d^{1/4} \big) + O \Big( d^{d^{1/4}}/n \Big) + O \big( 1/\sqrt{n} \big)$.
A doubly stochastic matrix is a matrix with non-negative entries such that every row and every column sums to one. A symmetric doubly stochastic matrix always has a ``trivial'' eigenvalue equal to one, corresponding to the eigenvector $(1,\ldots,1)$. The above theorem states the existence of at least one other eigenvalue whose absolute value is at least $2/\sqrt d - o_{n,d}(1)$. The Theorem can be restated as providing a spectral sparsification lower bound for weighted regular graphs, those such that all vertices have the same weighted degree.

We are also able to prove a $(2-o_{n,d}(1))/\sqrt d$ lower bound on the spectral sparsification error for certain classes graphs that have irregular weighted degree.

\begin{restatable}[Spectral Sparsification Lower Bound]{theorem}{speclower}
\label{th.speclower}
If $H$ is a graph with $n$ vertices and $dn/2$ edges, and such that at most $\sqrt n$ vertices of $H$ participates in odd cycles of length $\leq d^{1/4}$, and if $H$ is an $\epsilon$-spectral sparsifier of the clique, then $\epsilon > (2-o_{n,d}(1))/\sqrt d$
\end{restatable}

The error term is of the form $O\big( (\ln d)/d^{1/4} \big) + \exp \big( O(d^{1/4}) \big) /\sqrt n$. Our result has a more general form in which at most $B$ vertices participate in odd cycles of length up to $g$, and the error term is of the form $O \big( g/\sqrt d + 1/g + \exp(O(g)) \big)\cdot B/n$.
The underlying characteristic of this class of graphs that make them compatible with our techniques is that the odd powers of their adjacency matrices have trace bounded by $1/n^{\gamma}$ times the succeeding even power traces, for some $\gamma$.

In comparison, Srivastava and Trevisan prove such a lower bound for graphs of large girth. However, the argument is problematic to adapt even to graphs with one small cycle. Here we only need to rule out small odd cycles, and even the presence of some odd small cycles can be tolerated, provided that they do not involve too many vertices.

Using the fact that a random $\log n$-regular graph is, with high probability, an $O(1/\sqrt{\log n})$ spectral sparsifier of the clique,  that a random $\log n$-regular graph contains a random $d$-regular graph as a subgraph, and that, with high probability, all subgraphs of random $\log n$-regular graph satisfy the ``large odd pseudo-girth'' assumption of Theorem \ref{th.speclower}, we have our separation result as follows.

\begin{restatable}{theorem}{separation}
\label{th.separation}
Let $G$ be a random regular graph drawn from $\gReg{n}{\log n}$. Then with probability $1-o_n(1)$ over the choice of $G$ the following happens for every constant $d$:
\begin{enumerate}
    \item There is a weighted subgraph $H$ of $G$ with $dn/2$ edges such that $H$ is an $\epsilon$ cut sparsifier of $G$ with $\epsilon \leq (1.595\ldots + o_{n,d}(1))/\sqrt d$;
    \item For every weighted subgraph $H$ of $G$ with $dn/2$ edges, if $H$ is an $\epsilon$ spectral sparsifier of $G$ then $\epsilon \geq  (2- o_{n,d}(1))/\sqrt d$.
\end{enumerate}
\end{restatable}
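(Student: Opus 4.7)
The plan is to derive Theorem~\ref{th.separation} as a corollary of Theorem~\ref{th.main} combined with the spectral lower bound theorem from Section~\ref{sec:lower-bound}, exploiting the natural coupling between $\gReg{n}{\log n}$ and $\gReg{n}{d}$. Fix a constant $d$ and sample $G \sim \gReg{n}{\log n}$ by writing $G = M_1 \cup \cdots \cup M_{\log n}$, where each $M_i$ is an independent uniform random perfect matching. The subgraph $H' := M_1 \cup \cdots \cup M_d$ of $G$ then has marginal distribution exactly $\gReg{n}{d}$, which is what will let us invoke Theorem~\ref{th.main}.

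For part (1), I would weight each edge of $H'$ by $(n-1)/d$ and each edge of $G$ by $(n-1)/\log n$. Theorem~\ref{th.main} yields that with probability $1-o_n(1)$ the reweighted $H'$ is a $(2\sqrt{2/\pi} + o_{n,d}(1))/\sqrt{d}$ cut sparsifier of $K_n$. Separately, a standard concentration bound for random regular graphs implies that the reweighted $G$ is an $O(1/\sqrt{\log n})$-spectral, hence cut, sparsifier of $K_n$ with probability $1-o_n(1)$. Composing these two multiplicative cut approximations via
\[
\frac{\cut_{H'}(S)}{\cut_G(S)} = \frac{\cut_{H'}(S)/\cut_{K_n}(S)}{\cut_G(S)/\cut_{K_n}(S)},
\]
one gets that the error of $H'$ as a cut sparsifier of $G$ is at most $(2\sqrt{2/\pi} + o_{n,d}(1))/\sqrt{d} + O(1/\sqrt{\log n})$. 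Since $d$ is a fixed constant while $n \to \infty$, we have $\sqrt{d}/\sqrt{\log n} = o_n(1)$, so the second term is absorbed into $o_{n,d}(1)/\sqrt{d}$, and the bound becomes the claimed $(1.595\ldots + o_{n,d}(1))/\sqrt{d}$.

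For part (2), I would apply the theorem of Section~\ref{sec:lower-bound} directly: with probability $1-o_n(1)$ over $G \sim \gReg{n}{\log n}$, every weighted edge-subgraph of $G$ of average degree $d$ that is an $\epsilon$-spectral sparsifier of $G$ satisfies $\epsilon \ge (2 - o_{n,d}(1))/\sqrt{d}$. A union bound over the two $o_n(1)$-probability failure events (one for part (1), one for part (2)) preserves the $1-o_n(1)$ success probability for each fixed constant $d$.

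The only delicate point is the error-composition bookkeeping in part (1) and verifying that the clique-sparsification error $O(1/\sqrt{\log n})$ of $G$ is indeed absorbed into $o_{n,d}(1)/\sqrt{d}$ under the prescribed quantifier order (first $n \to \infty$, then $d \to \infty$); this is straightforward because $d$ is held constant as $n$ grows. Everything else is a direct application of the two main theorems, so no new technical machinery is required at this stage.
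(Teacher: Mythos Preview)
Your proposal is correct and follows the same route as the paper: couple $\gReg{n}{\log n}$ with $\gReg{n}{d}$ via the first $d$ matchings, compose through the clique using that $G$ itself is an $O(1/\sqrt{\log n})$ spectral sparsifier of $K_n$ (the paper cites \cite{bordenave2019new} for this) to get part~(1), and invoke the spectral lower bound for part~(2). One small packaging difference worth noting: the paper does not take the $G$-sparsification lower bound as a black box but derives it inside the proof of Theorem~\ref{th.separation} by applying Theorem~\ref{th.lowerbound} (which is stated for sparsifiers of the \emph{clique}) after checking the pseudo-girth hypotheses of $G$ via Theorem~\ref{th.pseudogirth}, and then composing once more through $G$'s own clique-sparsification to transfer the lower bound from $K_n$ to $G$---the same composition trick you already spelled out for part~(1).
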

The proof of this is found in Section~\ref{sec:separation}.

\subsection{Techniques in Cut Sparsification Result}

Our main result, Theorem \ref{th.main}, is established by analyzing cuts of linear size using rigorous techniques that have been derived from statistical physics \cite{jagannath2017unbalanced} and by analyzing sublinear size cuts using martingale concentration bounds.

For a fixed set $S$ of $k = \alpha n \leq n/2$ vertices, the average number of edges that leave $S$ in a random $d$-regular graph is $\frac d{n-1} \cdot k \cdot (n-k)$ and we are interested in showing that for every such set the deviation from the expectation is at most $\epsilon \frac d{n-1} \cdot k \cdot (n-k)$, for $\epsilon \le 1.595\dots/\sqrt d$.

\subsubsection{Bound for small sets}

One approach is to set up a martingale for each partition of vertices into two sets and then apply an Azuma-like inequality on the the number of edges cut. In this approach, it is better to study the deviation from the expectation of the number of  edges that are entirely contained in $S$. This is because, in a regular graph, the deviation from the expectation of the number of edges crossing the cut $(S,V-S)$ is entirely determined by the deviation from the expectation of the number of edges entirely contained in $S$, and the latter can be written as a sum of fewer random variables (that is, ${k \choose 2}$ versus $k\cdot (n-k)$), especially for small $k$. After setting up the appropriate Doob martingale, we can prove that the probability that the cut $(S,V-S)$ deviates from the expectation by more than $1.595\dots/\sqrt d$ times the expectation is at most $e^{-\Omega(n)}$ if $k\geq \Omega(n/\sqrt d)$ and at most $e^{-\Omega ( d k \log (n/dk) )}$ for $k \leq O(n / \sqrt d)$. In particular, there is an $\alpha_0 > 0$ such that for all $k\leq \alpha_0 n$ the probability of having a large deviation is much smaller than $1/{n\choose k}$, in a way that enables a union bound. These calculations are carried out in Section \ref{sec.sublinear}.

Unfortunately, such ``first moment'' calculations cannot be pushed all the way to $\alpha_0 = 1/2$. This is because our calculations with deviation bounds and union bounds are equivalent to estimating the average number of cuts that have a relative error (the ratio of the deviation from expectation to the expectation of the number of edges cut) bigger than $1.595\dots/\sqrt d$, with the goal of showing that such average number is much smaller than one. Unfortunately, the average number of balanced cuts that have a relative error bigger than $2/\sqrt d$ is bigger than one, so we cannot hope to get a separation from the spectral bounds with first moment calculations. Instead, we'll see that this distribution is extremely heavy-tailed: although the average number of such cuts is larger than 1, with very high probability there are 0 such cuts.

\subsubsection{Bound for large sets}
\label{sec:techniques-large-sets}

We then turn to techniques derived from statistical physics in order to analyze large cuts. To illustrate this approach, consider the classical problem of bounding the typical value of the max cut optimum in   Erd\H{o}s-R\'{e}nyi random graphs $\gErd{n}{1/2}$, up to $o(n^{1.5})$ error terms. This is equivalent to the problem of understanding the typical value of
\begin{equation}\label{eq.rad}
    \max_{\sigma \in \{ \pm 1 \}^n} \sigma^T M \sigma
\end{equation}
where $M$ is a random symmetric matrix with independent uniform $\pm 1$ entries off the diagonal and zero diagonal.

A first step is to prove, by an interpolation argument, that, up to lower order $o(n^{1.5})$ additive error, the optimum of \eqref{eq.rad} is the same as the optimum of
\begin{equation} \label{eq.sk}
  \max_{\sigma \in \{ \pm 1 \}^n} \sigma^T W \sigma
\end{equation}
where $W$ is a Wigner matrix, a random symmetric matrix with zero diagonal and independent and standard normally distributed off-diagonal entries.

Finding the optimum of \eqref{eq.sk} up to an additive error $o(n^{1.5})$ is a standard problem in statistical physics: it is the problem of determining the zero-temperature free energy of a spin-glass model called the  Sherrington-Kirkpatrick model, or SK model for short.

Parisi \cite{Parisi_1980} defined a family of differential equations, and presented a heuristic argument according to which the infimum of the solutions of those differential equations, would give the free energy of the SK model. That infimum  is now called the Parisi formula. Parisi's approach was extremely influential and widely generalized. Guerra \cite{guerra2003broken} rigorously proved that a solution to each of the differential equations gives an upper bound on the free energy, and, in a monumental work, Talagrand \cite{talagrand2006parisi} rigorously proved the stronger claim that the Parisi formula is equal to the free energy of the SK model. Talagrand's work was further generalized by Panchenko \cite{Pan14}.

Dembo, Montanari and Sen \cite{dembo2017extremal} proved an interpolation result showing that the solution to \eqref{eq.sk} can also be used to bound the max cut in random sparse graphs of constant average degree $d$, including both random $d$-regular graphs $\gReg{n}{d}$ and Erd\H{o}s-R\'{e}nyi random graphs $\gErd{n}{d/n}$. Jagannath and Sen \cite{jagannath2017unbalanced} proved interpolation theorems for the problem of determining the max cut out of sets of size $\alpha n$, for fixed constant $\alpha$, in $\gErd{n}{d/n}$ and in $\gReg{n}{d}$ graph, and they proved that the two models have different asymptotic bounds when $0 < \alpha < 1/2$.

In particular, to find the maximum (and the minimum) over all sets $S$ of cardinality $\alpha n$ of $\cut_G(S)$ in a random $d$-regular graph, Jagannath and Sen prove that one has to study
\begin{equation} \label{eq.js}
    \max_{\sigma \in S_n(\alpha)} \sigma^T \Pi^T W \Pi \sigma
\end{equation}
where $S_n(\alpha)$ is the subset of vectors $\sigma \in \{ \pm 1\}^n$ that contain exactly $\alpha n$ ones, and $\Pi = I - \frac 1n J$ is the matrix that projects on the space orthogonal to $(1,1,\ldots,1)$. The restriction to $S_n(\alpha)$ models the restriction to cuts $(S,V-S)$ where $|S| = \alpha n$, and the projection defines a matrix $\Pi^T W \Pi$ such that all rows and all columns sum to zero, in analogy to the fact that, in a regular graph, all rows and all columns of the adjacency matrix have the same sum.

Jagganath and Sen also define a Parisi-type family of differential equations and they rigorously prove that a solution to any of those equations provides an upper bound to \eqref{eq.js}. Since their goal is to compare cuts in regular graphs to cuts in Erd\H{o}s-R\'{e}nyi graphs, rather than bounding cut sizes in random regular graphs, they do not provide solutions to their Parisi-type equations. In Section \ref{sec.linear} we compute the {\em replica-symmetric} solution and get an explicit bound.

From the bound, we get that, for every fixed $\alpha$, with high probability,
sets of size $\alpha n$ in a random $d$-regular graph satisfy the definition of
$\epsilon$ cut sparsification of the clique with
\[ \epsilon \leq \left( 2 \ \sqrt{\frac {2} {\pi}}+ o_{n,d}(1) \right) \cdot \frac{1}{\sqrt{d}} = \frac{1.595\ldots + o_{n,d} (1) }{\sqrt d}  \]

A tight upper bound on $\epsilon$, which would come from an exact solution of \eqref{eq.js}, is likely to be $2/\sqrt{d}$ times the value of the Parisi formula evaluated at zero temperature and no external field (approximately $1.5264/\sqrt{d}$ \cite{PhysRevE.65.046137}), although we have not attempted to prove this.
This is motivated by Jagganath and Sen's generalization reducing to the original Parisi formula at zero temperature and no external field, parameters which correspond to maximum bipartitions.

\subsection{Techniques in Lower Bounds for Spectral Sparsification}

As discussed above, we established that a random $d$-regular graph is an $\epsilon$ cut sparsifier with $\epsilon \le (1.595\dots + o(1))/\sqrt d$. Under  Conjecture \ref{conj.st}, this gives a conditional separation between the error-vs-density tradeoff for cut sparsification of the clique compared to spectral sparsification of the clique.

Although we are not able to prove Conjecture \ref{conj.st}, we are able to make some new progress toward it.

The Alon-Boppana theorem states that if $A_H$ is the adjacency matrix of a $d$-regular graph $H$ on $n$ vertices, then $A_H$ has a non-trivial eigenvalue of magnitude at least $2\sqrt {d-1} - o_n(1)$. (The statement refers to a ``non-trivial'' eigenvalue to distinguish it from the ``trivial'' eigenvalue of value $d$ which is always present in the adjacency matrix of a $d$-regular graph.) If $H$ is a $d$-regular graph in which we weigh every edge by $(n-1)/d$, so that nodes have the same weighted degree as a clique, then the Alon-Boppana theorem tells us that, if we regard $H$ as a spectral sparsifier of the clique then the sparsification error is at least $2\sqrt{d-1}/d - o_n (1) = 2/\sqrt{d} - o_{n,d}(1)$.

This means that the Alon-Boppana theorem provides a sparsification lower bound for sparsifying the clique with graphs that are regular and in which all edges have the same weight. Equivalently, clique sparsification lower bounds can be seen as generalizations of the Alon-Boppana theorem to graphs that are not regular and whose edges are weighted in an arbitrary way.

The Alon-Boppana theorem has two known proofs, both described in the survey \cite{HLW06}. The original proof of Alon and Boppana constructs an explicit test vector orthogonal to $(1,\ldots,1)$ and proceeds by evaluating the quadratic form of such test vector. This proof is extended to the general sparsification setting by Srivastava and Trevisan \cite{ST18}, but their approach requires the graph to have large girth, and fails even if the graph has few small cycles. The other proof of the Alon-Boppana theorem, due to Friedman, proceeds by bounding the trace of a high power of the adjacency matrix of the graph.

This is the proof that we adapt to weighted graphs in this paper, and that allows us to prove Theorem \ref{th.doublystoc}. Our trace bound extends to the adjacency matrices of graphs that are almost regular from the point of view of weighted degrees (which can be assumed without loss of generality for sparsifiers of the clique).

In order to bound the sparsification error, however, it is not enough to find a non-trivial eigenvalue of the adjacency matrix, but we need to find a non-trivial eigenvalue of the difference between the Laplacian matrix of the graph and the Laplacian matrix of the clique. Our first step is to go from a trace bound on the adjacency matrix to an explicit test vector (actually, a test density matrix) of the adjacency matrix, and then evaluate the quadratic form on the difference of the Laplacians. The result is the desired lower bound provided that we can bound the trace of $A_H^{2\ell -1}$, for all $\ell$ up to $d^{1/4}$. This term is zero if $H$ has large odd girth (a relaxation of the large girth condition needed for the proof of Srivastava and Trevisan), and it is small enough for our purposes if $o(n)$ vertices participate in short odd cycles. This is how we prove Theorem \ref{th.speclower}.

The latter ``odd pseudo-girth'' condition is satisfied by several families of random regular graphs and Erd\H{o}s-R\'{e}nyi random graphs. In particular, random $\Delta_n$-regular graphs, for any choice of the degree $\Delta_n$ is of the order of $\log n$. For every fixed $d$, a random $\log n$-regular graph $G$ contains a random $d$-regular graph $H$, and we
also have that $G$ is a $O(1/\sqrt {\log n})$ spectral (and cut) sparsifier of the clique.
We can conclude that, with high probability, $G$ contains a weighted subgraph $H$ (a scaled random $d$-regular subgraph) with $dn/2$ edges that is a $1.595\ldots$ cut sparsifier of the clique, and hence of $G$ (up to negligible difference), but for every weighted subgraph $H$ of $G$ that is an $\epsilon$ spectral sparsifier of the clique (and hence, up to negligible difference) of $G$, we have that $\epsilon > \big( 2-o_{n,d}(1) \big) / \sqrt d$. This established the separation result of Theorem \ref{th.separation}.

\subsection{Additional Remarks and Open Problems}

The notions of cut sparsifier and of spectral sparsifier of the clique are interesting generalizations of the notion of expander graph, allowing for graphs that are possibly weighted and irregular. As with expander graphs, it seems worthwhile to study sparsifiers as fundamental combinatorial objects, beyond their applications to the design of efficient graph algorithms.

A proof of Conjecture \ref{conj.st} would give us a significant generalization of the Alon-Boppana theorem, and it would be a very interesting result.

It is plausible that the clique is the hardest graph to sparsify, both for cut sparsification and for spectral sparsification. This would mean that the error in the construction of Batson, Spielman and Srivastava can be improved from $2\sqrt 2/ \sqrt d$ to $2/\sqrt d$, up to lower order terms, and that there is a construction (or perhaps a non-constructive existence proof) of cut sparsifiers of general graphs with error smaller than $1.6/\sqrt d$, up to lower error terms. At present, unfortunately, there is no promising approach to construct (or non-constructively prove the existence) of cut sparsifiers of general graphs with error below $2/\sqrt d$, or even below $2\sqrt 2/ \sqrt d$. The techniques presented in this paper are not immediately generalizable to broader families of graphs as they are tailored to exploit symmetries of the clique. Achieving the aforementioned objectives will certainly require new innovations.

\section{Linear-sized cuts}
\label{sec.linear}

We show that random regular graphs are good cut sparsifiers of the clique over cuts with vertex set $S$ of linear size, so that $|S| = \alpha n$ for constant $\alpha$.

\begin{restatable}[Linear Set Regime for Cut Sparsification]{theorem}{largecuts}
\label{thm:large-cuts}
For every $d$ and $\beta \in (0, 1/2)$, with probability $1 - o_n(1)$ over random regular multigraphs $H \sim \gReg{n}{d}$, it is true for every subset $S$ of vertices satisfying $|S| = \alpha n$ with $\alpha \in [\beta, 1/2]$ that
\[\left|\frac{\cut_H(S)}{d\alpha(1-\alpha)n} - 1 \right| \le \frac{1}{\sqrt{d}}\left(2\sqrt{\frac{2}{\pi}} + o_{n,d}(1)\right),\]
where $\cS_{\alpha} = \{S \subseteq V \mid |S| = \alpha n\}$ and $2\sqrt{\frac{2}{\pi}}=1.595\ldots$.
\end{restatable}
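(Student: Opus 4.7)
The plan is to reduce the problem to an optimization over signed vectors and then invoke the Jagannath–Sen interpolation machinery discussed in the techniques section. Fix $\alpha \in [\beta, 1/2]$ and let $\sigma \in \{\pm 1\}^n$ be the $\pm 1$-indicator of a set $S$ with $|S| = \alpha n$, so $\sigma$ ranges over $S_n(\alpha)$. Writing $A_H$ for the adjacency matrix of $H \sim \gReg{n}{d}$ with edges weighted $(n-1)/d$, one checks that
\[
\cut_H(S) - (n-1)\alpha(1-\alpha)n \;=\; -\tfrac{1}{2}\,\sigma^T (A_H - \E A_H)\,\sigma,
\]
and because $H$ is $d$-regular, the centered matrix $A_H - \E A_H$ annihilates the all-ones vector, so we can insert the projection $\Pi = I - \tfrac{1}{n}J$ freely and study $\max_\sigma \sigma^T \Pi^T (A_H - \E A_H) \Pi \sigma$ and the analogous minimum. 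The target error $\bigl(2\sqrt{2/\pi} + o_{n,d}(1)\bigr)/\sqrt d$ on the relative deviation corresponds, up to constants and $\alpha(1-\alpha)$ factors, to an $O(n^{3/2})$ absolute bound on this quadratic form.

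Next I would invoke the Jagannath–Sen interpolation theorem, which asserts that up to $o(n^{3/2})$ additive error the quantity $\max_{\sigma \in S_n(\alpha)} \sigma^T \Pi^T (A_H - \E A_H) \Pi \sigma$ has the same asymptotics as $\max_{\sigma \in S_n(\alpha)} \sigma^T \Pi^T W \Pi \sigma$ for a suitably scaled Wigner matrix $W$, with variance chosen so that the second moments of entries match (up to the factor $d/(n-1)$ of our edge weights). This turns a combinatorial question about random regular graphs into the statistical-physics problem of computing the constrained ground state energy of an SK-type Hamiltonian with an overall magnetization constraint $\alpha$. Jagannath and Sen further provide a Parisi-type family of differential equations whose solutions give rigorous upper bounds on this ground state energy.

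Rather than solve the full Parisi variational problem, I would evaluate only the replica-symmetric ansatz at zero temperature, which amounts to computing a single Gaussian integral: the supremum over one order parameter of an expression of the form $\E[\max(h + z\sqrt q, 0)]$ subject to the magnetization constraint (here $z$ is standard normal and $h$ is a Lagrange multiplier enforcing $|S|=\alpha n$). A direct calculation yields that the replica-symmetric bound on the maximum ground state energy is $2\sqrt{2/\pi}\cdot\sqrt{d}\cdot \alpha(1-\alpha)n + o_{n,d}(n)$, independent of $\alpha$ once normalized by $\alpha(1-\alpha)$; this is the origin of the constant $2\sqrt{2/\pi}$ in the theorem. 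The same bound applies to the minimum by symmetry. Dividing by $d\alpha(1-\alpha)n$ gives the stated relative error.

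The last step is to upgrade the per-$\alpha$ statement to a uniform bound over $\alpha \in [\beta, 1/2]$. Since all error terms in the interpolation and in the replica-symmetric calculation are uniform on compact subsets of $(0,1/2]$, and since for each fixed $\alpha$ the bound on the cut deviation holds with probability $1 - e^{-\Omega(n)}$, a union bound over an $n^{-1}$-net of $\alpha$ values, combined with the Lipschitz dependence of $\cut_H(S)$ on $|S|$ (adding or removing a vertex changes the cut by at most the max weighted degree, which is $O(n)$ while the cut itself is $\Theta(n^2)$), suffices. The main obstacle will be verifying that the replica-symmetric ansatz yields the right asymptotic constant $2\sqrt{2/\pi}$ through the constrained Parisi functional of Jagannath–Sen, and that the error terms from interpolation are genuinely $o(n^{3/2})$ uniformly in $\alpha \in [\beta, 1/2]$; once these analytic facts are in hand, the rest is an exponential tail estimate plus a net argument.
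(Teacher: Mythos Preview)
Your proposal is correct and follows essentially the same route as the paper: invoke the Jagannath--Sen interpolation to pass from random regular graphs to the constrained SK model, bound the resulting Parisi-type variational problem by its replica-symmetric value, and finish with concentration plus a union bound over the (already discrete) values of $\alpha$. Two small corrections: the replica-symmetric bound is not literally independent of $\alpha$ after normalizing by $\alpha(1-\alpha)$---it equals $\frac{2}{\sqrt{\alpha(1-\alpha)}}\phi\bigl(\sqrt{2}\,\erf^{-1}(2\alpha-1)\bigr)$ and one must check this is maximized at $\alpha=1/2$ where it attains $2\sqrt{2/\pi}$---and there is no need for a net or Lipschitz argument, since $\alpha n$ is an integer and a direct union bound over at most $n$ values suffices.
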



First we refer to a lemma showing that the maximum cut with relative cut volume $\alpha$ concentrates around its expectation, so that we reduce the problem to understanding the expected value of the maximum cut.
We also state its version for minimum cuts, derived by negating and using sign symmetries in the statement and proof of the lemma, in accordance with \cite[Remark 1]{jagannath2017unbalanced}.
\begin{lemma}[Lemma 2.1 of \cite{jagannath2017unbalanced}]
\label{lem:js17}
For every $d$ and $\alpha \in (0,1)$,
\[ \Pr_{H \sim \gReg{n}{d}}\left[\left| \max_{S \in \cS_{\alpha}}\frac{1}{n}\cut_H(S) - \E_{H' \sim \gReg{n}{d}}\left[\max_{S' \in \cS_{\alpha}} \frac{1}{n}\cut_{H'}(S')\right]\right| > \eps \right] \le 2e^{-n\eps^2/d},  \]
\[ \Pr_{H \sim \gReg{n}{d}}\left[\left| \min_{S \in \cS_{\alpha}}\frac{1}{n}\cut_H(S) - \E_{H' \sim \gReg{n}{d}}\left[\min_{S' \in \cS_{\alpha}} \frac{1}{n}\cut_{H'}(S')\right]\right| > \eps\right] \le 2e^{-n\eps^2/d}.  \]
\end{lemma}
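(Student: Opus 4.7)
The plan is to prove concentration of $f(H) := \frac{1}{n}\max_{S \in \cS_\alpha} \cut_H(S)$ (and analogously the minimum) around its expectation by applying the Azuma--Hoeffding inequality to an appropriate Doob martingale built on the sequential randomness that generates $H$.

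First I would use the generative definition of $\gReg{n}{d}$ as the union $H = M_1 \cup \cdots \cup M_d$ of $d$ independent uniformly random perfect matchings, and further encode each matching $M_i$ as a sequence of $n/2$ pairing decisions obtained by repeatedly pairing the smallest still-unmatched vertex with a uniformly random other unmatched vertex. This exhibits $f(H)$ as a function of $N := dn/2$ sequential random choices; let $X_k = \E[f(H) \mid \text{first } k \text{ choices}]$ denote the associated Doob martingale, so that $X_0 = \E[f(H)]$ and $X_N = f(H)$.

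The heart of the argument is the bounded-differences estimate $|X_k - X_{k-1}| = O(1/n)$ for every $k$. To obtain this, I would fix the first $k-1$ choices and explicitly couple any two realizations of the remaining decisions that differ only in the $k$-th choice. Suppose at step $k$ one must pair a distinguished vertex $v$ with one of two alternatives $a$ or $b$. Complete the matching identically in both worlds, except that wherever the first world later pairs $b$ with some partner $b'$, the coupled second world pairs $a$ with $b'$ instead. This produces two full realizations of $H$ differing on exactly two edges, so $\cut_H(S)$ differs by at most $2$ for every $S \in \cS_\alpha$, and hence so does $n \cdot f(H)$. Averaging over the downstream randomness yields $|X_k - X_{k-1}| \le 2/n$.

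Applying Azuma--Hoeffding to this martingale with $N = dn/2$ steps each of variation $O(1/n)$ gives $\Pr[|f - \E[f]| > \eps] \le 2\exp(-\Omega(n\eps^2/d))$, and careful tracking of the constants (as in \cite{jagannath2017unbalanced}) yields the precise form $2e^{-n\eps^2/d}$ stated in the lemma. The minimum version follows identically from the same argument applied to $-f(H)$, or equivalently via the sign symmetry remarked in \cite{jagannath2017unbalanced}. The only genuinely delicate step is the coupling: a naive ``change one decision and resample the rest'' strategy propagates uncontrolled downstream effects, so the explicit two-edge switch is essential to confine the discrepancy; once that Lipschitz constant is secured, the rest is a direct Azuma--Hoeffding application.
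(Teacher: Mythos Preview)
The paper does not prove this lemma at all; it is quoted verbatim as Lemma~2.1 of \cite{jagannath2017unbalanced} and used as a black box. Your Doob-martingale-plus-Azuma argument via the sequential edge-exposure of the $d$ matchings is the standard route for such concentration statements and is almost certainly what \cite{jagannath2017unbalanced} do as well, so there is no meaningful divergence to report.

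One small quantitative caveat: with $N=dn/2$ steps and increments bounded by $2/n$ (which your two-edge switching coupling correctly establishes), the straight Azuma--Hoeffding bound gives
\[
\Pr\big[|f-\E f|>\eps\big]\le 2\exp\!\left(-\frac{\eps^2}{2\sum_k (2/n)^2}\right)=2\exp\!\left(-\frac{n\eps^2}{4d}\right),
\]
which is off by a factor of $4$ in the exponent from the stated $2e^{-n\eps^2/d}$. Recovering the exact constant requires either a sharper one-sided/range version of the inequality or a slightly different accounting, as you implicitly defer to \cite{jagannath2017unbalanced}. This is harmless for every use of the lemma in the present paper, where only a bound of the form $2e^{-cn\eps^2/d}$ for some absolute $c>0$ is needed.
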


As discussed in Section~\ref{sec:techniques-large-sets}, we now invoke techniques of statistical mechanics developed in the study of spin glasses, specifically the SK model and its generalizations.

After the Parisi formula was proven to solve the SK model, Dembo, Montanari, and Sen \cite{dembo2017extremal} used interpolation techniques to show that the free energy of the SK model corresponds to the maximum or minimum bisection (equivalently, balanced cut) on random sparse graphs.
Sen \cite{sen2018optimization} generalized that interpolation to a family of combinational problems, including unbalanced cuts $\cut(S)$ where $|S|$ is a constant times $n$, as we study here, relating these problems to a generalization of the SK model.

The SK model has internal energy $\sigma^TW\sigma/\sqrt{n}$
for $W \in \R^{n \times n}$ a symmetric Wigner matrix with standard Gaussian entries on the off-diagonals\footnote{This definition corresponds to that used in \cite{jagannath2017unbalanced}, and is larger by a factor of $2$ than a convention used in some other places.} and zero on the diagonals, to be optimized over configurations $\sigma \in \{\pm 1\}^n$.
The generalization studies the optimization problem with the same matrix $W$ and the same configuration space $\{\pm 1\}^n$ but with internal energy
\[H_W^{(1)}(\sigma) = \frac{1}{\sqrt{n}}\sigma^T\Pi W\Pi\sigma,\]
where $\Pi$ is the orthogonal projection away from the all-ones vector.
In this model, finding the extremal cuts of a given relative vertex density $\alpha$ corresponds to optimizing that energy over the restricted set of configurations
\[S_{n}(\alpha) = \left\{\sigma \in \{\pm 1\}^n : \sum_{i} \sigma_i = n(2\alpha - 1)\right\}.\]
We may formulate this equivalently as optimizing
\[ H_W^{(0)}(\sigma) = \frac{1}{\sqrt{n}}\sigma^TW\sigma\]
over a different alphabet $\sigma \in \{\pm 1 - (2\alpha - 1)\}$, with graph cuts of relative vertex density $\alpha$ corresponding to the set of configurations
\[ A_n(T(\alpha), \eps_n) = \left\{\sigma \in \{\pm 1 - (2\alpha - 1)\} : \left|\sum_{i}\sigma_i^2 - T(\alpha)\right| < \eps_n\right\},\]
with $T(\alpha) = 4\alpha(1-\alpha)$ and setting $\eps_n = 0$ to achieve the equivalence.

Finally, Jagannath and Sen \cite{jagannath2017unbalanced} used an \emph{analytical annealing} approach to solve this generalized model, yielding the generalization of the Parisi formula stated here:
\begin{definition}
\label{def:ground-state-energy}
Let $\nu$ be a measure over $[0,T]$ of the form $\nu = m(t)\dd t + c\delta_T$ with $m(t)$ non-negative, non-decreasing, and everywhere right-continuous with left limits (cadlag), where $\dd t$ is the uniform measure and $\delta_T$ is the Dirac delta function at $t = T$.
Then for $\lambda \in \R$ and $T(\alpha) = 4\alpha(1-\alpha)$, we define the ground state energy functional
\[\cP^1_{T(\alpha)}(\nu, \lambda) = u_{\nu, \lambda}(0,0) - \lambda T(\alpha) - 2\int_{0}^{T(\alpha)}s\dd\nu(s) \]
where $u_{\nu, \lambda}$ is the solution to the differential equation with boundary condition
\[ \left\{\begin{array}{ll}
    \displaystyle \frac{\partial u}{\partial t} + 2\frac{\partial^2 u}{\partial x^2} + 2m(t) \left( \frac{\partial u}{\partial x} \right)^2 = 0, & (t,x) \in [0,T(\alpha)) \times \R, \\
    \displaystyle \vphantom{\sum^n} u(x,T(\alpha)) = \max_{\zeta \in \{\pm 1 - M\}} \zeta x + (\lambda + 2c)\zeta^2, &
\end{array}\right.  \]
where $M = 2\alpha - 1$.
\end{definition}
This definition reduces to the original Parisi formula at zero temperature and external field in the case that $T = 1$ and $M = 0$ and when the infimum over $\nu$ is taken. Furthermore, this generalized Parisi formula relates to average extremal cuts on random regular graphs in the following way.
\begin{theorem}[Combination of Theorem 1.2 and Lemma 2.2 of \cite{jagannath2017unbalanced}]
\label{thm:psi-to-random-graphs}
Let $T(\alpha) = 4\alpha(1-\alpha)$.
For all $\alpha$,
\[ \E_{H \sim \gReg{n}{d}} \,\max_{S \in \cS_{\alpha}}\, \left|\frac{1}{n}\cut_H(S) - d\alpha(1-\alpha)\right| \le \frac{\sqrt{d}}{4}\,\inf_{\nu,\lambda}\, \cP^1_{T(\alpha)}(\nu, \lambda) + o_d(\sqrt{d}).  \]
\end{theorem}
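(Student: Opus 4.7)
The plan is to obtain this theorem by chaining together the two cited ingredients from Jagannath--Sen, so the heart of the ``proof'' is really just aligning the normalizations and invoking each result in turn. First I would invoke Lemma~2.2 of~\cite{jagannath2017unbalanced}, which functions as an interpolation-style reduction from the random regular graph model to a Gaussian surrogate. Concretely, after subtracting the mean $d\alpha(1-\alpha) n$ from $\cut_H(S)$ and rescaling appropriately, the fluctuation $\max_{S \in \cS_\alpha}|\cut_H(S) - d\alpha(1-\alpha)n|/n$ is shown, up to an additive $o_d(\sqrt d)$ error coming from the sparse-to-dense interpolation, to be controlled in expectation by the optimum of the constrained projected Wigner problem
\[ \frac{1}{\sqrt n}\, \sigma^T \Pi W \Pi \sigma, \qquad \sigma \in S_n(\alpha), \]
scaled by a factor of order $\sqrt d$. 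The projection $\Pi = I - \tfrac{1}{n}J$ appears precisely because the $d$-regularity constraint forces each row of the centered adjacency matrix $A_H - \E A_H$ to sum to zero, so the right Gaussian surrogate is the part of $W$ orthogonal to $\bone$. The restriction $\sigma \in S_n(\alpha)$ encodes the cut-size constraint $|S| = \alpha n$, and the absolute value in the theorem is covered by combining the statement with its sign-flipped version, as in Remark~1 of~\cite{jagannath2017unbalanced}.

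Next I would apply Theorem~1.2 of~\cite{jagannath2017unbalanced}, which is the analytical-annealing upper bound on this constrained Gaussian optimization. For any admissible pair $(\nu,\lambda)$ the optimum is bounded above by the ground state energy functional $\cP^1_{T(\alpha)}(\nu,\lambda)$, where $\lambda$ is the Lagrange multiplier enforcing the quadratic constraint $\sum_i \sigma_i^2 = T(\alpha) n$ that arises when one reformulates $\sigma \in S_n(\alpha)$ via the shifted alphabet $\{\pm 1 - (2\alpha-1)\}$ with $T(\alpha) = 4\alpha(1-\alpha)$, and where $\nu$ is the Parisi order-parameter measure appearing in the Guerra-type interpolation scheme that yields the PDE for $u_{\nu,\lambda}$ in Definition~\ref{def:ground-state-energy}. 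Taking the infimum over $(\nu,\lambda)$ produces the sharpest bound of this form, which is the right-hand side of the stated inequality.

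The main obstacle is not a new idea but rather the bookkeeping of scaling constants: the $1/\sqrt n$ normalizations in the definitions of $H_W^{(0)}$ and $H_W^{(1)}$, the $\sqrt d$ coming from the variance of centered edge indicators under $\gReg{n}{d}$, and the factor $1/4$ coming from converting between $\pm 1$ configurations and indicator vectors of cuts (since for $\sigma = \bone_S - \bone_{V\setminus S}$ one has $\cut_G(S) = \tfrac{1}{4}\sigma^T L_G \sigma$). The paper's footnote about the factor-of-$2$ convention for Wigner matrices is also a potential source of error and must be tracked. Aligning all of these yields precisely the $\sqrt d/4$ prefactor in front of the infimum and the $o_d(\sqrt d)$ remainder, and no further argument beyond the combination of the two cited results is needed.
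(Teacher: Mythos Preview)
Your proposal is correct and follows essentially the same approach as the paper: invoke Lemma~2.2 of \cite{jagannath2017unbalanced} (and its sign-flipped version from Remark~1) to pass from $\cut_H$ to the projected Wigner optimization $H_W^{(1)}$ over $S_n(\alpha)$, reformulate this as $H_W^{(0)}$ over the shifted alphabet with the quadratic constraint $A_n(T(\alpha),\eps_n)$, and then apply Theorem~1.2 of \cite{jagannath2017unbalanced} to bound by $\inf_{\nu,\lambda}\cP^1_{T(\alpha)}(\nu,\lambda)$.

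One technical point you gloss over that the paper calls out explicitly: Theorem~1.2 of \cite{jagannath2017unbalanced} is stated asymptotically almost surely in $W$, whereas the chain of inequalities here needs the bound in expectation over $W$. The paper handles this with a short separate lemma (Lemma~\ref{lem:js17-fix} in the appendix) that revisits the proof of Theorem~1.2 to extract the expectation version. This is not pure bookkeeping of constants, so you should flag it rather than fold it into ``aligning normalizations.''
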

\begin{proof}
By \cite[Lemma 2.2]{jagannath2017unbalanced}, in the limit as $n \to \infty$,
\[ \E_{H'}\left[\max_{S' \in \cS_{\alpha}} \frac{1}{n}\cut_{H'}(S')\right] = d\alpha(1-\alpha) + \frac{1}{4}\sqrt{d}\E_W\left[\max_{\sigma \in S_n(\alpha)} \frac{1}{n}H_W^{(1)}(\sigma)\right] + o_d(\sqrt{d}).  \]
As alluded to in \cite[Remark 1]{jagannath2017unbalanced}, Lemma 2.2 of \cite{jagannath2017unbalanced} holds also for minimum cuts: this requires only changing some signs and invoking a few instances of sign-flip symmetry in the proof.
\[ \E_{H'}\left[\min_{S' \in \cS_{\alpha}} \frac{1}{n}\cut_{H'}(S')\right] = d\alpha(1-\alpha) - \frac{1}{4}\sqrt{d}\E_W\left[\max_{\sigma \in S_n(\alpha)} \frac{1}{n}H_W^{(1)}(\sigma)\right] - o_d(\sqrt{d}).  \]

By the equivalence described earlier in this section and the fact that $A_n(T(\alpha),0) \subseteq A_n(T(\alpha), \eps_n)$ for any sequence of $\eps_n > 0$,
\[\max_{\sigma \in S_n(\alpha)} \frac{1}{n}H_W^{(1)}(\sigma) \le \max_{\sigma \in A_n(T(\alpha), \eps_n)}\, \frac{1}{n}H_W^{(0)}(\sigma) \]

By \cite[Theorem 1.2]{jagannath2017unbalanced}, for $\eps_n \to 0$ slowly enough as $n \to \infty$, it holds that for all $T$,
\[ \liminf_{n\to \infty} \E_W\;\max_{\sigma \in A_n(T(\alpha), \eps_n)}\, \frac{1}{n}H_W^{(0)}(\sigma) = \inf_{\nu,\lambda} \cP^1_{T(\alpha)}(\nu, \lambda). \]
Although the statement of \cite[Theorem 1.2]{jagannath2017unbalanced} is stated asymptotically almost surely over random choice of $W$, it also holds in expectation over $W$.
\comment{because of the fact that the Legendre transform is monotone with respect to convex combination (that is, if $a+b = 1$ with $a \ge 0$ and $b \ge 0$, then $af^* + bg^* \le (af + bg)^*$, where $^*$ is the Legendre transform) and therefore expectation.
This combines with the fact that $\sup_T F_N(\beta, \xi; A_N)$ in the proof of \cite[Theorem 1.2]{jagannath2017unbalanced} is the Legendre transform with dual variables $T$ and $\lambda$, and the fact that Legendre transforms are always convex and therefore the Legendre transform of the expectation is monotone with respect to taking a limit.}
See Lemma~\ref{lem:js17-fix} in the appendix for details.
Combining the above equations subsequently yields the theorem statement.
\end{proof}

It is not yet known how to efficiently compute the exact value of the Parisi formula or its generalization.
We circumvent this issue by providing an upper bound, by choosing a particularly simple measure $\nu$ to bound the infimum $\inf_{\nu,\lambda} \cP^1_{T(\alpha)}(\nu, \lambda)$.
Specifically, the choice of $\nu = c\delta_T$ with $m(t) = 0$ is known as the \emph{replica-symmetric ansatz} \cite[Chapter 2]{malatesta2019random}, corresponding to the first of Parisi's original sequence of estimates.
\begin{lemma}
\label{lem:replica-symmetric-bound}
\[ \inf_{\nu,\lambda} \cP^1_{T(\alpha)}(\nu, \lambda)
\le 8\sqrt{\alpha(1-\alpha)}\cdot \frac{1}{\sqrt{2\pi}}e^{-(\erf^{-1}(2\alpha -1))^2},\]
where $\erf$ is the Gauss error function $\erf(x) = \frac{1}{\sqrt{\pi}}\int_{-x}^x e^{-x^2}\dd x$.
\end{lemma}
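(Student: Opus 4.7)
The plan is to substitute the replica-symmetric ansatz $\nu = 0$ (equivalently, $m(t) \equiv 0$ and $c = 0$) into the functional, reducing the infimum to a one-dimensional optimization over $\lambda \in \R$, which I would then solve in closed form.

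With $m \equiv 0$, the PDE of Definition~\ref{def:ground-state-energy} collapses to the backward heat equation $\partial_t u + 2\partial_{xx} u = 0$, whose solution at $(x,t) = (0,0)$ is obtained by Gaussian convolution of the boundary data at $t = T(\alpha)$:
\[u_{0, \lambda}(0,0) = \E_Z\!\left[\max_{\zeta \in \{\pm 1 - M\}} \zeta \cdot 2\sqrt{T(\alpha)}\,Z + \lambda\zeta^2\right],\]
where $Z \sim N(0,1)$, $M = 2\alpha - 1$, and $T(\alpha) = 4\alpha(1-\alpha)$. Writing $a := 2(1-\alpha)$ and $b := -2\alpha$ for the two permitted values of $\zeta$ (so that $a + b = -2M$), the pointwise maximum switches branches precisely when $Z = t := M\lambda/\sqrt{T(\alpha)}$.

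Next I would evaluate this expectation in closed form using the Gaussian identities $\E[Z\,\mathbf{1}_{Z \ge t}] = \phi(t)$ and $\E[Z\,\mathbf{1}_{Z < t}] = -\phi(t)$, together with the algebraic simplifications $a - b = 2$, $b^2 - a^2 = 4M$, and $4(1-\alpha)^2 - T(\alpha) = -4(1-\alpha)M$. After subtracting $\lambda T(\alpha)$, these combine into
\[\cP^1_{T(\alpha)}(0, \lambda) = 4\sqrt{T(\alpha)}\,\phi(t) + 4\lambda M\bigl[\Phi(t) - (1-\alpha)\bigr].\]

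To finish, I would differentiate in $\lambda$, using $dt/d\lambda = M/\sqrt{T(\alpha)}$; the two chain-rule contributions involving $t\phi(t)$ cancel, leaving the first-order condition $\Phi(t^*) = 1-\alpha$ (for $M \neq 0$). Converting via $\erf(x/\sqrt{2}) = 2\Phi(x) - 1$ gives $(t^*)^2/2 = (\erf^{-1}(2\alpha-1))^2$, and plugging back yields
\[\cP^1_{T(\alpha)}(0, \lambda^*) = 4\sqrt{T(\alpha)}\,\phi(t^*) = 8\sqrt{\alpha(1-\alpha)}\cdot \frac{1}{\sqrt{2\pi}}\,e^{-(\erf^{-1}(2\alpha-1))^2},\]
exactly the claimed bound. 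The degenerate case $\alpha = 1/2$ (so $M = 0$) must be handled separately: the substitution $t = M\lambda/\sqrt{T(\alpha)}$ breaks down, but so does the $\lambda$-dependence of $\cP^1_{T(\alpha)}$, which then evaluates directly to $4/\sqrt{2\pi}$, matching the right-hand side. I do not expect any substantive obstacle; the only delicate point is the chain-rule cancellation that isolates $\Phi(t^*) = 1-\alpha$ and, relatedly, verifying that the ansatz $c = 0$ is indeed admissible (the nonnegativity constraint on $\nu$ is trivially satisfied here, so $\lambda$ may range freely over $\R$).
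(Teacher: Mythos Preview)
Your proposal is correct and follows essentially the same approach as the paper: both take the replica-symmetric ansatz $m\equiv 0$, reduce the PDE to a Gaussian convolution, and optimize the remaining scalar parameter in closed form. The only cosmetic difference is that the paper keeps $c$ free and reparameterizes $\hat\lambda=\lambda+2c$ before optimizing, whereas you set $c=0$ at the outset; since only the combination $\lambda+2c$ enters both the boundary data and the functional, these are equivalent, and your use of the identities $\E[Z\mathbf{1}_{Z\ge t}]=\phi(t)$ streamlines the integration the paper carries out by hand.
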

\begin{proof}
First we express $\int_0^T s\dd\nu(s) = cT + \int_0^T tm(t)\dd t$ and reparameterize $\hat{\lambda} = \lambda + 2c$ so that we can write
\[ \inf_{\nu,\lambda} \cP^1_{T(\alpha)}(\nu, \lambda) = \inf_{\nu,\hat{\lambda}} \hat{u}_{\nu, \hat{\lambda}}(0,0) - \hat{\lambda}T - 2 \int_0^T tm(t)\dd t\]
where $\hat{u}_{\nu, \hat{\lambda}}$ is the solution to
\[ \left\{\begin{array}{ll}
    \displaystyle \frac{\partial u}{\partial t} + 2\frac{\partial^2 u}{\partial x^2} + 2m(t) \left( \frac{\partial u}{\partial x} \right)^2 = 0, & (t,x) \in [0,T) \times \R, \\
    \displaystyle \vphantom{\sum^n} u(x,T) = \max_{\zeta \in \{\pm 1 - M\}} \zeta x + \hat{\lambda}\zeta^2, &
\end{array}\right.  \]
with $M = 2\alpha - 1$.

By taking $\nu(t) = c\delta_T$ so that $m(t) = 0$, we can upper-bound the infimum over $\nu$, so that
\[ \inf_{\nu,\lambda} \cP^1_{T(\alpha)}(\nu, \lambda) \le \inf_{\hat{\lambda}} \hat{u}_{\hat{\lambda}}(0,0) - \hat{\lambda}T\]
and  $\hat{u}_{\hat{\lambda}}$ is the solution to
\[ \left\{\begin{array}{ll}
    \displaystyle \frac{\partial u}{\partial t} + 2\frac{\partial^2 u}{\partial x^2} = 0, & (t,x) \in [0,T) \times \R, \\
    \displaystyle \vphantom{\sum^n} u(x,T) = \max_{\zeta \in \{\pm 1 - M\}} \zeta x + \hat{\lambda}\zeta^2. &
\end{array}\right.  \]
By reparameterizing $t$ as $-t$ here, we can see that $u(x,0)$ is simply the result of evolving $u(x,T)$ according to the heat equation with diffusivity constant $2$ for a time of $T$.
Evolution of the heat equation with diffusivity $k$ over a time of $T$ is equivalent to convolution with the Gaussian heat kernel $\exp(-x^2/(4kT))/\sqrt{4\pi kT}$ \cite[Chapter 2.3]{evans10}, so
\[ \hat{u}_{\hat{\lambda}}(x,0) = \frac{1}{\sqrt{8\pi T}} \int_{-\infty}^{\infty} e^{-z^2/(8 T)}\left(\max_{\zeta \in \{\pm 1 - M\}} \zeta (z+x) + \hat{\lambda}\zeta^2\right)\dd z. \]
Thus
\[ \inf_{\nu,\lambda} \cP^1_{T(\alpha)}(\nu, \lambda)
\le \inf_{\hat{\lambda}} \frac{1}{\sqrt{8\pi T}} \int_{-\infty}^{\infty} e^{-z^2/(8 T)}\left(\max_{\zeta \in \{\pm 1 - M\}} \zeta z + \hat{\lambda}\zeta^2\right)\dd z - \hat{\lambda}T.\]
Now we calculate
\begin{align*}
    \max_{\zeta \in \{\pm 1 - M\}} \zeta z + \hat{\lambda}\zeta^2
    =& \max \left(-z - Mz + \hat{\lambda}(1 + 2M + M^2), z - Mz + \hat{\lambda}(1 - 2M + M^2)\right)
    \\=& -Mz +\hat{\lambda}(1 + M^2) + \max (-z + 2M\hat{\lambda}, z -2M\hat{\lambda})
    \\=& -Mz +\hat{\lambda}(1 + M^2) + |z -2M\hat{\lambda}|,
\end{align*}
so that
\[ \inf_{\nu,\lambda} \cP^1_{T(\alpha)}(\nu, \lambda)
\le \inf_{\hat{\lambda}} \frac{1}{\sqrt{8\pi T}} \int_{-\infty}^{\infty} e^{-z^2/(8T)}\left(-Mz +\hat{\lambda}(1 + M^2) + \left|z -2M\hat{\lambda}\right|\right)\dd z - \hat{\lambda}T.\]
Partially evaluating the integral using the facts that a Gaussian probability density function integrates to 1 and, by oddness of the integrand, $\int_{-\infty}^{\infty} ze^{-z^2/(8 T)} \dd z = 0$,
\[ \inf_{\nu,\lambda} \cP^1_{T(\alpha)}(\nu, \lambda)
\le \inf_{\hat{\lambda}} \frac{1}{\sqrt{8\pi T}} \int_{-\infty}^{\infty} e^{-z^2/(8 T)}\left|z -2M\hat{\lambda}\right|\dd z + \hat{\lambda}(1 - T + M^2).\]
Employing a change of variables $z \to 2\sqrt{T}z$ to write the integral in terms of the normal Gaussian probability density $\phi(z) = \frac{1}{\sqrt{2\pi}} e^{-z^2/2}$ and also applying the identity $1 - T = 1 + 4\alpha^2 - 4\alpha = M^2$,
\[ \inf_{\nu,\lambda} \cP^1_{T(\alpha)}(\nu, \lambda)
\le \inf_{\hat{\lambda}} \int_{-\infty}^{\infty} \phi(z)\left|2\sqrt{T}z -2M\hat{\lambda}\right|\dd z + 2\hat{\lambda}M^2.\]
Focusing now on the integral,
\begin{align*}
    &{}\int_{-\infty}^{\infty} \phi(z)\left|2\sqrt{T}z -2M\hat{\lambda}\right|\dd z
    \\={}& \int_{M\hat{\lambda}/\sqrt{T}}^{\infty} \phi(z)\left(2\sqrt{T}z -2M\hat{\lambda}\right)\dd z
    + \int_{-\infty}^{M\hat{\lambda}/\sqrt{T}} \phi(z)\left(-2\sqrt{T}z +2M\hat{\lambda}\right)\dd z
    \\={}& \int_{-\infty}^{-M\hat{\lambda}/\sqrt{T}} \phi(z)\left(-2\sqrt{T}z -2M\hat{\lambda}\right)\dd z
    + \int_{-\infty}^{M\hat{\lambda}/\sqrt{T}} \phi(z)\left(-2\sqrt{T}z +2M\hat{\lambda}\right)\dd z,
\end{align*}
where we negated and flipped the limits of the first integral, which is equivalent to negating the odd part of the integrand while preserving the even part. Continuing to integrate, letting $\Phi(z)$ denote the Gaussian cumulative density function,
\begin{align*}
    ={}& \int_{-\infty}^{-M\hat{\lambda}/\sqrt{T}} -2\sqrt{T}z\,\phi(z)\dd z
    + \int_{-\infty}^{M\hat{\lambda}/\sqrt{T}} -2\sqrt{T}z\,\phi(z)\dd z
    + 2M\hat{\lambda}\int_{-M\hat{\lambda}/\sqrt{T}}^{M\hat{\lambda}/\sqrt{T}} \phi(z)\dd z,
    \\={}& \left[2\sqrt{T}\,\phi(z)\right]_{-\infty}^{-M\hat{\lambda}/\sqrt{T}} + \left[2\sqrt{T}\,\phi(z)\right]_{-\infty}^{M\hat{\lambda}/\sqrt{T}}
    + 2M\hat{\lambda}\left(\Phi(M\hat{\lambda}/\sqrt{T}) - \Phi(-M\hat{\lambda}/\sqrt{T})\right)
    \\={}& 4\sqrt{T}\,\phi(M\hat{\lambda}/\sqrt{T}) + 2M\hat{\lambda}\erf(M\hat{\lambda}/\sqrt{2T}),
\end{align*}
where we used evenness of $\phi$ and the fact that $\Phi(x) - \Phi(-x) = \erf(x/\sqrt{2})$ in the last step.
So, putting this evaluation of the integral into our previous expression,
\[ \inf_{\nu,\lambda} \cP^1_{T(\alpha)}(\nu, \lambda)
\le \inf_{\hat{\lambda}} 4\sqrt{T}\,\phi(M\hat{\lambda}/\sqrt{T}) + 2M\hat{\lambda}\erf(M\hat{\lambda}/\sqrt{2T}) + 2\hat{\lambda}M^2.\]
By finding the critical point of this expression with respect to $\hat{\lambda}$, we find a value of
$\hat{\lambda} = -\sqrt{2T}\erf^{-1}(M)/M$.
Using this value for $\hat{\lambda}$,
\begin{align*}
\inf_{\nu,\lambda} \cP^1_{T(\alpha)}(\nu, \lambda)
&\le 4\sqrt{T}\,\phi(-\sqrt{2}\erf^{-1}(M)) - 2M\hat{\lambda}M + 2\hat{\lambda}M^2
\\&= 4\sqrt{T}\,\phi(\sqrt{2}\erf^{-1}(M)). \qedhere
\end{align*}
\end{proof}

We calculate the largest concrete value attained by the upper bound of the preceding lemma:
\begin{lemma}
\label{lem:ground-state-analysis}
For all $\alpha \in (0,1)$,
\[\frac{\inf_{\nu,\lambda} \cP^1_{T(\alpha)}(\nu, \lambda)}{4\alpha(1-\alpha)} \le 2\sqrt{\frac{2}{\pi}} = 1.595... \]
\end{lemma}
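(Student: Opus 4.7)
The plan is to derive Lemma~\ref{lem:ground-state-analysis} by substituting the explicit upper bound from Lemma~\ref{lem:replica-symmetric-bound} and reducing to a one-variable inequality involving the error function. First I would divide the bound from that lemma by $4\alpha(1-\alpha)$ to obtain
\[\frac{\inf_{\nu,\lambda}\cP^1_{T(\alpha)}(\nu,\lambda)}{4\alpha(1-\alpha)} \;\le\; \frac{2}{\sqrt{2\pi\,\alpha(1-\alpha)}}\, e^{-(\erf^{-1}(2\alpha-1))^2}.\]
The natural substitution is $M := 2\alpha - 1$ and $y := \erf^{-1}(M)$, which gives $\alpha(1-\alpha) = (1 - M^2)/4 = (1 - \erf(y)^2)/4$. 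After this substitution, the target bound $\leq 2\sqrt{2/\pi}$ reduces, by elementary cancellation, to the one-variable inequality
\[(\star) \qquad \erf(y)^2 \;\le\; 1 - e^{-2y^2}, \qquad y \in \R.\]
Equality holds at $y = 0$, i.e., the balanced case $\alpha = 1/2$, which is exactly where the extremal constant $2\sqrt{2/\pi}$ is attained.

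To prove $(\star)$ I would analyze $F(y) := 1 - e^{-2y^2} - \erf(y)^2$; by symmetry it suffices to take $y \ge 0$. One has $F(0) = 0$, and using the standard tail expansion $1 - \erf(y) \sim e^{-y^2}/(y\sqrt{\pi})$ shows that $F(y) \to 0^+$ as $y \to \infty$. Differentiating gives
\[F'(y) = \frac{4 e^{-y^2}}{\sqrt{\pi}}\,\phi(y), \qquad \phi(y) := \sqrt{\pi}\, y\, e^{-y^2} - \erf(y),\]
and in turn
\[\phi'(y) = \frac{e^{-y^2}}{\sqrt{\pi}}\bigl((\pi - 2) - 2\pi y^2\bigr),\]
which vanishes at the single positive point $y^\ast = \sqrt{(\pi-2)/(2\pi)}$. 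Together with $\phi(0) = 0$ and $\phi(y) \to -1$ as $y \to \infty$, this forces $\phi$ (and hence $F'$) to change sign exactly once on $(0,\infty)$, so $F$ is first increasing and then decreasing. Combined with the boundary values $F(0) = 0$ and $F(+\infty) = 0^+$, this yields $F \ge 0$ everywhere, establishing $(\star)$.

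I do not foresee a significant obstacle once the right substitution is identified: the reduction is algebraic and the remaining inequality is a standard one-variable calculus exercise. The most delicate point is verifying that $F(+\infty) = 0$ is approached from above, which uses the precise $\erf$-tail asymptotics; without that one-sided control, the unimodal shape of $F$ alone would not rule out $F$ dipping negative for large $y$.
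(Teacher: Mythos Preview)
Your reduction to the one-variable inequality $(\star)$ is correct, and your calculus argument for $(\star)$ is sound. The paper takes a different route: after the same reparameterization to $g(M)$, it computes $g'(\erf(x))$ and shows it is positive for $x<0$ by invoking two external bounds on $\erf$, namely Polya's inequality $\erf(x)^2 < 1 - e^{-4x^2/\pi}$ and Neuman's inequality $\erf(x)\ge \tfrac{2x}{\sqrt\pi}e^{-x^2/3}$. Your approach is more self-contained, needing no literature inequalities; in fact your $(\star)$ is strictly weaker than Polya's bound (since $4/\pi < 2$ gives $1-e^{-4y^2/\pi} \le 1-e^{-2y^2}$), so citing Polya would have sufficed, but your direct second-derivative analysis of $F$ works cleanly and avoids any citation. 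One small remark: once you know $F$ is eventually strictly decreasing and $\lim_{y\to\infty}F(y)=0$, the approach ``from above'' is automatic, so the asymptotic $\erf$-tail computation, while correct, is not actually needed to close the argument.
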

\begin{proof}
By Lemma~\ref{lem:replica-symmetric-bound}, for $\alpha \in (0,1)$,
\[ \frac{\inf_{\nu,\lambda} \cP^1_{T(\alpha)}(\nu, \lambda)}{4\alpha(1-\alpha)} \le \frac{2}{\sqrt{\alpha(1-\alpha)}}\cdot \frac{1}{\sqrt{2\pi}}e^{-(\erf^{-1}(2\alpha -1))^2}
:= f(\alpha).\]
Evaluated at $\alpha = 1/2$, this is equal to $2\sqrt{2/\pi}$, so we just need to show that the upper bound $f(\alpha)$ is maximized at $\alpha = 1/2$.

First we reparameterize $g(M) = f(\alpha)$ with $M = 2\alpha - 1$ so that
\[g(M) = \frac{1}{\sqrt{2\pi(1 - M^2)}}e^{-(\erf^{-1}(M))^2}\]
and we want to show that $g$ is maximized at $0$.
Using the product rule to take the derivative of $g$, since $\frac{d}{dM}\frac{1}{\sqrt{1-M^2}} = \frac{M}{(1-M^2)^{3/2}}$ and $\frac{d}{dM}e^{-(\erf^{-1}(M))^2} = -\sqrt{\pi}\erf^{-1}(M)$,
\[g'(M) = \frac{1}{\sqrt{2\pi(1-M^2)}}\left(\frac{Me^{-(\erf^{-1}(M))^2}}{1-M^2} - \sqrt{\pi}\erf^{-1}(M)\right).\]
We take another monotonic reparameterization, introducing $\erf(x)$ for $M$:
\[g'(\erf(x)) = \frac{1}{\sqrt{2\pi(1-\erf(x)^2)}}\left(\frac{\erf(x)e^{-x^2}}{1-\erf(x)^2} - \sqrt{\pi}x\right).\]
By Polya \cite[Equation 1.5]{polya1945remarks}, $\erf(x) < \sqrt{1-e^{-4x^2/\pi}}$ so that $1-\erf(x)^2 \ge e^{-4x^2/\pi}$, so that, for $x < 0$ when $\erf(x) < 0$,
\[g'(\erf(x)) \ge \frac{1}{\sqrt{2\pi(1-\erf(x)^2)}}\left(\erf(x)e^{(4/\pi-1)x^2} - \sqrt{\pi}x\right).\]
And by Neuman \cite[Corollary 4.2]{neuman2013inequalities}, $\erf(x) \ge \frac{2x}{\sqrt{\pi}}e^{-x^2/3}$, so when $x < 0$,
\[g'(\erf(x)) \ge \frac{1}{\sqrt{2\pi(1-\erf(x)^2)}}\left(\frac{2x}{\sqrt{\pi}}e^{(4/\pi-4/3)x^2} - \sqrt{\pi}x\right).\]
And as $e^{(4/\pi-4/3)x^2} \le 1$, this makes it clear that $g'(\erf(x))$ is positive when $x$ is negative, which means that $g$ is increasing on the negative part of its domain, which by evenness of $g$ means that $g$ is maximized at $0$.
\end{proof}

We have all the ingredients necessary to prove the main theorem stated at the beginning of this section.

\largecuts*

\begin{proof}
By Theorem~\ref{thm:psi-to-random-graphs}, for every $\alpha \in (0,1)$,
\[ \E_{H \sim \gReg{n}{d}} \,\max_{S \in \cS_{\alpha}}\, \frac{1}{n}\cut_H(S) - d\alpha(1-\alpha) \le \frac{\sqrt{d}}{4}\,\inf_{\nu,\lambda}\, \cP^1_{T(\alpha)}(\nu, \lambda) + o_d(\sqrt{d}),  \]
\[ \E_{H \sim \gReg{n}{d}} \,d\alpha(1-\alpha) - \min_{S \in \cS_{\alpha}}\, \frac{1}{n}\cut_H(S) \le \frac{\sqrt{d}}{4}\,\inf_{\nu,\lambda}\, \cP^1_{T(\alpha)}(\nu, \lambda) + o_d(\sqrt{d}).  \]
Combining the above with Lemma~\ref{lem:js17}, we see, for every $\alpha \in (0,1)$,
\[ \Pr_{H \sim \gReg{n}{d}}\left[ \max_{S \in \cS_{\alpha}} \left| \frac{1}{n}\cut_H(S) - d\alpha(1-\alpha)\right| > \eps + \frac{\sqrt{d}}{4}\,\inf_{\nu,\lambda}\, \cP^1_{T(\alpha)}(\nu, \lambda) + o_d(\sqrt{d}) \right] \le 2e^{-n\eps^2/d}.  \]
By choosing $\eps_n = 1/\sqrt{\log(n)}$ in Lemma~\ref{lem:js17} and then putting it through a union bound over all $\le n$ possible values of $\alpha$, with probability $1-e^{\Omega(n/\log n)}$ for all $\alpha \in [\beta,1/2]$,
\[ \max_{S \in \cS(\alpha)} \left| \frac{1}{n}\cut_H(S) - d\alpha(1-\alpha)\right| > \frac{\sqrt{d}}{4}\,\inf_{\nu,\lambda}\, \cP^1_{T(\alpha)}(\nu, \lambda) + o_{n,d}(\sqrt{d}).  \]
And by using Lemma~\ref{lem:ground-state-analysis} to substitute for the value of $\inf_{\nu,\lambda}\, \cP^1_{T(\alpha)}(\nu, \lambda)$, we see that
\[ \max_{S \in \cS(\alpha)} \left| \frac{1}{n}\cut_H(S) - d\alpha(1-\alpha)\right| > 2\sqrt{\frac{2}{\pi}}\,\alpha(1-\alpha)\sqrt{d} + o_{n,d}(\sqrt{d}). \qedhere \]
\end{proof}

\section{Analysis for small cuts}
\label{sec.sublinear}

In this section, we demonstrate that the number of edges crossing a cut $(S, V-S)$ deviates no more from its expectation than by a $\frac{1.5}{\sqrt{d}}$ factor with high probability when $\lvert S \rvert$ is small.

\begin{restatable}[Small Set Regime for Cut Sparsification]{theorem}{smallcuts}
\label{thm:small-cuts-crossing}
For all sufficiently large $n \geq 0$ and constant $d \geq 0$ such that, for any $S \subset V$ where $\lvert S \rvert = k$ and $k \leq \frac{n}{100}$, a sample $H \sim \gReg{n}{d}$ admits with probability at least $1 - 2\binom{n}{k}^{-1.01}$
\begin{equation*}
\bigg\lvert \frac{\cut_H(S)}{\E_{H \sim \gReg{n}{d}} [\cut_H(S)]} - 1 \bigg\rvert \leq \frac{1.5}{\sqrt{d}}
\end{equation*}
\end{restatable}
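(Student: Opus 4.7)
The plan is to reduce the concentration of $\cut_H(S)$ to that of the internal edge count $e_H(S) := |E(H)\cap\binom{S}{2}|$ and then exploit the independence of the $d$ matchings in the sample. Since $H$ is $d$-regular, summing degrees over $S$ gives $\cut_H(S) = dk - 2e_H(S)$, so the deviation of $\cut_H(S)$ from $\E[\cut_H(S)] = dk(n-k)/(n-1)$ equals $-2$ times the deviation of $e_H(S)$ from $\E[e_H(S)] = d\binom{k}{2}/(n-1)$. Under $k \le n/100$, the target $(1.5/\sqrt d)\,\E[\cut_H(S)]$ on the cut deviation translates to $|e_H(S) - \E[e_H(S)]| \le t_{\star}$ with $t_\star = \Theta(\sqrt d\,k)$. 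Writing $H = M_1 \cup \cdots \cup M_d$ as a union of $d$ independent uniform perfect matchings, $e_H(S) = \sum_{i=1}^d Y_i$ with i.i.d.\ $Y_i := |M_i \cap \binom{S}{2}|$ of mean $\binom{k}{2}/(n-1) = \Theta(k^2/n)$.

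To obtain Bennett-type concentration for $\sum_i Y_i$, note that each $Y_i$ is a sum of negatively correlated Bernoullis arising from the random matching, and a direct combinatorial calculation gives its falling factorial moments
\[
\E[(Y_i)_j] \;=\; \frac{k!}{(k-2j)!\cdot 2^j\cdot (n-1)(n-3)\cdots(n-2j+1)},
\]
which for $k \le n/100$ are bounded by the Poisson factorial moments $(\E[Y_i])^j$ times a sub-leading $\exp(O(j^2/n))$ factor. (Alternatively, a per-matching Doob martingale revealing the partners of $S$-vertices one by one — differences bounded by $1$, conditional success probability $O(k/n)$ per step — can be combined with a martingale Bennett inequality.) Either route gives a Poisson-type MGF bound on each $Y_i$; taking the $d$-fold product and Chernoff-optimizing yields
\[
\Pr\bigl[|e_H(S) - \E[e_H(S)]| > t\bigr] \;\le\; 2\exp\!\bigl(-\mu\, h(t/\mu)\bigr), \qquad \mu := \E[e_H(S)] = \Theta(dk^2/n),\ h(x) := (1+x)\log(1+x) - x.
\]
Plug in $t = t_\star$ and case-split: if $k \gtrsim n/\sqrt d$ then $t/\mu = O(1)$ and the exponent is $\Omega(t^2/\mu) = \Omega(n)$; if $k \lesssim n/\sqrt d$ then $t/\mu \gg 1$ and the exponent is $\Theta(\sqrt d\,k\,\log(n/(\sqrt d\,k)))$. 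In both regimes the exponent exceeds $1.01 \log\binom{n}{k} \le 1.01\,k\,\log(en/k)$ by a factor $\Omega(\sqrt d)$ under $k \le n/100$, so the claimed bound $2\binom{n}{k}^{-1.01}$ follows once $d$ is a sufficiently large constant.

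The main obstacle is the Poisson-MGF step: the indicators summing to $Y_i$ are not independent — only negatively correlated through the matching constraint — so Chernoff cannot be applied to the individual indicators directly. The factorial-moment approach handles this cleanly but requires control of the sub-leading $\exp(O(j^2/n))$ correction uniformly over the range of $j$ that drives the Chernoff optimum (up to $O(\mu + t)$); the martingale approach requires Bennett's rather than Freedman's inequality, since plain Freedman would give only $\exp(-\Omega(\sqrt d\,k))$ in the small-$k$ regime and would miss the $\log(n/k)$ factor needed to beat $\binom{n}{k}^{-1.01}$ uniformly for the smallest values of $k$.
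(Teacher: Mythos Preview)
Your plan is essentially the paper's: reduce to $e_H(S)$ via $\cut_H(S)=dk-2e_H(S)$, obtain Bennett-type concentration (the paper runs a single Doob martingale across all $d(k-1)$ reveals and applies the Fan--Grama--Liu martingale Bennett inequality, which is your route~(b) with the $d$ matchings lumped into one filtration), then case-split on $t_\star/\mu$ exactly as you do.

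One quantitative slip to fix: your claim that the exponent beats $1.01\log\binom{n}{k}$ by a factor $\Omega(\sqrt d)$ is false in the large-$k$ regime. For $k=\alpha n$ with $\alpha$ near $1/100$, the Bennett exponent $\Theta(t_\star^2/\mu)$ is $\Theta(n)$ while $\log\binom{n}{k}=nH(\alpha)$ is also $\Theta(n)$; their ratio is a constant independent of $d$, so ``take $d$ sufficiently large'' does not close this case. The paper tracks constants here and reduces to the elementary inequality $H(\alpha)\le 0.18\,(1-2\alpha)^2$ for $\alpha\le 1/100$, where the $0.18$ descends directly from the $1.5$ in the statement --- this is why the theorem carries that specific constant. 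Your $\Omega(\sqrt d)$ slack is genuine only in the small-$k$ regime $k\lesssim n/\sqrt d$; for $k$ up to $n/100$ you must check the constants, not invoke $d\to\infty$.
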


To be sure, the exact value of the constant 1.5 is unimportant; it suffices to demonstrate the inequality in Theorem~\ref{thm:small-cuts-crossing} using any constant smaller than $2\sqrt{\frac{2}{\pi}}$ in Theorem~\ref{thm:large-cuts}. Our analysis will require the use of a Doob martingale.

\begin{definition}
Given random variables $A$ and $(Z_\ell)_{\ell=1}^N$ sampled from a common probability space, their associated \emph{Doob martingale} is given by random variables $(X_\ell)_{\ell=0}^N$ where $X_0 = \E[A]$ and
\begin{equation*}
X_\ell = \E [ A \;\vert\; Z_1, \ldots, Z_\ell ]
\end{equation*}
\end{definition}

We note that $(Z_\ell)$ is often called the \emph{filtration} that $(X_\ell)$ is defined with respect to. For a Doob martingale $(X_\ell)_{\ell=0}^N$, we denote its \emph{martingale difference sequence} by $(Y_\ell)$ where $Y_\ell = X_\ell - X_{\ell-1}$ and its \emph{quadratic characteristic sequence} by $(\langle X \rangle_\ell)$ where
\begin{equation*}
\langle X \rangle_\ell = \sum_{r=1}^\ell \E \big[ Y_r^2 \;\vert\; Z_1 \ldots Z_{r-1} \big]
\end{equation*}

As mentioned previously, the small cuts analysis will quantify the number of edges contained entirely within a cut and use the fact that, in a regular graph, the number of edges across a cut is uniquely determined by the number of edges within the cut. For a graph $H$, we will denote $e_H(S)$ by the number of edges $e \in E_H$ with both endpoints contained within $S \subseteq V$. When $H$ is sampled from a distribution, it is understood that $e_H(S)$ is a random variable.


\subsection{Martingale Construction}

Consider $H$ a random regular graph drawn from $\gReg{n}{d}$. Enumerate its vertices by $i \in [n]$, and its constituent matchings by $m \in [d]$. For $S \subset V$ of size $\lvert S \rvert = k$, we will assume without loss of generality that $S = \{ 1, \ldots, k \}$. Next, consider the sequence of matching-vertex pairs $\big( (m_\ell, i_\ell) \big)_{\ell=1}^N$ enumerating each $(m, i) \in [d] \times [k - 1]$ where $N = d \cdot (k-1)$. Let us now define the sequence of random variables $(Z_\ell)_{\ell=1}^N$ where $Z_{\ell} = Z_{(m_\ell, i_\ell)} \in V$ is the vertex that matching $m_\ell$ matches $i_\ell \in V$ to in $H$. Note that
\begin{equation*}
e(S) = \sum_{\ell=1}^N \bone \{ Z_\ell \in [k] \textup{ and } Z_\ell > i_\ell \}
\end{equation*}

We construct the Doob martingale on $e(S)$ using $(Z_\ell)$ as a filtration. The \emph{matched edge-vertex reveal martingale} $(X_\ell)_{\ell=0}^N$ is given by $X_\ell = \E [e(S) \;\vert\; Z_1, \ldots, Z_\ell]$. One should think of this martingale as counting the number of edges contained within $S$. As an increasing number of $Z_\ell$ are conditioned on, information regarding what edges exist in $H$ is revealed in an \emph{ordered} way. The order in which an edge is revealed is given by the enumeration of the vertices adjacent to the edge, and the matching the edge belonged to when $H$ was first sampled from $d$ random matchings. Additionally, notice that vertex $k$ is excluded from such pairs $(m_\ell, i_\ell)$. This is because $m_\ell$ can only match $k$ to $i_\ell < k$ for the edge to be contained in $S$. Consequently, revealing edges adjacent to $\{ 1, \ldots, k-1 \}$ suffices to uniquely determine $e(S)$. 

Our analysis of $(X_\ell)$ will now proceed as follows. We first determine bounds on the martingale difference and quadratic characteristic of $(X_\ell)$. These bounds are then used by a standard martingale concentration result to argue that the number of edges contained within $S$ cannot deviate far from its expectation. Finally, we complete the proof of Theorem~\ref{thm:small-cuts-crossing} by using the fact that concentration in the number of edges within $S$ immediately implies concentration in the number of edges in $\cut_H(S)$ when $H$ is a random $d$ regular graph.


\subsection{Properties of the Martingale}

To bound the martingale difference and quadratic characteristic of $(X_\ell)$, we examine how $e(S)$ behaves as an increasing number of $Z_\ell$ are conditioned on. We say that $\{ z_1, \ldots, z_\ell \} \subseteq [n]$ is a \emph{valid realization} of $Z_\ell$ if there exists a $d$ regular graph $H$ such that each $(i_\ell, z_\ell) \in E_H$. When $z_1, \ldots, z_\ell$ are deterministically provided, we can define the following quantities.
\begin{enumerate}
\item $a_\ell = a_\ell(z_1, \ldots, z_\ell)$ is the number of remaining vertices in $S$ that remain unmatched as a function of $z_1, \ldots, z_\ell$. We denote $a_0 = \lvert S \rvert = k$.

\item $b_\ell = b_\ell(z_1, \ldots, z_\ell)$ is the number of remaining vertices in $V$ that remain unmatched as a function of $z_1, \ldots, z_\ell$. We denote $b_0 = \lvert V \rvert = n$.
\end{enumerate}

We will also consider $a_\ell(z_1, \ldots, z_{\ell-1}, Z_\ell)$ and $b_\ell(z_1, \ldots, z_{\ell-1}, Z_\ell)$ where $Z_\ell$ is sampled according to the filtration specified in $X_\ell$. In this case, $a_\ell$ and $b_\ell$ are random variables distributed according to that of the random variable $Z_\ell$. When $z_1, \ldots, z_\ell$ are a valid realization, we can demonstrate a bound on the ratio $\frac{a_\ell}{b_\ell}$.

\begin{lemma}\label{lem:helper}
Let $H \sim \gReg{n}{d}$ be a random regular graph, $S \subseteq V$ such that $\lvert S \rvert = k < \frac{n}{2}$, and $N = d \cdot (k-1)$. For any $0 \leq \ell \leq N$ and valid realization $z_1, \ldots, z_\ell$, it happens that
\begin{equation*}
\frac{a_\ell}{b_\ell} \leq \frac{k}{n}
\end{equation*}
\end{lemma}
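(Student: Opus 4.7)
The plan is to prove the equivalent inequality $a_\ell \cdot n \leq b_\ell \cdot k$ by induction on $\ell$. The natural reading of the notation is that $a_\ell$ and $b_\ell$ count the vertices of $S$ (respectively of $V$) whose partner in the currently-processed matching $m_\ell$ has not yet been pinned down by $z_1, \ldots, z_\ell$; this matches the stated initial condition $a_0 = k$, $b_0 = n$ and supplies the base case with equality $a_0 \cdot n = k \cdot b_0$.

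For the inductive step at step $\ell + 1$, I would split into cases on what the pair $(m_{\ell+1}, i_{\ell+1})$ does. If the step starts a new matching (i.e., $m_{\ell+1} \neq m_\ell$), the counts reset to $a_{\ell+1} = k$ and $b_{\ell+1} = n$, and the invariant again holds at equality. If the step continues $m_\ell$ but $i_{\ell+1}$'s partner in $m_{\ell+1}$ has already been fixed by an earlier reveal inside the same matching, then no new vertices become matched, so $(a_{\ell+1}, b_{\ell+1}) = (a_\ell, b_\ell)$ and the invariant persists. The interesting cases are those in which $i_{\ell+1}$ is still unmatched, so that $z_{\ell+1}$ is drawn from the remaining unmatched set minus $\{i_{\ell+1}\}$ (validity of the realization ensures this makes sense).

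When $z_{\ell+1} \in S$, both $i_{\ell+1}$ and $z_{\ell+1}$ move from unmatched to matched, giving $a_{\ell+1} = a_\ell - 2$ and $b_{\ell+1} = b_\ell - 2$; the target $(a_\ell - 2) n \leq (b_\ell - 2) k$ then follows from the inductive hypothesis together with the trivial $k \leq n$. When $z_{\ell+1} \notin S$, only $i_{\ell+1}$ is removed from the $S$-side while two vertices leave the $V$-side, so $a_{\ell+1} = a_\ell - 1$ and $b_{\ell+1} = b_\ell - 2$, and the target $(a_\ell - 1) n \leq (b_\ell - 2) k$ reduces via the inductive hypothesis to $2k \leq n$. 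This last reduction is the only place where the hypothesis $k < n/2$ is used, and it is also where I expect the only real conceptual obstacle to lie: the step in which a revealed edge leaves $S$ is precisely the one needing the global density $k/n$ to be at most $1/2$, since otherwise pairing an $S$-vertex with a non-$S$ partner would push the local ratio inside the current matching above the ambient $k/n$. Everything else is routine bookkeeping, and the hypothesis $k < n/2$ sharply pins down why the invariant cannot extend past the bisection regime.
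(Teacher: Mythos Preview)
Your proof is correct and follows essentially the same induction-with-case-analysis as the paper; you simply recast the invariant multiplicatively as $a_\ell n \le b_\ell k$ rather than as a ratio, and you make explicit the (trivial) reset case when the enumeration crosses into a new matching, which the paper leaves implicit. The key case---an $S$-vertex paired outside $S$, forcing $(a_\ell,b_\ell)\to(a_\ell-1,b_\ell-2)$ and hence requiring $2k\le n$---is handled identically in both arguments.
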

\begin{proof}
We proceed via induction on $\ell$. For the base case, $\ell = 0$ implies we have $\frac{a_0}{b_0} = \frac{k}{n}$. Let us now assume the lemma holds for $\ell-1$. Notice that any choice of $z_\ell$ admits one of three cases.
\begin{enumerate}
\item $z_\ell \in [k]$ and $z_\ell > i_\ell$. This corresponds to $z_\ell$ revealing the existence of an edge not previously known to be in $S$ when considering only $z_1, \ldots, z_{\ell-1}$. Hence $a_\ell = a_{\ell-1} - 2$ and $b_\ell = b_{\ell-1} - 2$ and
\begin{equation*}
\frac{a_\ell}{b_\ell}
= \frac{a_{\ell-1} - 2}{b_{\ell-1} - 2}
\leq \frac{a_{\ell-1}}{b_{\ell-1}}
\leq \frac{k}{n}
\end{equation*}

with the last inequality following by the inductive hypothesis.

\item $z_\ell \in [k]$ however $z_\ell < i_\ell$. This corresponds to $i_\ell$ having already been matched to $j \in [k]$ as revealed by $z_j$ for $j < \ell$. Thus, $a_{\ell} = a_{\ell-1}$ and $b_{\ell} = b_{\ell-1}$ and the inductive hypothesis is maintained.

\item $z_\ell \notin [k]$ however $z_\ell > i_\ell$. This corresponds to $m_\ell$ matching $i_\ell$ to a vertex not in $S$. Thus $a_{\ell} = a_{\ell} - 1$ and $b_{\ell} = b_{\ell} - 2$ and so
\begin{equation*}
\frac{a_\ell}{b_\ell}
= \frac{a_{\ell-1} - 1}{b_{\ell-1} - 2}
\leq \frac{a_{\ell-1} - 1}{b_{\ell-1} - n/k}
= \frac{a_{\ell-1} - k/n \cdot n/k}{b_{\ell-1} - n/k}
< \frac{n}{k}
\end{equation*}

where the second inequality follows as $k \leq \frac{n}{2}$ and the last inequality holds by the following principle: $\frac{p}{q} < r$ implies $\frac{p - rw}{q - w} < r$ for all $p, q, r, w \in \mathbb{Z}_{\geq 0}$ and we choose $p = a_{\ell-1}$, $q = b_{\ell-1}$, $r = \frac{k}{n}$, and $w = \frac{n}{k}$.
\end{enumerate}

In all cases, we have that the lemma holds for $\ell$, thus completing the induction.
\end{proof}

We now bound the martingale difference of $(X_\ell)$.

\begin{lemma}\label{lem:martingale-diff}
Let $H \sim \gReg{n}{d}$ be a random regular graph, $S \subseteq V$ such that $\lvert S \rvert = k < \frac{n}{2}$, and $N = d \cdot (k - 1)$. Then $Y_\ell$ associated with $(X_\ell)_{\ell=0}^N$ admits $\lvert Y_\ell \rvert \leq 1$ for all $i \in [N]$.
\end{lemma}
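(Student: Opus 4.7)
The plan is to case-split on whether $Z_\ell$ is deterministic given $Z_1,\dots,Z_{\ell-1}$ and then handle the nondeterministic case via a swap coupling. The deterministic case arises exactly when some earlier reveal $r<\ell$ has $m_r=m_\ell$ and $Z_r=i_\ell$, so that $i_\ell$'s partner in matching $m_\ell$ is already forced; then $\sigma(Z_1,\dots,Z_\ell)=\sigma(Z_1,\dots,Z_{\ell-1})$, giving $X_\ell=X_{\ell-1}$ and $Y_\ell=0$ outright. In the complementary case, by the symmetry of a uniform random matching conditioned on a partial matching, $Z_\ell$ is uniformly distributed over the set $\mathcal{V}_\ell$ of vertices $v\neq i_\ell$ that are unmatched in matching $m_\ell$ by the previous reveals.

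For the nondeterministic case, define $f(z)=\E[e(S)\mid Z_1,\dots,Z_{\ell-1},Z_\ell=z]$, so that $X_\ell=f(Z_\ell)$ and $X_{\ell-1}=\E[f(Z_\ell)\mid Z_1,\dots,Z_{\ell-1}]$. Since a conditional expectation lies between the minimum and maximum of its integrand, it suffices to prove $|f(z)-f(z')|\le 1$ for arbitrary $z,z'\in\mathcal{V}_\ell$. I would establish this by constructing a swap coupling between the conditional distributions: given any completion of the multigraph with $Z_\ell=z$, let $w$ be the unique partner of $z'$ in matching $m_\ell$, and swap the edges $(i_\ell,z),(z',w)$ in matching $m_\ell$ for $(i_\ell,z'),(z,w)$, leaving every other edge of the multigraph untouched. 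The vertices $i_\ell,z,z',w$ are pairwise distinct, so the swap yields a valid completion with $Z_\ell=z'$. One must also check it is consistent with every previous reveal: any violation, unpacked via the identity that $Z_r$ equals the partner of $i_r$ in matching $m_r$, would force either $i_\ell$ to have been matched in matching $m_\ell$ by an earlier reveal (contradicting nondeterminism) or one of $z,z'$ to have been matched in matching $m_\ell$ by an earlier reveal (contradicting $z,z'\in\mathcal{V}_\ell$). Because the swap is an involution between the two uniformly distributed sets of conditional completions, it is a measure-preserving bijection.

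Under this coupling, all edges outside the two swapped ones agree between coupled completions, so the change in $e(S)$ is localized and, using $i_\ell\in S$, evaluates to
\[\bigl(\bone\{z'\in S\}-\bone\{z\in S\}\bigr)\bigl(1-\bone\{w\in S\}\bigr),\]
which has absolute value at most $1$ pointwise. Taking expectations along the coupling yields $|f(z)-f(z')|\le 1$, and consequently $|Y_\ell|\le 1$. The main obstacle in executing this plan is the consistency check for the swap against the previous reveals; once that verification is in place, the localized computation of the change in $e(S)$ and its transfer to a bound on $f$ are elementary.
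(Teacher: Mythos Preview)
Your argument is correct. The case split is sound: when $i_\ell$ has already been matched in matching $m_\ell$ by some earlier reveal, $Z_\ell$ is measurable with respect to the past and $Y_\ell=0$; otherwise $Z_\ell$ is uniform over the unmatched vertices $\mathcal V_\ell$, and since $k<n/2$ there are at least two such vertices, so the swap construction applies. The verification that the swap preserves every earlier reveal $Z_r$ with $m_r=m_\ell$ goes through exactly as you indicate, because any of the four vertices $i_\ell,z,z',w$ appearing as $i_r$ or $Z_r$ for such an $r$ forces either $i_\ell$ or one of $z,z'$ to be already matched. The localized change $(\bone\{z'\in S\}-\bone\{z\in S\})(1-\bone\{w\in S\})$ is correct and bounded by $1$ in absolute value, and the involution property gives the measure-preserving bijection you need.

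This is, however, a genuinely different route from the paper. The paper does not couple; it reduces to $d=1$ and writes $X_\ell$ explicitly as a partial count plus $\phi(a_\ell,b_\ell)=\binom{a_\ell}{2}/(b_\ell-1)$, then computes $Y_\ell$ exactly in each of the three cases for $z_\ell$ and bounds it using Lemma~\ref{lem:helper} (the inequality $a_\ell/b_\ell\le k/n$). The payoff of the paper's more laborious approach is that it extracts the sharper inequalities $Y_\ell\ge 1-2k/n$ when an internal edge is revealed and $Y_\ell\ge -k/n$ otherwise; these are invoked verbatim in the proof of Lemma~\ref{lem:martingale-quad-char} to bound $\Pr[W_\ell=1]$ and hence the quadratic characteristic. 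Your coupling is cleaner and avoids Lemma~\ref{lem:helper} entirely, but it yields only $|Y_\ell|\le 1$, so if you continue with the paper's development you would still need to recover those refined one-sided bounds separately.
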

\begin{proof}
As the $d$ constituent matchings of $H$ are sampled independently and uniformly at random, it suffices to assume $d = 1$, and hence $N = k - 1$. Now let $\phi(a, b)$ be the expected number of edges contained inside a subset of $a$ vertices in a uniformly sampled perfect matching on $b$ vertices. $\phi(a, b)$ is the quantity
\begin{equation*}
\phi(a, b) = \binom{a}{2} \cdot \frac{1}{b - 1}
\end{equation*}

For a given $\ell$, we begin by fixing a valid realization of random variables $Z_1 = z_1, \ldots, Z_\ell = z_\ell$ and observe that $X_{\ell-1}$ can be computed as
\begin{align*}
X_{\ell-1}
&= \E [ e(S) \;\vert\; Z_1 = z_1, \ldots, Z_{\ell-1} = z_{\ell-1}] \\
&= \E \bigg[ \sum_{r=1}^N \bone \{ Z_r \in [k] \textup{ and } Z_r > r \} \; \bigg\vert \; Z_1 = z_1, \ldots, Z_{\ell-1} = z_{\ell-1} \bigg] \\
&= \sum_{r=1}^{\ell-1} \bone \{ z_r \in [k] \textup{ and } z_r > r \} + \phi(a_{\ell-1}, b_{\ell-1})
\end{align*}

where we have used linearity of expectations to separate terms of $e(S)$ that have been conditioned to be $z_r$, and those that remain random. $X_\ell$ is similarly given by the following.
\begin{equation*}
X_{\ell}
= \sum_{r=1}^{\ell} \bone \{ z_r \in [k] \textup{ and } z_r > r \} + \phi(a_{\ell}, b_{\ell})
\end{equation*}

We can now compute $Y_\ell$ as
\begin{equation*}
Y_\ell
= X_\ell - X_{\ell-1}
= \bone \{ z_\ell \in [k] \textup{ and } z_\ell > \ell \} + \big( \phi(a_{\ell}, b_{\ell}) - \phi(a_{\ell-1}, b_{\ell-1}) \big)
\end{equation*}

Let us denote $w_\ell = \bone \{ z_\ell \in [k] \textup{ and } z_\ell > \ell \}$. It is either the case that $w_\ell = 1$ or $w_\ell = 0$. Assuming $w_\ell = 1$, we first demonstrate that $Y_\ell \leq 1$. In this case, vertex $\ell$ is adjacent to $z_\ell \in S$. Consequently, $a_\ell = a_{\ell-1} - 2$ and $b_\ell = b_{\ell-1} - 2$ and we have
\begin{align*}
Y_\ell
&= w_\ell + \big( \phi(a_{\ell-1} - 2, b_{\ell-1} - 2) - \phi(a_{\ell-1}, b_{\ell-1}) \big) \\
&= 1 + \binom{a_{\ell-1} - 2}{2} \cdot \frac{1}{b_{\ell-1} - 3} - \binom{a_{\ell-1}}{2} \cdot \frac{1}{b_{\ell-1} - 1} \\
&= 1 + \binom{a_{\ell-1} - 2}{2} \cdot \bigg( \frac{1}{b_{\ell-1} - 3} - \frac{1}{b_{\ell-1} - 1} \bigg) - \frac{2 a_{\ell-1} - 3}{b_{\ell-1} - 1} \\
&\leq 1 + \frac{2 a_{\ell-1} - 3}{b_{\ell-1} - 1} - \frac{2 a_{\ell-1} - 3}{b_{\ell-1} - 1} \\
&= 1
\end{align*}
as required. Completing the analysis for $w_\ell = 1$, we demonstrate that $Y_\ell \geq 0$.
\begin{align*}
Y_\ell
&= 1 + \binom{a_{\ell-1} - 2}{2} \cdot \bigg( \frac{1}{b_{\ell-1} - 3} - \frac{1}{b_{\ell-1} - 1} \bigg) - \frac{2 a_{\ell-1} - 3}{b_{\ell-1} - 1} \\
&\geq 1 - \frac{2 a_{\ell-1}}{b_{\ell-1}} \\
&\geq 1 - \frac{2k}{n}
\end{align*}

The last inequality follows from an application of Lemma~\ref{lem:helper}. Suppose now that $w_\ell = 0$. Since $(X_\ell)_{\ell=1}^N$ is a Doob martingale, $\E [Y_\ell] = 0$ for all $\ell$. This implies $Y_\ell < 0 < 1$ since, in fact, $Y_\ell > 0$ whenever $w_\ell = 1$. All that remains to demonstrate is that $Y_\ell > -1$. Observe that $w_\ell = 0$ implies one of two cases.
\begin{enumerate}
\item $z_\ell \in [k]$ however $z_\ell < \ell$. Then $a_\ell = a_{\ell-1}$ and $b_\ell = b_{\ell-1}$ implying $Y_\ell = 0$.

\item $z_\ell \notin [k]$ however $z_\ell > \ell$. Then $a_\ell = a_{\ell-1} - 1$ and $b_\ell = b_{\ell-1} - 2$. We then compute $Y_\ell$ as
\begin{align*}
Y_\ell
&= w_\ell + \big( \phi(a_{\ell-1} - 1, b_{\ell-1} - 2) - \phi(a_{\ell-1}, b_{\ell-1}) \big) \\
&= \binom{a_{\ell-1} - 1}{2} \cdot \frac{1}{b_{\ell-1} - 3} - \binom{a_{\ell-1}}{2} \cdot \frac{1}{b_{\ell-1} - 1} \\
&= \binom{a_{\ell-1} - 1}{2} \cdot \bigg( \frac{1}{b_{\ell-1} - 3} - \frac{1}{b_{\ell-1} - 1} \bigg) - \frac{a_{\ell-1} - 1}{b_{\ell-1} - 1} \\
&\geq - \frac{a_{\ell-1}}{b_{\ell-1}} \\
&\geq - \frac{k}{n}
\end{align*}

where the last inequality follows from Lemma~\ref{lem:helper}.
\end{enumerate}

In both cases, $Y_\ell > -1$ since $k \leq \frac{n}{2}$, thus completing the proof.
\end{proof}

Lemma~\ref{lem:martingale-diff} precisely computes how $X_\ell$ behaves as $\ell$ increases. If it is revealed that $m_\ell$ matches $Z_\ell$ to $i_\ell < Z_\ell$ (thus within $S$), then $X_\ell$ increases by some amount in the interval $[1 - \frac{2k}{n}, 1]$. Otherwise $X_\ell$ decreases by an amount in $[-\frac{k}{n}, 0]$. Using this enables us to bound the quadratic characteristic, and understand how the variance of $e(S)$ accumulates as subsequent $Z_\ell$ are conditioned on. 

\begin{lemma}\label{lem:martingale-quad-char}
Let $H \sim \gReg{n}{d}$ be a random regular graph, $S \subseteq V$ such that $\lvert S \rvert = k < \frac{n}{2}$, $N = d \cdot (k - 1)$. For $(X_\ell)_{\ell=0}^N$, we have $\langle X \rangle_N \leq \frac{k(k-1) d}{n - 2k}$ with probability 1.
\end{lemma}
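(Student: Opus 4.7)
The plan is to compute $\E[Y_\ell^2 \mid \mathcal{F}_{\ell-1}]$ explicitly for each increment and sum, using the independence of the $d$ perfect matchings to reduce to a per-matching computation. Following the case analysis already done in the proof of Lemma~\ref{lem:martingale-diff}, I first fix a valid realization $z_1,\ldots,z_{\ell-1}$. If $i_\ell$ has already been matched in a previous reveal, then $Z_\ell$ is determined and $Y_\ell = 0$; otherwise $Z_\ell$ is uniform over the $b_{\ell-1}-1$ currently unmatched vertices different from $i_\ell$, and produces two outcomes: with probability $p_A = (a_{\ell-1}-1)/(b_{\ell-1}-1)$ it lands in $S$ (``case A''), giving $Y_\ell = 1 + \Delta_A$ where $\Delta_A = \phi(a_{\ell-1}-2,b_{\ell-1}-2) - \phi(a_{\ell-1},b_{\ell-1})$; otherwise (``case C'') it lands outside $S$, giving $Y_\ell = \Delta_C$ where $\Delta_C = \phi(a_{\ell-1}-1,b_{\ell-1}-2) - \phi(a_{\ell-1},b_{\ell-1})$.

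The key step is to exploit the martingale property $\E[Y_\ell \mid \mathcal{F}_{\ell-1}] = 0$, which forces $p_A(1+\Delta_A) = -(1-p_A)\Delta_C$; substituting into $\E[Y_\ell^2 \mid \mathcal{F}_{\ell-1}] = p_A(1+\Delta_A)^2 + (1-p_A)\Delta_C^2$ collapses it to
\[\E[Y_\ell^2 \mid \mathcal{F}_{\ell-1}] \;=\; p_A(1+\Delta_A)\bigl[(1+\Delta_A) - \Delta_C\bigr].\]
Next, a direct computation with $\phi(a,b) = \binom{a}{2}/(b-1)$ gives $(1+\Delta_A)-\Delta_C = (b_{\ell-1}-a_{\ell-1}-1)/(b_{\ell-1}-3)$, and since Lemma~\ref{lem:martingale-diff} already showed $0 \le 1+\Delta_A \le 1$, I can bound $p_A(1+\Delta_A) \le p_A = (a_{\ell-1}-1)/(b_{\ell-1}-1)$. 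This yields
\[\E[Y_\ell^2 \mid \mathcal{F}_{\ell-1}] \;\le\; \frac{(a_{\ell-1}-1)(b_{\ell-1}-a_{\ell-1}-1)}{(b_{\ell-1}-1)(b_{\ell-1}-3)} \;\le\; \frac{a_{\ell-1}-1}{b_{\ell-1}-3}.\]

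To finish, I would sum over $\ell$. The $d$ matchings are independent, so within each matching the quantities $a_\cdot, b_\cdot$ reset, run for at most $k-1$ iterations, and decrement $b$ by at most $2$ per step, giving $b_{\ell-1}-3 \ge n - 2k + 1$. Using $a_{\ell-1}-1 \le k-1$, each matching contributes at most $(k-1)^2/(n-2k+1)$ to $\langle X\rangle_N$, so altogether
\[\langle X \rangle_N \;\le\; \frac{d(k-1)^2}{n-2k+1} \;\le\; \frac{dk(k-1)}{n-2k},\]
the last inequality being an algebraic check valid for $k \le n$. The main obstacle is the middle step: one must leverage the martingale constraint to turn $p_A(1+\Delta_A)^2 + (1-p_A)\Delta_C^2$ into the cleaner form above, since the naive bound (e.g.\ bounding each term separately by $p_A$ and $(1-p_A)(k/n)^2$) gives a worse dependence on $k/n$ and fails to match the target denominator $n-2k$.
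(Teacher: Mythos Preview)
Your proof is correct. Both you and the paper bound $\langle X\rangle_N$ by summing a uniform bound on each conditional second moment $\E[Y_\ell^2\mid \mathcal F_{\ell-1}]$, but the per--term arguments differ. The paper first asserts that $\E[Y_\ell^2\mid\mathcal F_{\ell-1}]$ is controlled by $\Var[W_\ell]\le \Pr[W_\ell=1]$ (the stated equality is really an inequality, coming from the observation that $Y_\ell$ is an affine function of $W_\ell$ with slope $(1+\Delta_A)-\Delta_C\le 1$), and then bounds $\Pr[W_\ell=1]$ \emph{indirectly} via the martingale identity $\Pr[W_\ell=1]=\E[\phi(a_{\ell-1},b_{\ell-1})-\phi(a_\ell,b_\ell)]$ together with the pointwise bounds on $Y_\ell$ from Lemma~\ref{lem:martingale-diff}, obtaining $\Pr[W_\ell=1]\le k/(n-2k)$. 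You instead use the martingale constraint to collapse $\E[Y_\ell^2]$ to the product $p_A(1+\Delta_A)\bigl[(1+\Delta_A)-\Delta_C\bigr]$, evaluate the second factor explicitly as $(b_{\ell-1}-a_{\ell-1}-1)/(b_{\ell-1}-3)$, and bound the first factor by $p_A=(a_{\ell-1}-1)/(b_{\ell-1}-1)$ using $1+\Delta_A\le 1$. Your route is slightly more direct and yields the marginally sharper per--term bound $(k-1)/(n-2k+1)$, but both feed into the same sum over $N=d(k-1)$ steps and land on the target $dk(k-1)/(n-2k)$.
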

\begin{proof}
It is sufficient to demonstrate $\langle X \rangle_{\ell} - \langle X \rangle_{\ell-1} \leq \frac{k}{n-2k}$ for all $\ell \in [N]$ as we would have
\begin{equation*}
\langle X \rangle_N
= \sum_{\ell=2}^N \big( \langle X \rangle_{\ell} - \langle X \rangle_{\ell-1} \big)
\leq N \cdot \frac{k}{n-2k}
\leq \frac{k(k-1)d}{n-2k}
\end{equation*}

Assume without loss of generality that $d = 1$ and fix $\ell$ along with a valid realization $Z_1 = z_1, \ldots, Z_{\ell-1} = z_{\ell-1}$. One can calculate the following fact
\begin{equation*}
\langle X \rangle_\ell - \langle X \rangle_{\ell - 1}
= \Var [ \bone \{ Z_\ell \in [k] \textup{ and } Z_\ell > \ell \} = 1 ]
\end{equation*}

Denote the indicator random variable $W_\ell = \bone \{ Z_\ell \in [k] \textup{ and } Z_\ell > \ell \}$. To bound the variance of the indicator, we seek to determine $\Pr [W_\ell = 1]$ with randomness taken over choice of $Z_\ell$. Recall that
\begin{equation*}
Y_\ell
= W_\ell + \big( \phi(a_\ell, b_\ell) - \phi(a_{\ell-1}, b_{\ell-1}) \big)
\end{equation*}

Note $Y_\ell$ is a random quantity since $W_\ell$, $a_\ell = a_\ell(z_1, \ldots, z_{\ell-1}, Z_\ell)$, and $b_\ell = b_\ell(z_1, \ldots, z_{\ell-1}, Z_\ell)$ each depend on a sample $Z_\ell$. It remains however that $\E [Y_\ell] = 0$ implying
\begin{equation*}
0 = \Pr \big[ W_\ell = 1 \big] + \E \big[ \big( \phi(a_\ell, b_\ell) - \phi(a_{\ell-1}, b_{\ell-1}) \big) \big]
\end{equation*}

and hence
\begin{equation*}
\Pr \big[ W_\ell = 1 \big]
= \E \big[ \phi(a_{\ell-1}, b_{\ell-1}) - \phi(a_\ell, b_\ell) \big]
\end{equation*}

Let us condition the expectation as follows.
\begin{align*}
\Pr \big[ W_\ell = 1 \big]
&= \Pr[W_\ell = 0] \cdot \E \big[ \phi(a_{\ell-1}, b_{\ell-1}) - \phi(a_\ell, b_\ell) \; \vert \; W_\ell = 0 \big] \\
&\qquad\qquad + \Pr[W_\ell = 1] \cdot \E \big[ \phi(a_{\ell-1}, b_{\ell-1}) - \phi(a_\ell, b_\ell) \; \vert \; W_\ell = 1 \big] \\
&\leq \E \big[ \phi(a_{\ell-1}, b_{\ell-1}) - \phi(a_\ell, b_\ell) \; \vert \; W_\ell = 0 \big] \\
&\qquad\qquad + \Pr[W_\ell = 1] \cdot \E \big[ \phi(a_{\ell-1}, b_{\ell-1}) - \phi(a_\ell, b_\ell) \; \vert \; W_\ell = 1 \big]
\end{align*}

Implying
\begin{equation*}
\Pr \big[ W_\ell = 1 \big]
\leq \frac{\E \big[ \phi(a_{\ell-1}, b_{\ell-1}) - \phi(a_\ell, b_\ell) \; \vert \; W_\ell = 0 \big]}{1 - \E \big[ \phi(a_{\ell-1}, b_{\ell-1}) - \phi(a_\ell, b_\ell) \; \vert \; W_\ell = 1 \big]}
\end{equation*}

Recall from the proof of Lemma~\ref{lem:martingale-diff} that $Y_\ell \geq 1 - \frac{2k}{n}$ if $W_\ell = 1$, while $Y_\ell \geq - \frac{k}{n}$ if $W_\ell = 0$. This means
\begin{align*}
&\E \big[ \phi(a_{\ell-1}, b_{\ell-1}) - \phi(a_\ell, b_\ell) \; \vert \; W_\ell = 0 \big] \leq \frac{k}{n} \\
&\E \big[ \phi(a_{\ell-1}, b_{\ell-1}) - \phi(a_\ell, b_\ell) \; \vert \; W_\ell = 1 \big] \geq \frac{2k}{n}
\end{align*}

and thus we have
\begin{equation*}
\Pr \big[ W_\ell = 1 \big] \leq \frac{k}{n - 2k}
\end{equation*}

Finally, as $W_\ell$ is an indicator random variable, its variance is at most that given by a Bernoulli random variable with success probability $\frac{k}{n - 2k}$. We conclude with
\begin{equation*}
\langle X \rangle_\ell - \langle X \rangle_{\ell - 1}
= \Var [ W_\ell ]
\leq \Pr [ W_\ell = 1 ]
\leq \frac{k}{n - 2k}
\end{equation*}

as required.
\end{proof}


\subsection{Concentration Analysis}

We now determine how $(X_\ell)$ concentrates. In~\cite{fan2012hoeffding}, the following Azuma-like inequality is proven for martingales.

\begin{theorem}[Remark 2.1 combined with equations (11) and (13) of \cite{fan2012hoeffding}]\label{thm:azuma-hoeffding}
Let $(X_\ell)_{\ell=0}^N$ be a martingale with martingale differences $(Y_\ell)$ satisfying $|Y_\ell| \leq 1$ for all $0 \leq \ell \leq N$. For every $0 \leq x \leq N$ and $\nu \geq 0$, we have
\begin{equation*}
\Pr \Big[ \lvert X_N - X_0 \rvert \geq x \textup{ and } \langle X \rangle_N \leq \nu^2 \Big]
\leq 2 \cdot \bigg( \frac{\nu^2}{x + \nu^2} \bigg)^{x + \nu^2} e^x
\end{equation*}
\end{theorem}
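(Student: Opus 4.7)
The plan is to establish the bound via the classical Bennett--Freedman exponential supermartingale method, exploiting the one-sided hypothesis $Y_\ell\le 1$ through a standard pointwise convexity estimate, and then symmetrizing to pick up the factor of $2$.

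First I would fix $\lambda\ge 0$, set $\psi(\lambda):=e^\lambda-1-\lambda\ge 0$, and introduce the process
\[ Z_\ell \;=\; \exp\!\bigl(\lambda(X_\ell-X_0)-\psi(\lambda)\,\langle X\rangle_\ell\bigr),\qquad 0\le \ell\le N, \]
so that $Z_0=1$. The analytic workhorse is the pointwise inequality
\[ e^{\lambda y}\;\le\;1+\lambda y+\psi(\lambda)\,y^2, \qquad \text{for all } y\le 1 \text{ and } \lambda\ge 0, \]
which follows from a routine check that both sides agree at $y=1$ together with a convexity/tangency argument on the difference. Taking conditional expectations with respect to the filtration of the Doob construction and using the martingale identity $\E[Y_\ell\mid\mathcal{F}_{\ell-1}]=0$ gives
\[ \E[e^{\lambda Y_\ell}\mid\mathcal{F}_{\ell-1}] \;\le\; 1+\psi(\lambda)\,\E[Y_\ell^2\mid\mathcal{F}_{\ell-1}] \;\le\; \exp\!\bigl(\psi(\lambda)(\langle X\rangle_\ell-\langle X\rangle_{\ell-1})\bigr), \]
which is exactly the statement that $(Z_\ell)$ is a non-negative supermartingale; in particular $\E[Z_N]\le 1$.

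Next I would extract the upper tail. On the event $\{X_N-X_0\ge x\}\cap\{\langle X\rangle_N\le \nu^2\}$ we have $Z_N\ge \exp(\lambda x-\psi(\lambda)\nu^2)$, so Markov's inequality gives
\[ \pr\!\bigl[X_N-X_0\ge x,\ \langle X\rangle_N\le \nu^2\bigr] \;\le\; \exp\!\bigl(-\lambda x+\psi(\lambda)\nu^2\bigr). \]
Setting the derivative of the exponent with respect to $\lambda$ to zero yields the minimizer $\lambda^\star=\ln(1+x/\nu^2)\ge 0$ (here is where the hypothesis $x\ge 0$ is used), and substituting back and simplifying produces exactly the target expression $(\nu^2/(x+\nu^2))^{x+\nu^2}\,e^x$.

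For the lower tail I would run the identical argument on the martingale $-X_\ell$. The catch is that this requires $-Y_\ell\le 1$, which is not automatic from $Y_\ell\le 1$ alone; in the application driving this theorem, however, Lemma~\ref{lem:martingale-diff} delivers $|Y_\ell|\le 1$, so $-Y_\ell\le 1$ holds as well and the symmetric estimate goes through unchanged. A union bound over the two tails then produces the factor of $2$ in the statement. The main conceptual obstacle is precisely this asymmetry: deriving the lower tail from the purely one-sided hypothesis $Y_\ell\le 1$ alone requires a more delicate exponential function or a truncation argument (as in \cite{fan2012hoeffding}), but for the martingale constructed in Section~\ref{sec.sublinear} the two-sided bound already at hand makes the symmetric route suffice.
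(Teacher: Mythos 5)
Your derivation is correct, and it is essentially a self-contained reconstruction of the result that the paper simply imports: the paper gives no proof of Theorem~\ref{thm:azuma-hoeffding}, citing instead the one-sided supermartingale inequality of \cite{fan2012hoeffding} and obtaining the two-sided form by a union bound with $-X_\ell$, exactly as in your final step. Your exponential-supermartingale argument --- the pointwise bound $e^{\lambda y}\le 1+\lambda y+\psi(\lambda)y^2$ for $y\le 1$ with $\psi(\lambda)=e^\lambda-1-\lambda$, the supermartingale $Z_\ell$, Markov's inequality, and the optimizer $\lambda^\star=\ln(1+x/\nu^2)$, which indeed gives $\exp\bigl(x-(x+\nu^2)\ln(1+x/\nu^2)\bigr)=(\nu^2/(x+\nu^2))^{x+\nu^2}e^x$ --- is the standard Bennett--Freedman route and is, in substance, how the cited inequality is proved; what your version buys is independence from the reference, at the cost of reproving a known bound. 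Your caveat about the lower tail is well taken and is in fact sharper than the paper's own phrasing: with only $Y_\ell\le 1$ the two-sided conclusion is not literally valid (a martingale making a single rare negative jump of size $x$ can realize the event $\{X_N-X_0\le -x,\ \langle X\rangle_N\le\nu^2\}$ with probability decaying only polynomially in $x$, while the stated bound decays exponentially), so the symmetrization genuinely requires $-Y_\ell\le 1$ as well. The paper glosses over this hypothesis, but, as you observe, Lemma~\ref{lem:martingale-diff} supplies $|Y_\ell|\le 1$ in the only place the theorem is invoked, so the application is unaffected.
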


The concentration inequalities of~\cite{fan2012hoeffding} are one-sided inequalities as they are stated for supermartingales. We use the double-sided version, incurring an additional factor of 2 after taking a union bound with the negative of $(X_\ell)$. We start with a generic application of Theorem~\ref{thm:azuma-hoeffding} to fit our setting.

\begin{lemma}\label{lem:concentration-generic}
For $H$ a random regular graph drawn from $\gReg{n}{d}$, $S \subseteq V$ such that $\lvert S \rvert = k < \frac{n}{2}$, and $\delta > 0$, we have the following. 
\begin{equation*}
\Pr \Big[ \big\lvert e(S) - \E [e(S)] \big\rvert \geq \delta \cdot \E[e(S)] \Big]
\leq 2 \exp \bigg\{ - \E[e(S)] \cdot \bigg[ (\delta + C) \cdot \ln \bigg( \frac{\delta}{C} + 1 \bigg) - \delta \bigg] \bigg\}
\end{equation*}
where $C = \frac{2(n-1)}{n-2k}$
\end{lemma}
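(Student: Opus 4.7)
The proof is a direct application of Theorem~\ref{thm:azuma-hoeffding} to the matched edge-vertex reveal martingale $(X_\ell)_{\ell=0}^N$, with $X_0 = \E[e(S)]$ and $X_N = e(S)$, so that the event we care about is exactly $|X_N - X_0| \geq \delta \cdot \E[e(S)]$. Lemma~\ref{lem:martingale-diff} supplies the hypothesis $|Y_\ell| \leq 1$ (in particular $Y_\ell \leq 1$ and $-Y_\ell \leq 1$, which is why the theorem as stated carries the factor of $2$ covering both tails), and Lemma~\ref{lem:martingale-quad-char} gives the almost-sure bound $\langle X\rangle_N \leq \frac{k(k-1)d}{n-2k}$. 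Because this bound on the quadratic characteristic holds with probability $1$, the event $\{\langle X\rangle_N \leq \nu^2\}$ appearing in the statement of Theorem~\ref{thm:azuma-hoeffding} is the entire probability space once $\nu^2$ is chosen at least as large, so the theorem yields an unconditional bound on the deviation event.

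The natural parameter choice is $x = \delta \cdot \E[e(S)]$ and $\nu^2 = \frac{k(k-1)d}{n-2k}$. The only non-trivial algebraic observation is that, since $\E[e(S)] = \binom{k}{2}\cdot \frac{d}{n-1} = \frac{k(k-1)d}{2(n-1)}$ in a random $d$-regular graph, we obtain the clean identity $\nu^2 = C \cdot \E[e(S)]$ with $C = \frac{2(n-1)}{n-2k}$, which is precisely what makes the final expression collapse to the form stated in the lemma. Substituting these choices into Theorem~\ref{thm:azuma-hoeffding} gives
\[ \Pr\bigl[|e(S) - \E[e(S)]| \geq \delta\,\E[e(S)]\bigr] \leq 2\left(\frac{C}{\delta+C}\right)^{(\delta+C)\E[e(S)]} e^{\delta\,\E[e(S)]}. \]
Rewriting the base as $1/(1+\delta/C)$, pulling it into the exponent, and collecting terms yields
\[ 2\exp\!\left\{-\E[e(S)]\cdot\Bigl[(\delta+C)\ln\!\left(\tfrac{\delta}{C}+1\right)-\delta\Bigr]\right\}, \]
which is the claimed bound.

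The only non-routine side issue is checking the mild hypothesis $0 \leq x \leq N$ of Theorem~\ref{thm:azuma-hoeffding}: plugging in $x = \delta\,\E[e(S)] = \delta\,\frac{k(k-1)d}{2(n-1)}$ and $N = d(k-1)$, this reduces to $\delta \leq \frac{2(n-1)}{k}$, which is comfortably true in the regime $k \leq n/100$ and $\delta$ a small constant in which Theorem~\ref{thm:small-cuts-crossing} applies the result. There is no real obstacle here: all of the structural work has been done in Lemmas~\ref{lem:martingale-diff} and~\ref{lem:martingale-quad-char}, and this lemma is essentially a bookkeeping exercise that repackages the Bennett-type martingale bound into a form parameterized by the relative deviation $\delta$ and the expectation $\E[e(S)]$.
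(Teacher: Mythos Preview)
Your proposal is correct and follows essentially the same argument as the paper: choose $x = \delta\,\E[e(S)]$ and $\nu^2 = \frac{k(k-1)d}{n-2k} = C\cdot\E[e(S)]$, use Lemma~\ref{lem:martingale-quad-char} to drop the quadratic-characteristic condition from Theorem~\ref{thm:azuma-hoeffding}, and simplify. One small caveat: your remark that ``$\delta$ a small constant'' in the intended application is not accurate, since in the proof of Lemma~\ref{lem:small-cuts-contained} one takes $\delta = (n/k - 2)\cdot 1.5/\sqrt{d}$, which can be large when $k$ is small; nonetheless the hypothesis $x \le N$ still holds there because $\delta \le \tfrac{1.5}{\sqrt d}\cdot \tfrac{n}{k} \le \tfrac{2(n-1)}{k}$ for $d \ge 1$.
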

\begin{proof}
Let $x$ and $\nu^2$ be given by the following.
\begin{align*}
x 
&= \delta \cdot \E[e(S)]
= \delta \cdot \binom{k}{2} \frac{d}{n-1} \\
\nu^2
&= \frac{k(k-1)d}{n-2k}
= \binom{k}{2} \frac{d}{n-1} \cdot \frac{2(n-1)}{n-2k}
\end{align*}

By Lemma~\ref{lem:martingale-quad-char}, we have that $\langle X \rangle_N \leq \nu^2$ with probability one. Hence
\begin{equation*}
\Pr \Big[ \lvert X_0 - X_N \rvert \geq x \textup{ and } \langle X \rangle_n \leq \nu^2 \Big]
= \Pr \Big[ \lvert X_0 - X_N \rvert \geq x \Big]
= \Pr \Big[ \big\lvert e(S) - \E [e(S)] \big\rvert \geq \delta \E[e(S)] \Big]
\end{equation*}

Applying Theorem~\ref{thm:azuma-hoeffding} for the choice of $x, \nu^2$ above then concludes with the required bound.
\end{proof}

As mentioned previously, the purpose of choosing to study edges contained entirely in a set $S$ is because the number of edges contained entirely within $S$ can be written as a sum of fewer indicator random variables than the number of edges crossing the cut $(S, V-S)$. The difference between $\binom{k}{2}$ and $k(n - k)$ is not negligible (in particular for $k$ small) and we take advantage of this by further splitting our analysis of small cuts depending on the size of $k$.

To put this into broader context, we eventually apply Lemma~\ref{lem:concentration-generic} with choice of $\delta = \big( \frac{n}{k} - 2 \big) \cdot \frac{1.5}{\sqrt{d}}$ and $C = \frac{2(n-1)}{n-2k}$. A critical point here is that one can subsequently apply tighter approximations of the exponentiated term in Lemma~\ref{lem:concentration-generic} depending on the size of $k$, or more precisely, the size of $\frac{\delta}{C}$ which grows approximately as $\frac{n}{k \sqrt{d}}$. When $k \geq \Omega(n / \sqrt{d})$, applying the following Lemma~\ref{lem:log-taylor} yields tighter concentration.

\begin{restatable}{lemma}{logtaylor}\label{lem:log-taylor}
For any $\delta, C \geq 0$ such that $\frac{\delta}{C} \leq 1$, we have that
\begin{equation*}
\big( \delta + C \big) \ln \bigg( \frac{\delta}{C} + 1 \bigg)
\geq \delta + \frac{\delta^2}{3C}
\end{equation*}
\end{restatable}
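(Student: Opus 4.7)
The plan is to reduce the inequality to a single-variable statement by substituting $x = \delta/C$. Observe that if $C = 0$ the hypothesis $\delta/C \leq 1$ forces $\delta = 0$ and both sides vanish, so we may assume $C > 0$. Dividing the desired inequality by $C$ and writing $x = \delta/C \in [0,1]$, what must be shown is
\[
(1+x)\ln(1+x) \geq x + \frac{x^2}{3}, \qquad x \in [0,1].
\]

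The cleanest route I would take is via the Taylor expansion of $\ln(1+x)$ (valid on $[0,1]$), which gives
\[
(1+x)\ln(1+x) \;=\; x + \sum_{k=2}^{\infty} \frac{(-1)^{k}}{k(k-1)} \, x^{k} \;=\; x + \frac{x^{2}}{2} - \frac{x^{3}}{6} + \frac{x^{4}}{12} - \frac{x^{5}}{20} + \cdots.
\]
Setting $a_{k} = x^{k}/(k(k-1))$, the ratio $a_{k+1}/a_{k} = x(k-1)/(k+1) < 1$ for all $x \in [0,1]$ and $k \geq 2$, so the magnitudes of the terms are strictly decreasing. The series is alternating with $a_{2}$ entering positively, so grouping the tail as $(a_{4}-a_{5}) + (a_{6}-a_{7}) + \cdots \geq 0$ yields the truncation bound
\[
(1+x)\ln(1+x) - x \;\geq\; \frac{x^{2}}{2} - \frac{x^{3}}{6} \;=\; \frac{x^{2}}{6}\,(3 - x) \;\geq\; \frac{x^{2}}{3},
\]
where the final step uses $x \leq 1$. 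Substituting $x = \delta/C$ back and multiplying by $C$ delivers the claim.

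I expect no serious obstacle; the only care needed is verifying that the alternating-series grouping is valid, which reduces to the monotonicity of $a_{k}$ checked above. A purely calculus-based backup, in case one prefers to avoid series manipulation, is to set $f(x) = (1+x)\ln(1+x) - x - x^{2}/3$ and observe $f(0) = 0$, $f'(x) = \ln(1+x) - 2x/3$ with $f'(0) = 0$, and $f''(x) = 1/(1+x) - 2/3$; since $f''$ is positive on $[0,1/2]$ and negative on $[1/2,1]$ while $f'(1) = \ln 2 - 2/3 > 0$, the function $f'$ stays non-negative on $[0,1]$, so $f \geq 0$ there.
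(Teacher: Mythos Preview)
Your proof is correct and follows essentially the same route as the paper: both reduce to the Taylor expansion $(1+x)\ln(1+x)=x+\sum_{k\geq 2}(-1)^k x^k/(k(k-1))$ and then use the alternating-series truncation $a_2-a_3=\tfrac{x^2}{2}-\tfrac{x^3}{6}\geq \tfrac{x^2}{3}$ for $x\leq 1$. Your version is in fact more careful than the paper's, which asserts the final inequality from the series without explicitly checking the monotonicity of the terms; your calculus backup is also sound.
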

Meanwhile, it is better to approximate the exponent using Lemma~\ref{lem:log-calc} below when $k \leq O(n / \sqrt{d})$. 

\begin{restatable}{lemma}{logcalc}\label{lem:log-calc}
For any $\delta \geq C \geq 1$, we have
\begin{equation*}
\big( \delta + C \big) \ln \bigg( \frac{\delta}{C} + 1 \bigg) - \delta 
\geq \frac{1}{2C} \cdot \delta \ln \delta
\end{equation*}
\end{restatable}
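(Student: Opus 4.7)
The statement is a two-variable analytic inequality on the region $\delta \geq C \geq 1$, and my plan is to recast it as a nonnegativity statement and verify it by two one-variable reductions. Define
\[
f(\delta, C) := (\delta + C)\ln\left(\frac{\delta}{C} + 1\right) - \delta - \frac{\delta \ln \delta}{2C},
\]
so that the target inequality is $f(\delta, C) \geq 0$. I would fix $C \geq 1$ and prove separately that (i) $\partial_\delta f(\delta, C) \geq 0$ for every $\delta \geq C$, and (ii) $f(C, C) \geq 0$ for every $C \geq 1$; chaining these gives $f(\delta, C) \geq f(C, C) \geq 0$ on the required region.

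For step (i), a direct computation yields $\partial_\delta f = \ln(\delta/C + 1) - (\ln \delta + 1)/(2C)$, so it suffices to check that $h(\delta) := 2C\ln(\delta/C + 1) - \ln \delta - 1$ is nonnegative on $[C, \infty)$. I would differentiate once more: $h'(\delta) = 2C/(C + \delta) - 1/\delta$, which is nonnegative exactly when $(2C - 1)\delta \geq C$, and this is immediate from $\delta \geq C \geq 1$. So $h$ is non-decreasing on $[C, \infty)$, and it remains to verify the boundary value $h(C) = 2C\ln 2 - \ln C - 1 \geq 0$. This follows by noting that $h(1) = 2\ln 2 - 1 > 0$ and that $C \mapsto 2C\ln 2 - \ln C$ has positive derivative on $[1, \infty)$ since $2\ln 2 > 1$.

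For step (ii), evaluating on the diagonal gives $f(C, C) = (2\ln 2 - 1)C - (\ln C)/2$. Differentiating in $C$, the unique critical point in $[1, \infty)$ is $C^* = 1/(2(2\ln 2 - 1)) \approx 1.296$, which satisfies $1 < C^* < e$, so the minimum over $[1, \infty)$ is attained there and equals $(1 - \ln C^*)/2 > 0$.

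I do not expect any genuine obstacle here — the entire argument is elementary calculus. The only thing requiring care is that the boundary estimate $h(C) \geq 0$ and the minimization $\min_{C \geq 1} f(C, C) > 0$ must be verified to hold uniformly in $C$, not just at the corner $C = 1$; the two short monotonicity arguments above handle this. If one wanted a more conceptual explanation, the lemma is essentially saying that for $\delta$ sufficiently large compared with $C$, the Bennett-type rate $(\delta+C)\ln(\delta/C+1) - \delta$ dominates the Poisson-like rate $\tfrac{1}{2C}\delta \ln \delta$, and this is where the switch from the Taylor-type bound of Lemma~\ref{lem:log-taylor} to the present estimate becomes profitable in the small-$k$ regime of Theorem~\ref{thm:small-cuts-crossing}.
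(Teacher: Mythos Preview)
Your proof is correct and follows essentially the same approach as the paper's: both define the same difference function $f(\delta,C)$, check that $f(C,C)\geq 0$ on the diagonal, and then show $\partial_\delta f\geq 0$ for $\delta\geq C$. The only cosmetic differences are that the paper exponentiates the condition $\partial_\delta f\geq 0$ to the equivalent form $(\delta/C+1)^{2C}\geq \delta e$ before checking it, whereas you stay in logarithmic coordinates with $h(\delta)$; and you give a more careful treatment of the diagonal by locating the minimizer $C^*=1/(2(2\ln 2-1))$, a detail the paper leaves implicit.
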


The proofs of Lemmas~\ref{lem:log-taylor} and~\ref{lem:log-calc} can be found in the \autoref{sec.analytic}. We additionally remark that though we study the number of edges contained entirely in $S$, justifying that $H$ cut sparsifies $G$ still requires computing the deviation of the number of edges crossing $(S, V-S)$. Scaling between edges contained within $S$ and crossing $(S, V-S)$ will thus explain the $\frac{n}{k}-2$ factor that appears in our choice of $\delta$. Let us now summarize the concentration bounds we use in each case via the following lemma.

\begin{lemma}\label{lem:concentration-cases}
For all sufficiently large choice of $n \geq 0$ and $d \geq 0$ constant such that given a random draw $H \sim \gReg{n}{d}$ and any $S \subset V$ such that $\lvert S \rvert = k$ where $2 \leq k \leq \frac{n}{100}$, the following statements hold 
\begin{enumerate}
\item If $\frac{\delta}{C} < 1$, then
\begin{equation}\label{eq.small-set-large-k}
\Pr \Big[ \big\lvert e(S) - \E [e(S)] \big\rvert \geq \delta \cdot \E[e(S)] \Big]
\leq 2 \exp\bigg( -\frac{2}{25} \cdot \frac{k^2d}{n} \cdot \delta^2 \bigg)
\end{equation}

\item If $\frac{\delta}{C} \geq 1$, then
\begin{equation}\label{eq.small-set-small-k}
\Pr \Big[ \big\lvert e(S) - \E [e(S)] \big\rvert \geq \delta \cdot \E[e(S)] \Big]
\leq 2 \exp\bigg( -\frac{49}{800} \cdot \frac{k^2d}{n} \cdot \delta \ln \delta \bigg)
\end{equation}
\end{enumerate}
where $C = \frac{2(n-1)}{n-2k}$
\end{lemma}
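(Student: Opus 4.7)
The plan is to substitute the explicit formulas $\E[e(S)] = k(k-1)d/(2(n-1))$ and $C = 2(n-1)/(n-2k)$ into the concentration bound of Lemma~\ref{lem:concentration-generic}, and then invoke either Lemma~\ref{lem:log-taylor} or Lemma~\ref{lem:log-calc} to bound the factor $(\delta+C)\ln(\delta/C+1)-\delta$ appearing in the exponent, depending on whether we are in the regime $\delta/C < 1$ or $\delta/C \ge 1$.

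In case 1, Lemma~\ref{lem:log-taylor} gives $(\delta+C)\ln(\delta/C+1) - \delta \ge \delta^2/(3C) = (n-2k)\delta^2/(6(n-1))$, so that the exponent is at least
\[\frac{k(k-1)d}{2(n-1)} \cdot \frac{(n-2k)\delta^2}{6(n-1)} = \frac{k(k-1)(n-2k)d\delta^2}{12(n-1)^2}.\]
Since $k \le n/100$, we have $(n-2k)/n \ge 98/100$, and for $n$ sufficiently large $n^2/(n-1)^2 = 1 + o_n(1)$. Provided $k \ge 49$, one also has $(k-1)/k \ge 48/49$, and multiplying these estimates together delivers the claimed exponent $(2/25)\, k^2 d\delta^2/n$. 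For the remaining values $k \in \{2, \ldots, 48\}$ in case 1, the constraint $\delta < C$ together with $k \le n/100$ keeps $\delta$ bounded as $n \to \infty$; hence $k^2 d \delta^2/n = O(1/n) \to 0$ at constant $d$, and the bound $2 \exp(-(2/25)\, k^2 d\delta^2/n)$ exceeds $1$, making the inequality trivial.

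In case 2, Lemma~\ref{lem:log-calc} gives $(\delta+C)\ln(\delta/C+1) - \delta \ge \delta\ln\delta/(2C) = (n-2k)\delta\ln\delta/(4(n-1))$, so the exponent is at least
\[\frac{k(k-1)d}{2(n-1)} \cdot \frac{(n-2k)\delta\ln\delta}{4(n-1)} = \frac{k(k-1)(n-2k)d\delta\ln\delta}{8(n-1)^2}.\]
Applying $k(k-1) \ge k^2/2$ (tight at $k=2$), $(n-2k)/n \ge 98/100$, and $n^2/(n-1)^2 = 1+o_n(1)$ for large $n$, this is at least
\[ \frac{1}{2}\cdot\frac{98}{100}\cdot\frac{1}{8}\cdot \frac{k^2 d\delta\ln\delta}{n} = \frac{49}{800}\cdot\frac{k^2 d\delta\ln\delta}{n},\]
exactly as claimed.

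The main obstacle is the constant bookkeeping in case 1, where the worst-case $k(k-1) \ge k^2/2$ (needed if $k=2$) falls short of producing the claimed $2/25$ factor; this is circumvented by splitting case 1 into a moderate-$k$ regime (where the direct arithmetic suffices) and a small-$k$ regime (where the claimed bound degenerates to something exceeding $1$ and hence is trivially valid for $n$ large).
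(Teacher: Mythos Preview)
Your proof follows the same route as the paper's: start from Lemma~\ref{lem:concentration-generic}, apply Lemma~\ref{lem:log-taylor} in case~1 and Lemma~\ref{lem:log-calc} in case~2, and then simplify the constants using $k \le n/100$ and $n$ large. Your extra split of case~1 into $k \ge 49$ versus $k \le 48$ is actually a refinement the paper glosses over: the paper asserts $(1-1/k)(1+1/(n-1))^2 \ge 0.99$ without comment, which fails for small $k$, whereas your observation that the right-hand side exceeds $1$ whenever $k$ and $\delta < C$ are bounded and $n$ is large cleanly fills this gap.
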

\begin{proof}
When $\frac{\delta}{C} < 1$, we can apply Lemma~\ref{lem:log-taylor} to approximate $(\delta + C) \cdot \ln \big( \frac{\delta}{C} + 1 \big)$ in Lemma~\ref{lem:concentration-generic} as follows.
\begin{align*}
\Pr \Big[ \big\lvert e(S) - \E [e(S)] \big\rvert \geq \delta \E[e(S)] \Big]
&\leq 2 \exp \bigg\{ - \E[e(S)] \cdot \bigg[ (\delta + C) \cdot \ln \bigg( \frac{\delta}{C} + 1 \bigg) - \delta \bigg] \bigg\} \\
&\leq 2 \exp \bigg\{ - \E[e(S)] \cdot \bigg[ \delta + \frac{\delta^2}{3C} - \delta \bigg] \bigg\} \\
&= 2 \exp \bigg\{ - \E[e(S)] \cdot \frac{\delta^2}{3C} \bigg\}
\end{align*}

Expanding $C$ and the expectation, we derive
\begin{align*}
\exp \bigg\{ - \E[e(S)] \cdot \frac{\delta^2}{3C} \bigg\}
&= \exp \bigg\{ - \binom{k}{2} \cdot \frac{d}{n-1} \cdot \frac{\delta^2}{3} \cdot \frac{n-2k}{2(n-1)} \bigg\} \\
&= \exp \bigg\{ - \frac{k^2d}{n} \cdot \delta^2 \cdot \frac{1}{12} \cdot \bigg( 1 - \frac{1}{k} \bigg) \cdot \bigg( 1 + \frac{1}{n-1} \bigg)^2 \cdot \bigg( 1 - \frac{2k}{n} \bigg) \bigg\} 
\end{align*}

Noticing that with large enough $n$, and as $k \leq \frac{n}{100}$, we will have that 
\begin{align*}
&\exp \bigg\{ - \frac{k^2d}{n} \cdot \delta^2 \cdot \frac{1}{12} \cdot \bigg( 1 - \frac{1}{k} \bigg) \cdot \bigg( 1 + \frac{1}{n-1} \bigg)^2 \cdot \bigg( 1 - \frac{2k}{n} \bigg) \bigg\} \\
&\leq \exp \bigg\{ - \frac{k^2d}{n} \cdot \delta^2 \cdot 0.99 \cdot \frac{1}{12} \cdot \frac{98}{100} \bigg\} \\
&\leq \exp\bigg( -\frac{2}{25} \cdot \frac{k^2d}{n} \cdot \delta^2 \bigg)
\end{align*}

as required. If $\frac{\delta}{C} \geq 1$, then we can apply Lemma~\ref{lem:log-calc} to approximate $(\delta + C) \cdot \ln \big( \frac{\delta}{C} + 1 \big)$ as follows 
\begin{equation*}
\Pr \Big[ \big\lvert e(S) - \E [e(S)] \big\rvert \geq \delta \E[e(S)] \Big]
\leq 2 \exp \bigg( - \frac{\E[e(S)]}{2C} \cdot \delta \ln \delta \bigg)
\end{equation*}

Expanding $C$ and the expectation, we derive 
\begin{align*}
\exp \bigg( - \frac{\E[e(S)]}{2C} \cdot \delta \ln \delta \bigg)
&= \exp \bigg\{ - \binom{k}{2} \cdot \frac{d}{n-1} \cdot \frac{n-2k}{2(n-1)} \cdot \delta \ln \delta \cdot \frac{1}{2} \bigg\} \\
&= \exp \bigg\{ - \frac{k^2d}{n} \cdot \delta \ln \delta \cdot \bigg( \frac{k-1}{k} \bigg) \cdot \bigg( \frac{n}{n-1} \bigg)^2 \cdot \bigg( 1 - \frac{2k}{n} \bigg) \cdot \frac{1}{8} \bigg\}
\end{align*}

Since $2 \leq k \leq \frac{n}{100}$, and for large enough $n$, we have that
\begin{align*}
\exp \bigg\{ - \frac{k^2d}{n} \cdot \delta \ln \delta \cdot \bigg( \frac{k-1}{k} \bigg) \cdot \bigg( \frac{n}{n-1} \bigg)^2 \cdot \bigg( 1 - \frac{2k}{n} \bigg) \cdot \frac{1}{8} \bigg\}
&\leq \exp \bigg\{ - \frac{k^2d}{n} \cdot \delta \ln \delta \cdot \frac{1}{2}  \cdot \frac{98}{100} \cdot \frac{1}{8} \bigg\} \\
&= \exp\bigg( -\frac{49}{800} \cdot \frac{k^2d}{n} \cdot \delta \ln \delta \bigg)
\end{align*}

as required.
\end{proof}

We now compute the probability that the number of edges contained within $S$ deviates far from its expectation. In the subsequent proof of Lemma~\ref{lem:small-cuts-contained}, the case of $\frac{\delta}{C} < 1$ is analogous to when $k \geq \Omega(n/\sqrt{d})$ while $\frac{\delta}{C} \geq 1$ corresponds to $k \leq O(n/\sqrt{d})$.

\begin{lemma}\label{lem:small-cuts-contained}
For all sufficiently large choice of $n \geq 0$ and $d \geq 0$ constant such that for $H \sim \gReg{n}{d}$ and any $S \subset V$ such that $\lvert S \rvert = k$ where $2 \leq k \leq \frac{n}{100}$, we have
\begin{equation*}
\Pr \Big[ \big\lvert e(S) - \E [e(S)] \big\rvert \geq \delta \E[e(S)] \Big]
\leq 2 \binom{n}{k}^{-1.01}
\end{equation*}
where $\delta = \big( \frac{n}{k} - 2 \big) \cdot \frac{1.5}{\sqrt{d}}$
\end{lemma}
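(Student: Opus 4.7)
The plan is to apply Lemma~\ref{lem:concentration-cases} with the specified $\delta = (n/k - 2) \cdot 1.5/\sqrt{d}$ and $C = 2(n-1)/(n-2k)$, and show that in either of the two cases the resulting exponent dominates $1.01 \ln \binom{n}{k}$. Since $\binom{n}{k} \leq (en/k)^k$, this reduces to lower-bounding the exponent by $1.01\, k (1 + \ln(n/k))$. For $2 \leq k \leq n/100$, one has the convenient simplifications $n/k - 2 \geq (49/50)(n/k)$, which yields $\delta \geq 1.47\, n/(k\sqrt d)$, and $C < 2.05$. Moreover, since $k \mapsto k \ln(en/k)$ is increasing on $(0,n)$, the target is at most $(n/100)(1 + \ln 100) < 0.06\, n$ throughout our range.

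For Case 1 ($\delta/C < 1$), Lemma~\ref{lem:concentration-cases} gives the exponent $\frac{2}{25} \cdot \frac{k^2 d}{n} \cdot \delta^2$. Substituting $\delta^2 = 2.25(n/k-2)^2/d$ collapses this cleanly to $\frac{4.5}{25} \cdot (n-2k)^2/n$, which for $k \leq n/100$ is at least $0.17\, n$. This easily exceeds the $0.06\, n$ target, completing this case.

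For Case 2 ($\delta/C \geq 1$), Lemma~\ref{lem:concentration-cases} gives the exponent $\frac{49}{800} \cdot \frac{k^2 d}{n} \cdot \delta \ln \delta$. The plan is to substitute $\delta \geq 1.47\, n/(k\sqrt d)$ into the factor of $\delta$ to obtain the lower bound $0.09\, k\sqrt{d} \ln \delta$, and separately use $\ln \delta \geq \ln(n/k) - \tfrac{1}{2}\ln d + \ln 1.47$. The Case 2 condition $\delta \geq C$ translates, using the same simplifications, into $n/k \gtrsim 1.39\, \sqrt d$, which in particular yields $\ln(n/k) \geq \tfrac{1}{2}\ln d + \ln 1.39$. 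With these bounds in hand, the desired inequality $0.09\, k \sqrt{d}\, \ln \delta \geq 1.01\, k(1 + \ln(n/k))$ reduces, after canceling $k$ and collecting terms, to a scalar inequality of the form
\[ (0.09\sqrt d - 1.01) \ln(n/k) \;\geq\; 0.045 \sqrt d \ln d - 0.035 \sqrt d + 1.01, \]
and substituting the lower bound on $\ln(n/k)$ reduces this further to roughly $0.06\sqrt d \geq 0.5 \ln d + O(1)$, which holds for $d$ sufficiently large.

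The main obstacle is Case 2: for very small $k$, both the target $k \ln(en/k)$ and the available $k\sqrt d \ln \delta$ grow logarithmically in $n/k$, so the argument needs the $\sqrt d$ enhancement to beat $\ln(n/k)$ uniformly across $k$. The Case 2 constraint $\delta \geq C$ is precisely what enforces $n/k \geq \Omega(\sqrt d)$, giving enough slack in $\ln \delta$ to absorb the $-\tfrac{1}{2}\ln d$ term and leaving a purely numerical inequality that holds for large enough constant $d$.
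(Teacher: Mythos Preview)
Your proof is correct and follows essentially the same approach as the paper: apply Lemma~\ref{lem:concentration-cases}, split into the two cases $\delta/C < 1$ and $\delta/C \geq 1$, and in each case verify that the exponent dominates $1.01\ln\binom{n}{k}$ using $\binom{n}{k} \leq (en/k)^k$. Your Case~2 bookkeeping is slightly more direct than the paper's (which routes through an intermediate $\binom{n}{k\sqrt d \cdot \text{const}}$), and the constant $1.39$ should really be about $1.33$ (from $C \geq 2(n-1)/n$ giving $n/k \geq 2\sqrt d/1.5 + 2$), but neither point affects the validity of the argument.
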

\begin{proof}
With $C = \frac{2(n-1)}{n-2k}$, suppose $\frac{\delta}{C} < 1$, expanding $\delta$ in the bound given by equation~\eqref{eq.small-set-large-k}, we have
\begin{equation*}
\Pr \Big[ \big\lvert e(S) - \E [e(S)] \big\rvert \geq \delta \E[e(S)] \Big]
\leq 2 \exp\bigg( -\frac{2}{25} \cdot \frac{k^2d}{n} \cdot \delta^2 \bigg)
= 2 \exp\bigg\{ -\frac{2}{25} \cdot \frac{k^2d}{n} \cdot \bigg( \frac{n}{k} - 2 \bigg)^2 \cdot \frac{1.5^2}{d} \bigg\}
\end{equation*}

We now demonstrate how to upper bound this quantity by $\binom{n}{k}^{-1.01}$. It is equivalent to demonstrate
\begin{equation*}
\binom{n}{k}^{1.01} \leq \exp\bigg\{ \frac{2}{25} \cdot \frac{k^2d}{n} \cdot \bigg( \frac{n}{k} - 2 \bigg)^2 \cdot \frac{1.5^2}{d} \bigg\}
\end{equation*}

Taking the natural logarithm of both sides, and performing a change of variables $\alpha = \frac{k}{n}$, we have that 
\begin{equation*}
1.01 \cdot \ln \bigg( \binom{n}{\alpha n} \bigg)
\leq \frac{2 \cdot 1.5^2}{25} \cdot n \cdot \alpha^2 \bigg( \frac{1}{\alpha} - 2 \bigg)^2
\end{equation*}

As $\ln \big( \binom{n}{\alpha n} \big) \leq n \cdot H(\alpha)$ where $H$ denotes the binary entropy function, it is sufficient to demonstrate
\begin{equation*}
H(\alpha) \leq 0.18 \cdot (1 - 2 \alpha)^2
\end{equation*}

which holds for $\alpha \leq \frac{1}{100}$. Now suppose $\frac{\delta}{C} \geq 1$. Expanding $\delta$ in equation~\eqref{eq.small-set-small-k}, we have the following 
\begin{align*}
\Pr \Big[ \big\lvert e(S) - \E [e(S)] \big\rvert \geq \delta \E[e(S)] \Big]
&\leq 2 \exp\bigg( -\frac{49}{800} \cdot \frac{k^2d}{n} \cdot \delta \ln \delta \bigg) \\
&= 2 \exp\bigg\{ -\frac{49}{800} \cdot \frac{k^2d}{n} \cdot \bigg( \frac{n}{k} - 1 \bigg) \cdot \frac{1.5}{\sqrt{d}} \cdot \ln \bigg( \bigg( \frac{n}{k} - 2 \bigg) \cdot \frac{1.5}{\sqrt{d}} \bigg) \bigg\}
\end{align*}

Notice that since $k \leq \frac{n}{100}$, we have that $1 - \frac{2k}{n} \geq \frac{98}{100}$ meaning the expression can be upper bounded by
\begin{align*}
&\exp\bigg\{ -\frac{49}{800} \cdot \frac{k^2d}{n} \cdot \bigg( \frac{n}{k} - 1 \bigg) \cdot \frac{1.5}{\sqrt{d}} \cdot \ln \bigg( \bigg( \frac{n}{k} - 1 \bigg) \cdot \frac{1.5}{\sqrt{d}} \bigg) \bigg\} \\
&\leq \exp\bigg\{ -\frac{49}{800} \cdot \frac{98}{100} \cdot 1.5 \cdot k\sqrt{d} \cdot \ln \bigg( \frac{1.5n}{k\sqrt{d}} \cdot \frac{98}{100} \bigg) \bigg\}
\end{align*}

We next claim the following intermediate upper bound.
\begin{equation*}
\exp\bigg\{ -\frac{49}{800} \cdot \frac{98}{100} \cdot 1.5 \cdot k\sqrt{d} \cdot \ln \bigg( \frac{1.5n}{k\sqrt{d}} \cdot \frac{98}{100} \bigg) \bigg\}
\leq \binom{n}{\frac{k \sqrt{d} \cdot 100e}{1.5 \cdot 98 \cdot 150}}^{-1.01}
\end{equation*}

It is again equivalent to demonstrate the following
\begin{equation*}
1.01 \cdot \ln \bigg( \binom{n}{\frac{k \sqrt{d} \cdot 100e}{1.5 \cdot 98 \cdot 150}} \bigg)
\leq \frac{49}{800} \cdot \frac{98}{100} \cdot 1.5 \cdot k\sqrt{d} \cdot \ln \bigg( \frac{1.5n}{k\sqrt{d}} \cdot \frac{98}{100} \bigg)
\end{equation*}

However, because $\ln \binom{n}{k} \leq k \ln\big( \frac{en}{k} \big)$, it suffices to show that
\begin{equation*}
1.01 \cdot \frac{k\sqrt{d}}{1.5} \cdot \frac{100}{98} \cdot \frac{e}{150} \cdot \ln \bigg( 150 \cdot \frac{1.5n}{k\sqrt{d}} \cdot \frac{98}{100} \bigg)
\leq \frac{49}{800} \cdot \frac{98}{100} \cdot 1.5 \cdot k\sqrt{d} \cdot \ln \bigg( \frac{1.5n}{k\sqrt{d}} \cdot \frac{98}{100} \bigg)
\end{equation*}

which is equivalent to
\begin{equation*}
\ln \bigg( 150 \cdot \frac{1.5n}{k\sqrt{d}} \cdot \frac{98}{100} \bigg)
\leq \frac{49}{800} \cdot \bigg( \frac{98}{100} \bigg)^2 \cdot 1.5^2 \cdot \frac{150}{1.01 \cdot e} \cdot \ln \bigg( \frac{1.5n}{k\sqrt{d}} \cdot \frac{98}{100} \bigg)
\end{equation*}

Because $7$ lower bounds the constant on the right hand side, it is enough to show
\begin{equation*}
\ln \bigg( 150 \cdot \frac{1.5n}{k\sqrt{d}} \cdot \frac{98}{100} \bigg)
\leq 7 \cdot \ln \bigg( \frac{1.5n}{k\sqrt{d}} \cdot \frac{98}{100} \bigg)
\end{equation*}

As $\frac{1.5n}{k\sqrt{d}} \cdot \frac{98}{100} \geq \delta \geq C \geq 1$, for all large enough $n$, the above holds. Finally, we show
\begin{equation*}
\binom{n}{\frac{k \sqrt{d} \cdot 100e}{1.5 \cdot 98 \cdot 150}}^{-1.01}
\leq \binom{n}{k}^{-1.01}
\end{equation*}

by choosing a $d$ large enough since $\frac{\delta}{C} \geq 1$. A choice of $d \geq \big( \frac{1.5 \cdot 98 \cdot 150}{100e} \big)^2$ suffices.
\end{proof}


\subsection{Completing the Proof}

Finishing the analysis of the small cuts regime, we now show the main result stated at the beginning of this section: the number of edges crossing $(S, V-S)$ deviates no more from its expectation than by a $\frac{1.5}{\sqrt{d}}$ factor with high probability.

\smallcuts*

\begin{proof}
Denote $\lvert S \rvert = k$. If $k=1$, then $\cut_H(S) = d$ for any random $d$ regular graph $H$ thus $\cut_H(S) - \E [\cut_H(S)] = 0$. Now consider any $2 \leq k \leq \frac{n}{100}$. Because $H$ is $d$ regular, we have 
\begin{equation*}
\cut_H(S) = kd - 2 \cdot e_H(S) 
\end{equation*}

Thus the event $\{ \lvert \cut_H(S) - \E [\cut_H(S)] \rvert \geq \frac{1.5}{\sqrt{d}} \cdot \E[\cut_H(S)] \}$ occurs if and only if
\begin{align*}
\big\lvert \cut_H(S) - \E [\cut_H(S)] \big\rvert 
&\geq \frac{1.5}{\sqrt{d}} \cdot \E[\cut_H(S)] \\
\big\lvert kd - 2 \cdot e_H(S) - \E [kd - 2 \cdot e_H(S)] \big\rvert 
&\geq \frac{1.5}{\sqrt{d}} \cdot \E[kd - 2 \cdot e_H(S)] \\
\big\lvert \E [e_H(S)] - e_H(S) \big\rvert 
&\geq \frac{1.5}{\sqrt{d}} \cdot \E[e_H(S)] \cdot \bigg( \frac{kd}{2 \E[e_H(S)]} - 1 \bigg) \\
\big\lvert e_H(S) - \E [e_H(S)] \big\rvert 
&\geq \frac{1.5}{\sqrt{d}} \cdot \E[e_H(S)] \cdot \bigg( \frac{n-1}{k-1} - 1 \bigg)
\end{align*}

Now, $\frac{n-1}{k-1} - 1 = \frac{n}{k} - \big( 1 + \frac{1}{k-1} \big) \geq \frac{n}{k} - 2$ since $k \geq 2$. The probability of the above occurring is at most 
\begin{equation*}
\Pr \bigg[ \big\lvert \cut_H(S) - \E[\cut_H(S)] \big\lvert \geq \frac{1.5}{\sqrt{d}} \cdot \E[\cut_H(S)] \bigg]
\leq \Pr \bigg[ \big\lvert e_H(S) - \E[e_H(S)] \big\lvert \geq \frac{1.5}{\sqrt{d}} \cdot \bigg( \frac{n}{k} - 2 \bigg) \cdot \E[e_H(S)] \bigg]
\end{equation*}

Applying Lemma~\ref{lem:small-cuts-contained} using $\delta = \frac{1.5}{\sqrt{d}} \cdot \big( \frac{n}{k} - 2 \big)$ implies that the right hand side is at most $o_{n}\big( \binom{n}{k}^{-1} \big)$. Performing a union bound over at most $\binom{n}{k}$ cuts of size $k$ then completes the proof.
\end{proof}

\section{Lower bounds for spectral sparsification of the clique}
\label{sec:lower-bound}

In the following, if $H=(V,E_H,w_H)$ is an undirected weighted graph and $v\in V$ is a vertex, we call the {\em combinatorial degree} of $v$ the number of edges incident on $v$, and we call the {\em weighted degree} of $v$ the sum of the weights of the edges incident on $v$. A random walk in a graph is a process in which we move among the vertices of a graph and, at every step, we move from the current node $u$ to a neighbor $v$ of the $u$ with probability proportional to the weight of the edge $(u,v)$. We will denote the \emph{complete graph} on $n$ vertices with each edge weighted $1/(n-1)$ as $\bar K_n$. The \emph{complete bipartite graph} on $n$ vertices with equal sized partitions, and each edge weighted $2/n$ is denoted as $\bar K_{n/2,n/2}$.

A {\em symmetric doubly stochastic} matrix $M$ is a non-negative matrix whose rows and columns sum to 1. In this case, the all-ones vector $\ones = (1, 1, \ldots, 1)$ is an eigenvector with eigenvalue 1, which we think of as as the {\em trivial eigenvalue} of $M$. To capture our lower bounds for sparsifiers exhibiting large ``odd pseudo-girth,'' we require considering a relaxation of double stochasticity; a matrix $M$ is \emph{$\epsilon$-almost doubly stochastic} if $M$ is a non-negative square matrix whose row- and column-sums are between $1-\epsilon$ and $1+\epsilon$.

In this section, we prove the following result.

\begin{restatable}[Lower Bound for Spectral Sparsification]{theorem}{lowerboundspectral}
\label{thm:lower-bound-spectral}
Let $H=(V,E,w)$ be a weighted graph on $n$ vertices and with $dn/2$ edges, so that $H$ has average combinatorial degree $d$.
If $H$ satisfies any of the conditions
\begin{enumerate}
    \item (weighted regular) $H$ is an $\epsilon$\! spectral sparsifier of $\bar K_n$ and its adjacency matrix is doubly stochastic,
    \item (large odd pseudo-girth) $H$ is an $\epsilon$\! spectral sparsifier of $\bar K_n$ and at most $B$ vertices of $H$ participate in odd cycles of length at most $g$ such that $B = o(n)$ and $g \leq d^{1/4}$,
    \item (bipartite) $H$ is an $\epsilon$\! spectral sparsifier of $\bar K_{n/2,n/2}$ and $H$ is bipartite.
\end{enumerate}
then $\epsilon$ must also satisfy the following.
\[
\epsilon
\ge \frac{2}{\sqrt{d}} - O \bigg( \frac{\ln d}{d^{3/4}} \bigg)  - O \bigg( \frac{d^{\sqrt{d}/(\ln d)^2}}{n\sqrt{d}} \bigg) - O \bigg( \frac{1}{\sqrt{n}} \bigg)
\]
\end{restatable}

The three classes of graph sparsifers named in the theorem come from a combination of three properties that make it convenient to apply our techniques: (1) they are almost-doubly-stochastic (2) the odd powers of their adjacency matrices have non-negative or bounded traces (3) the graph being sparsified has almost all of its eigenvalues equal (or nearly equal) to zero.

The proof of \autoref{thm:lower-bound-spectral} requires a lower bound on the trace of symmetric, (almost) doubly stochastic matrices, which also gives our ``Alon-Boppana theorem'' for symmetric, doubly stochastic matrices as a corollary.

\doublystoc*

Note that case (2) in \autoref{thm:lower-bound-spectral} implies the statement of \autoref{th.speclower} from the introduction by taking $B = \sqrt{n}$ and $g = d^{1/4}$.

\speclower*


\subsection{Outline of the Proof}
\label{sec.lboutline}

The broad approach is to lower-bound the trace of a high power of the adjacency matrix, then convert such a trace lower bound into a explicit construction of a test vector which distinguishes between the Laplacian of the sparsifier and the Laplacian of the graph being sparsified.

\subsubsection{The Trace Lower Bound}

We first establish a lower bound to the trace of a sparse (almost) doubly stochastic matrix in \autoref{sec.trace}. Our argument modifies that presented in~\cite{hoory2005lower} to handle symmetric doubly stochastic transition matrices rather than the (non-symmetric) transition matrices of random walks on irregular graphs. When $M$ is strictly doubly stochastic, our lower bound can be stated as follows.
\begin{lemma}
\label{lem:doubly-stochastic}
Let $M \in \R^{n \times n}$ be a symmetric doubly stochastic matrix with $dn$ non-zero entries, such that $M_{a,b} \le 2/\sqrt{d}$ for all $a,b \in [n]$. Then for each $k \le d^{1/4}$,
\[ \trace M^{2k} \ge n\left(\frac{2 - O(\frac{\ln d}{k})}{\sqrt{d}}\right)^{2k}. \]
\end{lemma}

The proof of \autoref{lem:doubly-stochastic} is provided in \autoref{sec.trace}. A slight modification of the argument yields a similar bound for non-negative symmetric matrices whose row- and column-sums are bounded within $1 \pm 2/\sqrt{d}$.
\begin{lemma}
\label{lem:almost-doubly-stochastic}
Let $M \in \R^{n \times n}$ be a symmetric $\frac{2}{\sqrt{d}}$-almost doubly stochastic matrix with $dn$ non-zero entries, such that $M_{a,b} \le 2/\sqrt{d}$ for all $a,b \in [n]$.
Then for each $k \le d^{1/4}$,
\[ \trace M^{2k} \ge n\left(\frac{2 - O(\frac{\ln d}{k})}{\sqrt{d}}\right)^{2k}. \]
\end{lemma}

The proof does not present a significant deviation from the techniques presented in \autoref{lem:doubly-stochastic} hence we relegate it to \autoref{sec.almost-doubly-stoc}. The argument of \autoref{lem:doubly-stochastic} involves expressing the trace of the $2k$-th power of a transition matrix as the sum of probabilities of closed walks of length $2k$.
One subset of the closed walks are those formed by $k$ steps of a \emph{non-backtracking walk}. This is a walk $w_0,\dots,w_k \in [n]$ where $w_{i-1} \ne w_{i+1}$ for all $i\in[k-1]$, followed by $k$ steps of the exact reverse walk.
Summing the probabilities of that particular \emph{shape} of walk gives a lower bound of $n/d^k$ for the trace of the $2k$-th power of $M$, which falls short by a factor of $4^k$.
To recover that factor of $4^k$, we consider also other shapes of closed walks, so we will take an interlude in \autoref{sec:walks} to introduce these ``shapes'' along with requisite notation.

\subsubsection{Test Vector}

The condition that $H$ is an $\epsilon$\! spectral sparsifier of $G$ can be written as
\begin{equation*}
\forall x \in \R^V \ \ \
(1-\epsilon) x^{\T} L_{G} x
\leq x^{\T} L_{H} x
\leq (1+\epsilon) x^{\T} L_{G} x
\end{equation*}
or equivalently
\begin{equation*}
\forall x \in \R^V \ \ \
(1-\epsilon) \iprod{ xx^{\T}, L_{G} }
\leq \iprod{ xx^{\T}, L_{H} }
\leq (1+\epsilon) \iprod{ xx^{\T}, L_{G} }
\end{equation*}
where $\iprod{A, B} = \sum_{i,j} A_{i,j} B_{i,j}$ denotes the Frobenius inner product between real-valued, square matrices. Because positive semidefinite matrices $X \succeq \mathbf{0}$ are convex combinations of rank-1 symmetric matrices of the form $xx^{\T}$, we can write the condition that $H$ $\epsilon$\! spectrally sparsifies $G$ as
\begin{equation*}
\forall X \succeq \mathbf{0} \ \ \
(1-\epsilon) \iprod{ X, L_{G} }
\leq \iprod{ X, L_{H} }
\leq (1+\epsilon) \iprod{ X, L_{G} }
\end{equation*}

Hence to prove \autoref{thm:lower-bound-spectral}, we look for a positive semidefinite matrix $X \succeq \mathbf{0}$ for which $\iprod{ X, L_{G} }$ is noticeably different from $\iprod{ X, L_{H} }$. This approach is equivalent to the approach of considering probability distributions over test vectors $x$ which is taken in \cite{ST18}.

Our lower bound in \autoref{lem:almost-doubly-stochastic} does not immediately imply lower bounds on the approximation error for sparsifiers with almost doubly stochastic adjacency matrices because lower bounding error density for spectral sparsification requires lower bounding non-trivial eigenvalues of the difference between $L_H$ and $L_G$. Thus, we use \autoref{lem.almostdoublystoc}, proven in \autoref{sec.matrix}, to establish a link between non-trivial eigenvalues of $A_H$ and $L_H - L_G$. This lemma demonstrates the existence of an explicit matrix $X$ with $M - D + I$ where $M$ is almost doubly stochastic, and $D$ is the diagonal matrix of $M$'s row sums.

\begin{lemma}
\label{lem.almostdoublystoc}
Let $M$ be an $n\times n$ symmetric $2/\sqrt{d}$-almost doubly stochastic matrix with $dn$ non-zero entries, and $D$ be the diagonal matrix where $D_{i,i} = \sum_{j \in [n]}M_{i,j}$ for all $i \in [n]$. Suppose there exists $\gamma > 0$, such that for all $k \leq d^{1/4}$, $M$ satisfies the following.
\begin{equation*}
\frac{ \iprod{M^{2k-1},D-I}}{\trace M^{2k}} \le \frac{\sqrt{d}}{n^{\gamma}}
\end{equation*}
Then for any $0 < \delta \le 1/2$, if $S \subseteq \R^n$ is a subspace with $\dim S \le n^{\delta}$ and $\lVert v \rVert_{\infty} / \lVert v \rVert_2 \leq 1 / \big( n^{1/4 + \delta/2}\big)$ for all $v \in S$, there exists an $X \succeq \mathbf{0}$ satisfying $S \subseteq \ker X$ so that
\[
\frac{|\iprod{X, M-D+I}|}{\trace X}
\ge \frac{2}{\sqrt d} - O \bigg( \frac{\ln d}{d^{3/4}} \bigg) - O\bigg( \frac{d^{\sqrt{d} / (\ln d)^2}}{n^{1-\delta} \cdot \sqrt{d}}\bigg) - O\left(\frac{1}{n^{\delta}}\right) - O\left(\frac{1}{n^{\gamma}}\right).
\]
\end{lemma}

This immediately gives \autoref{th.speclower} since $M - D + I = M$ when $M$ is strictly doubly stochastic. The proof of \autoref{thm:lower-bound-spectral} will note that when $M = A_H$, we have $D = D_H$, and $\iprod{X, M - D + I} = \iprod{X, L_H - L_G}$ for all $X \succeq \mathbf{0}$ such that $\ker A_G \subseteq \ker X$ where $G$ is either $\bar K_n$ or $\bar K_{n/2, n/2}$. We give this proof at the end, in \autoref{sec.final-sep}.

\subsubsection{Technical Caveats}

Towards proving \autoref{thm:lower-bound-spectral}, we assume without loss of generality that the weighted degree of any vertex in $H$ is within $1 \pm 2/\sqrt{d}$. If this does not hold then a simpler proof leads to the conclusion of \autoref{thm:lower-bound-spectral}.

\begin{restatable}{lemma}{smallwtddegree}
\label{lem.doublystoc-smallwtddegree}
Let $H=(V,E,w)$ be a weighted graph on $n$ vertices.
Let $\bar K_n$ be a clique on $V$ with every edge weighted $1/(n-1)$.
Let $\bar K_{n/2,n/2}$ be a complete bipartite graph on $V$ with every edge weighted $2/n$.
Suppose $H$ is an $\epsilon$\! spectral sparsifier of either $\bar K_n$ or $\bar K_{n/2, n/2}$. If there exists a vertex such that its weighted degree is either larger than $1 + 2/\sqrt{d}$ or smaller than $1 - 2/\sqrt{d}$, then
\begin{equation*}
\epsilon \geq \frac{2}{\sqrt{d}}
\end{equation*}
\end{restatable}

This is similar to assumptions made by~\cite{ST18} on the structure of the sparsifier, except that ours additionally apply to sparsifying the complete bipartite graph. This also motivates why we choose to analyze $2/\sqrt{d}$-almost doubly stochastic matrices. If each vertex of a certain graph has weighted degree bounded by $1 \pm 2/\sqrt{d}$, then its weighted adjacency matrix is also $2/\sqrt{d}$-almost doubly stochastic. The proof of \autoref{lem.doublystoc-smallwtddegree} is given in \autoref{sec.sparsifier-assumptions}.

A technical caveat required by the proof of \autoref{lem.almostdoublystoc} is that one can assume that the entries of $M_{a,b} \leq \frac{2}{\sqrt{d}}$. Otherwise, a simpler choice of test vector yields the conclusion of \autoref{lem.almostdoublystoc}. Its proof is also in \autoref{sec.sparsifier-assumptions}.

\begin{restatable}{lemma}{smalledgewt}
\label{lem.doublystoc-smalledgewt}
Let $M$ be an $n\times n$ symmetric $2/\sqrt{d}$-almost doubly stochastic matrix, and $D$ be the diagonal matrix where $D_{i,i} = \sum_{j \in [n]}M_{i,j}$ for all $i \in [n]$. For any $0 < \delta \le 1/2$, let $S \subseteq \R^n$ be a subspace with $\dim S \le n^{\delta}$ and $\lVert v \rVert_{\infty} / \lVert v \rVert_2 \leq 1 / \big( n^{1/4 + \delta/2}\big)$ for all $v \in S$. If there are $a, b \in [n]$ such that $M_{a, b} \geq \frac{2}{\sqrt{d}}$, then there is an $X \succeq \mathbf{0}$ satisfying $S \subseteq \ker X$ so that
\[ \frac{\iprod{X,M-D+I}}{\trace X} \ge \frac{2}{\sqrt{d}} -  O \bigg( \frac{1}{n^\delta} \bigg). \]
\end{restatable}


\subsection{Notation for walks}
\label{sec:walks}

\begin{figure}
    \centering
    \begin{tikzpicture}[y=.7cm, x=.7cm,font=\sffamily,scale=1.25, every node/.style={scale=1.25}]
    \tikzset{vertex/.style={circle,fill=white,draw,inner sep=0pt,outer sep=0pt,minimum width=1.5em}}
    \tikzset{decrea/.style={regular polygon,regular polygon sides=4,fill=white,draw=red,inner sep=0pt,outer sep=0pt,minimum width=1.8em}}
	\draw (0,0) -- coordinate (x axis mid) (6,0);
    	\draw (0,0) -- coordinate (y axis mid) (0,2.5);
    	\foreach \x in {0,...,6}
     		\draw (\x,1pt) -- (\x,-5pt)
			node[anchor=north,below=0.1cm] {\x};
    	\foreach \y in {0,1,...,2}
     		\draw (1pt,\y) -- (-5pt,\y)
     			node[anchor=east,left=0.1cm] {\y};
	\node[below=0.7cm] at (x axis mid) {$i$};
	\node[above=1.2cm,left=0.0cm] at (y axis mid) {$\tau_i$};

	\node[vertex] at (0,0) (w0) {$\alpha$};
	\node[vertex] at (1,1) (w1) {$\beta$} edge [-] (w0);
	\node[vertex] at (2,2) (w2) {$\gamma$} edge [-] (w1);
	\node[decrea] at (3,1) (w3) {$\beta$} edge [-] (w2);
	\node[vertex] at (4,2) (w4) {$\delta$} edge [-] (w3);
	\node[decrea] at (5,1) (w5) {$\beta$} edge [-] (w4);
	\node[decrea] at (6,0) (w6) {$\alpha$} edge [-] (w5);

	\node[inner sep=0pt,outer sep=0pt] at (2.75,3.2) (labela) {\small $a(\tau,\omega,3)$} edge [->,shorten >=4pt,{>=stealth}] (w3);
	\node[inner sep=0pt,outer sep=0pt] at (5,3.5) (labelb) {\small $b(\tau,\omega,3)$} edge [->,shorten >=4pt,>=stealth] (w4);

	\node[vertex] at (8.55,1.3) (alpha) {$\alpha$};
	\node[vertex] at (10,1.3) (beta) {$\beta$} edge [-] (alpha);
	\node[vertex] at (11,2.3) (gamma) {$\gamma$} edge [-] (beta);
	\node[vertex] at (11,0.3) (delta) {$\delta$} edge [-] (beta);

    \end{tikzpicture}
    \caption{
    An example instantiation (left) of a Dyck path $\tau = [0, 1, 2, 1, 2, 1, 0]$ specified by the set $\omega = [\alpha,\beta,\gamma,\delta]$.
    Here the walk is $w = W(\tau,\omega) = [\alpha, \beta, \gamma, \beta, \delta, \beta, \alpha]$.
    Increasing steps and the $0$th step are drawn as circles while decreasing steps are portrayed as red squares.
    Note that the increasing and $0$th steps correspond to the components of $\omega$, and that the decreasing steps' instantiations are determined by looking to the left on the chart.
    The endpoints $a(\tau,\omega,3)$ and $b(\tau,\omega,3)$ of the third increasing step are labelled.
    This is a non-backtracking instantiation because $w_{2} = \gamma \ne \alpha = w_0$ and $w_4 = \delta \ne \alpha = w_0$.
    We may imagine $w$ as a walk over a graph (right) on the vertex set $\{\alpha,\beta,\gamma,\delta\}$. }
    \label{fig:walks}
\end{figure}
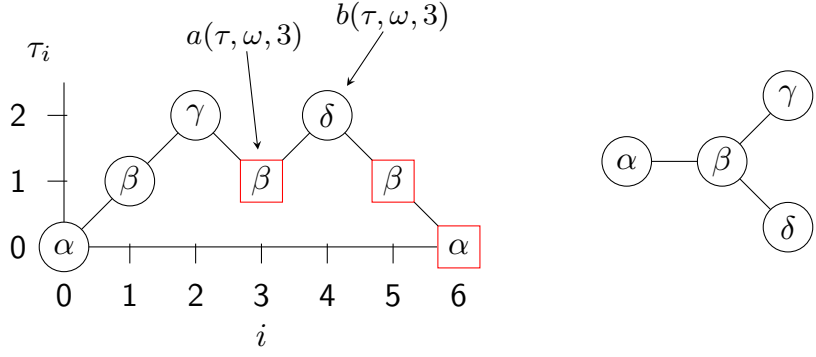

Let $\cT_k \subseteq \N^{2k+1}$ be the set of zero-indexed Dyck paths of length $2k$, so that if $\tau \in \cT_k$, then $\tau_0 = 0$ and $\tau_{2k} = 0$ and $\tau_i \ge 0$ for all $i$ and $|\tau_i - \tau_{i-1}| = 1$ for all $i \ge 1$.
An \emph{increasing step} of $\tau$ is an index $i$ such that $\tau_{i} = \tau_{i-1} + 1$ and a \emph{decreasing step} is $i$ where $\tau_{i} = \tau_{i-1} - 1$. The index $0$ is neither increasing nor decreasing.

We consider ways to \emph{instantiate} the Dyck path $\tau \in \cT_k$ as a walk over $[n]$.
These instantiations are specified by a zero-indexed vector $\omega$ in $[n]^{k+1}$.
The instantiation specified by $\omega$ is given by the walk
\[ W(\tau,\omega) = (w_0, \dots, w_{2k}) \in [n]^{2k+1}\]
so that $w_0 = \omega_0$, and if $i$ is the $j$th increasing step of $\tau$ then $w_{i} = \omega_j$,
and if $i$ is a decreasing step then $w_{i} = w_j$ where in this case, $j$ is the largest index such that $j < i$ and $\tau_{i} = \tau_j$.
An example instantiation is given in \autoref{fig:walks}.

So we interpret this as a walk that advances according to a stack of state transitions: for every increasing step of $\tau$, a new element of $[n]$ from the list $\omega$ is advanced to, and the corresponding transition is added to the stack.
For every decreasing step of $\tau$, we reverse the transition on the top of the stack.
In this way, each transition is taken an even number of times (once forwards and once backwards), and we always end up where we started.

In a \emph{non-backtracking instantiation} of $\tau$, we additionally have the requirement that $w_{i} \ne w_j$ for all increasing steps $i$ where $j$ (if it exists) is the largest index satisfying $j < i$ and $\tau_{j} = \tau_{i} - 2$.
Meaning that on a increasing step of $\tau$, we are not allowed to advance to the same state that we would retreat to if we were on a decreasing step instead.
Let $\Omega_{\tau}$ be the set of all $\omega$ so that $W(\tau,\omega)$ is a non-backtracking instantiation of $\tau$.

Let $a(\tau,\omega,j) \in [n]$ be the origin of the $j$th increasing step of $\tau$ as instantiated in $W(\tau,\omega)$, and let $b(\tau,\omega,j) \in [n]$ be its destination.
More precisely, if $i$ is the index of the $j$th increasing step of $\tau$, then
\[ a(\tau,\omega,j) = W(\tau,\omega)_{i-1} \text{\quad and \quad} b(\tau,\omega,j) = W(\tau,\omega)_{i}.\]

For $w \subseteq [n]^{2k+1}$, let
\[ p(M,w) = \prod_{i \in [2k]} M_{w_{i-1},w_{i}},\]
so that $p(M,w)$ is the probability. when starting at state $w_0$, of transitioning under the probability matrix $M$ to each of the successive states in $w$ in the next $2k$ transitions.

Note that
\[ p(M,W(\tau,\omega)) = \prod_{i \in [k]} M_{a(\tau,\omega,i),b(\tau,\omega,i)}^2\]
when $M$ is symmetric, and let
\[ q(M, \tau, \omega) = \prod_{i \in [k]} M_{a(\tau,\omega,i),b(\tau,\omega,i)},\]
so that $q(M,\tau,\omega)$ is the combined probability according to $M$ of each of the increasing steps of $\tau$ as instantiated by $\omega$, when conditioned on all of the deceasing steps occurring according to $\tau$.
Then we have the identity
\[ p(M,W(\tau,\omega)) = q(M, \tau, \omega)^2.\]


\subsection{The Trace Lower Bound}
\label{sec.trace}

Before proving \autoref{lem:doubly-stochastic}, we introduce a technical lemma which generalizes the idea that in a graph where all edge weights are small, most short walks (weighted by probability) are non-backtracking walks.

\begin{lemma}
\label{lem:Z}
Let $M \in \R^{n \times n}$ be a symmetric, doubly stochastic matrix such that no entry is greater than $2/\sqrt{d}$.
For $\tau \in \cT_k$, let
\[ Z = \frac{1}{n}\sum_{\omega \in \Omega_{\tau}} q(M, \tau, \omega).\]
Then
\[ Z \ge 1 - \frac{2k}{\sqrt{d}}. \]
\end{lemma}
\begin{proof}
Let $\Omega_{\tau,j} \subseteq [n]^{j+1}$ be the set of truncations of elements of $\Omega_{\tau}$ to lists of length $j+1$ instead of $k+1$.
Let $q(M,\tau,\omega,j) = \prod_{i \in [j]} M_{a(\tau,\omega,i),b(\tau,\omega,i)}$ be the corresponding truncation of $q(M,\tau,\omega)$.
We set up an induction with the inductive hypothesis that
\[ \frac{1}{n}\sum_{\omega \in \Omega_{\tau,j+1}} q(M, \tau, \omega, j+1) \ge 1-\frac{2j}{\sqrt{d}}. \]

In the base case where $j = 0$, there are $n$ elements of $\Omega_{\tau,1}$, and $q(M,\tau,\omega,0)$ is an empty product, so the sum is $n$.

In the inductive step, we partition $\Omega_{\tau,j+1}$ into sets $S_{\omega}$ for $\omega \in \Omega_{\tau,j}$, so that $S_{\omega}$ is the subset of $\Omega_{\tau,j+1}$ with $\omega$ as a prefix. Then for each $\omega$, there is a single $z_{\omega} \in [n]$ so that the concatenation $\omega \circ (z_{\omega})$ of $z$ onto $\omega$ is not the prefix of any $\omega'$ that would make $W(\tau,\omega')$ a non-backtracking walk, in other words $\omega \circ (z_{\omega}) \not\in S_{\omega}$.
So
\[
\sum_{s \in S_{\omega}} q(M, \tau, s, j+1) = \sum_{x \ne z_{\omega}} q(M,\tau,\omega,j) M_{a(\tau,\omega,j),x}
\]
Since $\sum_{x \in [n]} M_{y,x} = 1$ for all $y$ and the entries of $M$ are at most $2/\sqrt{d}$,
\[
\sum_{s \in S_{\omega}} q(M, \tau, s, j+1) = (1 - M_{a(\tau,\omega,j),z})\, q(M,\tau,\omega,j) \ge \left(1-\frac{2}{\sqrt{d}}\right) q(M,\tau,\omega,j)
\]
So since $\{S_{\omega}\}$ is a partition of $\Omega_{\tau,j+1}$,
\begin{align*}
\frac{1}{n}\sum_{\omega \in \Omega_{\tau,j+1}} q(M, \tau, \omega,j+1)
&= \frac{1}{n}\sum_{\omega \in \Omega_{\tau,j}} \sum_{s \in S_{\omega}} q(M, \tau, s, j+1)
\\&\ge \left(1-\frac{2}{\sqrt{d}}\right)\frac{1}{n}\sum_{\omega \in \Omega_{\tau,j}} q(M, \tau, \omega, j)
\\&\ge \left(1-\frac{2}{\sqrt{d}}\right)\left(1-\frac{2(j-1)}{\sqrt{d}}\right)
\\&\ge 1 - \frac{2j}{\sqrt{d}}. \qedhere
\end{align*}
\end{proof}

We are now ready to prove \autoref{lem:doubly-stochastic}.

\begin{proof}[Proof of \autoref{lem:doubly-stochastic}]
From the definition of trace,
\[ \trace M^{2k} = \sum_{v \in [n]} \;\sum_{\substack{w_0, \dots, w_{2k} \in [n] \\ w_0 = w_{2k} = v}} \; \prod_{j \in [2k]} M_{i_{j-1},i_j}
= \sum_{v \in [n]} \;\sum_{\substack{w_0, \dots, w_{2k} \in [n] \\ w_0 = w_{2k} = v}} \; p(M,w). \]
Each summand corresponds to the probability of a closed walk starting at $v$ and ending at $v$.
Each summand is also non-negative, and therefore, we can bound the sum by a subset of the walks:
\begin{equation}
\label{eq:catalan-subset}
    \frac{1}{n}\trace M^{2k} \ge \frac{1}{n}\sum_{\tau \in \cT_k} \sum_{\omega \in \Omega_{\tau}} p(M, W(\tau, \omega)).
\end{equation}

We would now like to understand the sum for each fixed $\tau \in \cT_k$.
Let
\[ S_{\tau} = \frac{1}{n}\sum_{\omega \in \Omega_{\tau}} p(M, W(\tau, \omega)) \]
and let $Z = \tfrac{1}{n}\sum_{\omega \in \Omega_{\tau}} q(M, \tau, \omega)$.
By the weighted arithmetic-mean--geometric-mean inequality with weights $\frac{1}{Zn}q(M,\tau,\omega)$ and the fact that $p(M,W(\tau,\omega)) = q(M, \tau, \omega)^2$,
\[
S_{\tau}
= Z \sum \left(\frac{1}{Zn}q(M,\tau,\omega)\right) \cdot q(M, \tau, \omega)
\ge
Z\prod_{\omega \in \Omega_{\tau}} q(M, \tau, \omega)^{q(M, \tau, \omega)/(nZ)}.
\]
By expanding the definition of $q(M, \tau, \omega)$,
\[S_{\tau} \ge Z\prod_{\omega \in \Omega_{\tau}} \prod_{i \in [k]} M_{a(\tau,\omega,i),b(\tau,\omega,i)}^{q(M, \tau, \omega)/(nZ)}.
\]
Letting $\#(e,\tau,\omega)$ be the number of different $i$ for which $e = (a(\tau,\omega,i),b(\tau,\omega,i))$,
\[S_{\tau} \ge Z\prod_{\omega \in \Omega_{\tau}} \prod_{e \in [n] \times [n]} M_{e}^{q(M, \tau, \omega)\#(e,\tau,\omega)/(nZ)}.
\]
We can lower bound the above by replacing the $\Omega_{\tau}$ in the first product by the larger set $[n]^{k+1}$, because every new factor that's added is at most $1$:
\[ S_{\tau} \ge Z\prod_{\omega \in [n]^{k+1}} \prod_{e \in [n] \times [n]} M_{e}^{q(M, \tau, \omega)\#(e,\tau,\omega)/(nZ)}.
\]
Thus, carrying the first product into the exponent,
\begin{equation}
\label{eq:time-for-stationarity}
S_{\tau} \ge
Z\prod_{e \in [n] \times [n]} M_{e}^{\sum_{\omega \in [n]^{k+1}} q(M, \tau, \omega)\#(e,\tau,\omega)/(nZ)}.
\end{equation}
Now we would like to understand the quantity in the exponent.

We'll denote the truncation of $q(M,\tau,\omega)$ to the product of the probabilities of the first $j$ increasing steps as
\[ q(M,\tau,\omega,j) = \prod_{i \in [j]} M_{a(\tau,\omega,i),b(\tau,\omega,i)}.\]
So that, letting $\Ind$ be the indicator function so that $\Ind[P] = 1$ if the proposition $P$ is true and $0$ otherwise,
\[ q(M, \tau, \omega)\#(e,\tau,\omega) = \sum_{j \in [k]} q(M,\tau,\omega,j)\Ind[e = (a(\tau,\omega,j),b(\tau,\omega,j))].\]
We characterize $q(M,\tau,\omega,j)$ as the probability of the first $j$ elements of $\omega$ being the first $j$ states of a process that starts at the state $\omega_0 \in [n]$ and then, at the $i$th step, either
\begin{itemize}
    \item advances to a new random state $v_i$ with probability $M_{v_{i-1},v_i}$ if $i$ is an increasing step of $\tau$, or
    \item if instead $i$ is a decreasing step of $\tau$, retreats to some predetermined previous state $v_{i'}$ where $i'$ is a function of $\tau$ and $i$ only, and not a function of $v$, and then advances to a new random state $v_i$ with probability $M_{v_{i'},v_i}$.
\end{itemize}
The distribution that assigns each state $v \in [n]$ equal probability $1/n$ is a stationary distribution at each step of this process: its stationarity for increasing steps is an immediate consequence of $M$ being doubly stochastic as a transition matrix, and \emph{any} distribution would be stationary for the steps where we jump back to a previous state.
The probability of this process taking a particular transition $e$ as its $j$th increasing step after starting in its stationary distribution is given by
\[\frac{1}{n}\sum_{\omega \in [n]^{k+1}} q(M,\tau,\omega,j)\Ind[e = (a(\tau,\omega,j),b(\tau,\omega,j))],\]
and this probability is also equal to $M_e/n$ due to the stationary distribution assigning equal probability to all states in $[n]$.
Therefore the exponent in \eqref{eq:time-for-stationarity} is
\begin{align*}
\frac{1}{n}\sum_{\omega \in [n]^{k+1}} q(M, \tau, \omega)\#(e,\tau,\omega)
&= \frac{1}{n}\sum_{\omega \in [n]^{k+1}} \sum_{j \in [k]} q(M, \tau, \omega,j)\Ind[e = (a(\tau,\omega,j),b(\tau,\omega,j))]
\\&= \frac{1}{n}\sum_{j \in [k]} M_e
\\&= \frac{kM_e}{n}.
\end{align*}

Plugging this into \eqref{eq:time-for-stationarity}, we have
\[ S_{\tau} \ge
Z\prod_{e \in [n] \times [n]} M_{e}^{kM_e/(nZ)}
\ge Z\left(\prod_{e \in [n] \times [n]} M_{e}^{M_e}\right)^{k/(nZ)}.\]
Since $f(x) = x^x$ is log-convex, we have $\prod_{i=1}^{N} x_i^{x_i} \ge \bar{x}^{N\bar{x}}$ for any sequence $x_1, \dots, x_N$, with average value $\bar{x} = \frac{1}{N}\sum_{i=1}^N x_i$.
Applying this to the $dn$ non-zero entries $M_e$ with average value $1/d$, we find
\[ S_{\tau}
\ge Z(\bar{M_e}^{nd\bar{M_e}})^{k/(nZ)} = \frac{Z}{d^{k/Z}}\]
Now plugging this into \eqref{eq:catalan-subset} along with the definition of $S_{\tau}$,
\[ \frac{1}{n}\trace M^{2k} \ge \sum_{\tau \in \cT_k} S_{\tau} \ge \frac{Z|\cT_k|}{d^{k/Z}}. \]
The number $|\cT_k|$ of different Dyck paths of length $2k$ is given by the $k$th Catalan number, equal to $\frac{1}{k+1}\binom{2k}{k}$, which is at least $4^k/((k+1)\sqrt{\pi (k+1/2)}) \ge (1 - O(k^{-5/2}))\cdot 4^k/(k\sqrt{\pi k})$ by Stirling's approximation.
Therefore,
\[ \frac{1}{n}\trace M^{2k} \ge (1-O(k^{-5/2}))\cdot\frac{4^k Z}{k\sqrt{\pi k}\,d^{k/Z}}.\]
By \autoref{lem:Z},
\[ \frac{1}{n}\trace M^{2k} \ge \left(1-\frac{O(1)}{k^2\sqrt{k}} - \frac{2k}{\sqrt{d}}\right)\frac{4^k}{k\sqrt{\pi k}\,d^{k/(1 - \frac{2k}{\sqrt{d}})}}.\]
Assuming now that $k \le \sqrt[4]{d}$, and $(\ln d)^2 \ge 4$ and $d \ge 4^4$,
\begin{align*}
\frac{1}{n}\trace M^{2k}
&\ge \left(1-o_k(1)\right)\frac{4^k}{k\sqrt{\pi k}\,d^{k(1 + \frac{4k}{\sqrt{d}})}}
\\&= \left(1-o_k(1)\right) e^{-\frac{4k^2\ln d}{\sqrt{d}}}\frac{4^k}{k\sqrt{\pi k}\,d^k}
\\&\ge \left(\frac{1-o(1)}{e\sqrt{\pi}\,k\sqrt{k}}\right)\frac{4^k}{d^{k-4}}
\\&\ge \left(e^{-(3\ln k +1+o(1))/(4k)}\cdot e^{-2(\ln d)/k}\cdot\frac{2}{\sqrt{d}}\right)^{2k}
\\&\ge \left(\left(1-\frac{3\ln k + 1 + 8\ln d + o(1)}{4k}\right)\frac{2}{\sqrt{d}}\right)^{2k}
\\&\ge \left(\frac{2 - \frac{8\ln d + 3\ln k + 1 + o(1)}{2k}}{\sqrt{d}}\right)^{2k}
. \qedhere
\end{align*}
\end{proof}


\subsection{The Test Vector}
\label{sec.matrix}

Let us now prove \autoref{lem.almostdoublystoc}.

\begin{proof}[Proof of \autoref{lem.almostdoublystoc}]
We can assume without loss of generality that $M_{a,b} \leq \frac{2}{\sqrt{d}}$ for all $a, b \in [n]$ as otherwise, \autoref{lem.doublystoc-smalledgewt} implies the theorem statement. Hence, we may apply \autoref{lem:almost-doubly-stochastic} and find that any choice of $k \leq d^{1/4}$ admits $\trace M^{2k} \geq n \cdot \Big( \big(2 - O(\frac{\ln k}{k}) \big) / \sqrt{d} \Big)^{2k}$. However, $\trace M^0 = n$ and as
\begin{align*}
\trace M^{2k}
&= \frac{\trace M^{2k}}{\trace M^{2k - 2}}
\cdot \frac{\trace M^{2k - 2}}{\trace M^{2k - 4}}
\cdot \ldots
\cdot \frac{\trace M^{2}}{\trace M^{0}}
\cdot \trace M^0 \\
&= \frac{\trace M^{2k}}{\trace M^{2k - 2}}
\cdot \frac{\trace M^{2k - 2}}{\trace M^{2k - 4}}
\cdot \ldots
\cdot \frac{\trace M^{2}}{\trace M^{0}}
\cdot n \\
&\geq \bigg( \frac{2 - O\left(\frac{\ln k}{k}\right)}{\sqrt{d}} \bigg)^{2k} \cdot n
\end{align*}
there must exist at least one choice of $\ell$ such that
\begin{equation*}
\trace M^{2\ell} \geq \left( \left(2 - O\left(\frac{\ln k}{k}\right) \right) / \sqrt{d} \right)^2 \cdot \trace M^{2 \ell - 2}
\end{equation*}
Furthermore, choosing the smallest such $\ell$ guarantees that $\trace M^{2\ell} \geq \Big( \big( 2 - O(\frac{\ln k}{k}) \big) / \sqrt{d} \Big)^{2k} n$, since then
\[ \trace M^{2\ell} = \frac{\trace M^{2k}}{\prod_{\ell < i \le k} (\trace M^{2i}/\trace M^{2i-2})} \ge \frac{\Big( \big( 2 - O(\frac{\ln k}{k}) \big) / \sqrt{d} \Big)^{2k} n}{ \prod_{\ell < i \le k} 1}.\]
Hence, let $a \geq \big( 2 - O(\frac{\ln k}{k}) \big)/\sqrt{d}$ such that $\trace M^{2\ell} = a^2 \cdot \trace M^{2 \ell - 2}$ and consider the PSD matrices:
\begin{equation*}
X = (M^\ell + a M^{\ell - 1})^2
\qquad\qquad\qquad
Y = (M^\ell - a M^{\ell - 1})^2
\end{equation*}

Let $\epsilon > 0$ be the largest value such that for all $Z \succeq \mathbf{0}$ with $S \subseteq \ker Z$,
\begin{equation*}
|\langle Z, M - D + I \rangle| \leq \epsilon \cdot \trace Z.
\end{equation*}
Let $\Pi^{\perp}$ be the orthogonal projector away from $S$, so that $S \subseteq \ker \Pi^{\perp} X \Pi^{\perp}$ and $S \subseteq \ker \Pi^{\perp} Y \Pi^{\perp}$.
Applying the above equation to both matrices admits the following.
\begin{equation*}
\iprod{\Pi^{\perp} X \Pi^{\perp}, M - D + I}
\leq \epsilon \cdot \trace \big( \Pi^{\perp} X \Pi^{\perp} \big)
\qquad\qquad
-\iprod{\Pi^{\perp} Y \Pi^{\perp}, M - D + I}
\leq \epsilon \cdot \trace \big( \Pi^{\perp} Y \Pi^{\perp} \big)
\end{equation*}
Adding these inequalities together derives
\begin{align*}
\iprod{\Pi^{\perp} (X - Y) \Pi^{\perp}, M - D + I}
\leq \epsilon \cdot \trace \Big( \Pi^{\perp} \big( X + Y \big) \Pi^{\perp} \Big)
\end{align*}
We will then note that
\begin{equation*}
X - Y
= \big( M^\ell + a M^{\ell - 1} \big)^2 - \big( M^\ell - a M^{\ell - 1} \big)^2
= 4a M^{2\ell - 1}
\end{equation*}
\begin{equation*}
X + Y
= \big( M^\ell + a M^{\ell - 1} \big)^2 + \big( M^\ell - a M^{\ell - 1} \big)^2
= 2 (M^{2\ell} + a^2 M^{2\ell - 2})
\end{equation*}
and thus
\begin{equation}
\label{eq:sumdifference}
4a \cdot \iprod{\Pi^{\perp} M^{2\ell - 1} \Pi^{\perp}, M - D + I}
\leq 2\epsilon \cdot \trace \Big( \Pi^{\perp} \big( M^{2\ell} + a^2 M^{2\ell - 2} \big) \Pi^{\perp} \Big).
\end{equation}

Now for $M$ that is $\epsilon$-almost doubly stochastic, we have $\big\lvert \ones^{\T} M \ones - \ones^{\T} I \ones \big\rvert \leq \epsilon n$. Denoting $\lambda$ as the Perron--Frobenius eigenvalue of $M$, we have $1 - \epsilon \leq \lambda \leq 1 + \epsilon$. Since $M$ is $2/\sqrt{d}$-almost doubly stochastic, $\|M\| \le (1+2/\sqrt{d})$. Therefore, letting $r = \dim S$ and $v_1, \dots, v_{r}$ be an orthonormal basis for $S$, then using linearity of and the cyclic property of the trace,
\begin{align*}
\iprod{\Pi^{\perp} M^{2\ell - 1} \Pi^{\perp}, M}
&{}= \iprod{M^{2\ell - 1}, M} - 2\sum_{i\in [r]}\iprod{v_iv_i^{\T} \cdot M^{2\ell - 1}, M} + \sum_{i\in [r]}\iprod{v_iv_i^{\T} \cdot M^{2\ell - 1} \cdot v_iv_i^{\T}, M}
\\&{}= \trace M^{2\ell} - 2\sum_{i\in [r]} v_i^{\T} M^{2\ell} v_i + \sum_{i\in [r]} \big( v_i^{\T} M^{2\ell - 1} v_i \big) \big( v_i^{\T} M v_i \big)
\\&{}\ge \trace M^{2\ell} - 2r \cdot \|M^{2\ell}\| - r \cdot \|M^{2\ell - 1}\|\, \|M\|\
\\&{}\ge \trace M^{2\ell} - 3r \cdot \big( 1+2/\sqrt{d} \big)^{2\ell}.
\end{align*}
Also, using the bound on $\iprod{M^{2\ell - 1}, D-I}$ from the lemma assumptions and the fact that $\|D-I\| \le 2/\sqrt{d}$ since $M$ is $2/\sqrt{d}$-almost doubly stochastic,
\begin{align*}
\iprod{\Pi^{\perp} M^{2\ell - 1} \Pi^{\perp}, D-I}
&{}=\iprod{M^{2\ell - 1}, D-I} - 2\sum_{i\in [r]}\iprod{v_iv_i^{\T} M^{2\ell - 1}, D-I} + \sum_{i\in [r]}\iprod{v_iv_i^{\T} M^{2\ell - 1} v_iv_i^{\T}, D-I}
\\&{}\le \frac{\sqrt{d}}{n^{\gamma}}\trace M^{2\ell} - 2\sum_{i\in [r]} v_i^{\T} \big( M^{2\ell-1} (D-I) \big) v_i + \sum_{i\in [r]} \big( v_i^{\T} (D-I) v_i \big) \big( v_i^{\T} M^{2\ell - 1} v_i \big)
\\&{}\le \frac{\sqrt{d}}{n^{\gamma}}\trace M^{2\ell} + 2r \cdot \|M^{2\ell-1}\|\,\|D-I\| + r \cdot \|D-I\|\,\|M^{2\ell - 1}\|
\\&{}\le \frac{\sqrt{d}}{n^{\gamma}}\trace M^{2\ell} + 6r \cdot \big( 1+2/\sqrt{d} \big)^{2\ell-1}/\sqrt{d}.
\end{align*}
Also, since $M^{2\ell} + a^2 M^{2\ell - 2} \succeq \mathbf{0}$ and $\Pi^{\perp}$ is contractive,
\begin{align*}
\trace \Big( \Pi^{\perp} \big( M^{2\ell} + a^2 M^{2\ell - 2} \big) \Pi^{\perp} \Big)
\le \trace \big( M^{2\ell} + a^2 M^{2\ell - 2} \big).
\end{align*}
Therefore, substituting the above inequalities into \eqref{eq:sumdifference},
\[4a\left(\trace M^{2\ell} - 3(1+2/\sqrt{d})^{2\ell}r - \frac{\sqrt{d}}{n^{\gamma}}\trace{M^{2\ell}} - 6r(1+2/\sqrt{d})^{2\ell-1}/\sqrt{d}\right) \le  2\epsilon \,\trace\big( M^{2\ell} + a^2 M^{2\ell - 2} \big).\]
Simplifying,
\[4a\left(\left(1-\frac{\sqrt{d}}{n^{\gamma}}\right)\trace M^{2\ell} - 3r(1+4\sqrt{d})(1+2/\sqrt{d})^{2\ell-1}\right) \le  2\epsilon \,\trace\big( M^{2\ell} + a^2 M^{2\ell - 2} \big).\]
Recalling we chose $a$ to satisfy $\trace M^{2\ell} = a^2 \cdot \trace M^{2 \ell - 2}$,
\[4a\left(\left(1-\frac{\sqrt{d}}{n^{\gamma}}\right)\trace M^{2\ell} - 3r(1+4\sqrt{d})(1+2/\sqrt{d})^{2\ell-1}\right) \le  4\epsilon \,\trace\big( M^{2\ell} \big),\]
and as we set $\ell$ so that $\trace M^{2\ell} \ge n\Big( \big( 2 - O(\frac{\ln k}{k}) \big) / \sqrt{d} \Big)^{2k}$,
\[4a\left(1 - \frac{\sqrt{d}}{n^{\gamma}} - \frac{3r(1+4/\sqrt{d})(1+2/\sqrt{d})^{2\ell-1}}{n\Big( \big( 2 - O(\frac{\ln k}{k}) \big) / \sqrt{d} \Big)^{2k}}\right) \le  4\epsilon.\]
Finally, $r \leq n^\delta$ therefore we must have
\[
\epsilon
\;\ge\; a\left(1 - \frac{\sqrt{d}}{n^{\gamma}} - O\left(\frac{d^kn^{\delta}}{n}\right)\right)
\;\ge\; \frac{2}{\sqrt{d}} - O \bigg( \frac{\ln k}{k\sqrt{d}} \bigg) - O\bigg( \frac{d^{k}}{n^{1-\delta} \sqrt{d}}\bigg) - O\bigg( \frac{1}{n^{\gamma}}\bigg).
\]
Choosing $k = d^{1/4}$ obtains the result.
\end{proof}


\subsection{Completing the Proof}
\label{sec.final-sep}

Culminating this section, we now complete the proof of \autoref{thm:lower-bound-spectral} restated as follows.

\lowerboundspectral*

\begin{proof}
In what follows, we assume that each vertex $i$ in $H$ has weighted degree bounded by $1 \pm 2/\sqrt{d}$. Otherwise, we use \autoref{lem.doublystoc-smallwtddegree} to conclude the theorem statement.

We begin by considering cases where $G = \bar K_n$. The Laplacian of $\bar K_n$ is given by
\begin{equation*}
L_{\bar K_n} = I - \tfrac{1}{n} \cdot \ones \ones^{\T}.
\end{equation*}
$H$ $\epsilon$\! spectrally sparsifies $\bar{K}_n$ thus requires $L_{H}$ to satisfy for all $X \succeq {\bf 0}$,
\[
    (1-\epsilon) \iprod{X, L_{\bar K_n}}
    \leq \iprod{X, L_{H}}
    \leq (1+ \epsilon) \iprod{X, L_{\bar K_n}},
\]
\[
    (1-\epsilon) \iprod{X, I - \tfrac{1}{n} \cdot \ones \ones^{\T}}
    \leq \iprod{X, L_{H}}
    \leq (1+ \epsilon) \iprod{X, I - \tfrac{1}{n} \cdot \ones \ones^{\T}}.
\]
If $\ones \in \ker X$, then
\[
    (1-\epsilon) \iprod{X, I}
    \leq \iprod{X, L_{H}}
    \leq (1+ \epsilon) \iprod{X, I},
\]
\[
    -\epsilon \trace X
    \leq \iprod{X, I - L_{H}}
    \leq \epsilon \trace X,
\]
\[
    -\epsilon \trace X
    \leq \iprod{X, A_{H} - D_{H} + I}
    \leq \epsilon \trace X.
\]
For the first two conditions of \autoref{thm:lower-bound-spectral}, our strategy will be to use \autoref{lem.almostdoublystoc} to demonstrate the existence of $X$ such that $S_{1} = \operatorname{span} ( \{ \ones \} ) \subseteq \ker X$ and $\epsilon \geq \frac{\iprod{X, A_H - D_H + I}}{\trace X}$ is large.

Suppose $H$ satisfies condition (1). Observe that the subspace of $\R^V$ given by $S_1$ admits $\lVert v \rVert_{\infty} / \lVert v \rVert_2 \leq 1/\sqrt{n}$ for all $v \in S_1$. Since $A_H$ is strictly doubly stochastic, $D_H = I$ and thus for every $\gamma > 0$, and every $k > 0$
\begin{equation*}
\big\lvert \iprod{A_H^{2k - 1}, D_H - I} \big\rvert
= \big\lvert \iprod{A_H^{2k - 1}, \mathbf{0}} \big\rvert
= 0
< \frac{\sqrt{d}}{n^\gamma}
\end{equation*}
As $A_H$ has at most $dn$ non-zero entries, applying \autoref{lem.almostdoublystoc} using $S_1$, $\delta = 1/2$, and $M = A_H$, then dropping the $O(\frac{1}{n^{\gamma}})$ yields $X \succeq \mathbf{0}$ satisfying the required
\begin{equation*}
\frac{2}{\sqrt d} - O \bigg( \frac{(\ln d)^3}{d} \bigg) - O\bigg( \frac{d^{\sqrt{d} / (\ln d)^2}}{n^{1-\delta} \cdot \sqrt{d}}\bigg) - O\left(\frac{1}{n^{1/2}}\right)
\leq \frac{\iprod{ X, A_H - D_H + I }}{\trace X}
\leq \epsilon.
\end{equation*}

Suppose $H$ satisfies condition (2). Since the weighted degree of any vertex in $H$ is at most $1 \pm \frac{2}{\sqrt{d}}$, $A_H$ is $2/\sqrt{d}$-almost doubly stochastic. The entries of $D_H - I$ must then be bounded by $2/\sqrt{d}$ in magnitude. If at most $B$ vertices of $H$ participate in odd cycles of length at most $2g$, then
\begin{equation*}
\big\lvert \iprod{ A_H^{2k - 1}, D_H - I} \big\rvert
\leq \Big\lvert \iprod{ A_H^{2k - 1}, \tfrac{2}{\sqrt{d}} \cdot I} \Big\rvert
\leq \tfrac{2 \cdot B}{\sqrt{d}}
\end{equation*}
for all $k \leq g$ since $\big\lvert \big\{ i \in V : \big( A_H^{2k-1} \big)_{i,i} > 0 \big\} \big\rvert < B$ in addition to $\big( A_H^{2k-1} \big)_{i,i} < 1$ for all $i$. The latter fact follows by the $(i,i)$-th entry of $A_H^{2k-1}$ representing the probability that a random walk starting from $i$ ends at $i$ after $2k - 1$ steps. We assume for now that each edge of $H$ has weight at most $\frac{2}{\sqrt{d}}$, in which case $\big( A_H \big)_{a,b} \leq 2/\sqrt{d}$ for all $a, b \in [n]$. $A_H$ being $2/\sqrt{d}$-almost doubly stochastic then implies for all $k \leq d^{1/4}$
\begin{equation*}
\trace M^{2k}
\geq n \bigg( \frac{2 - O(\frac{\ln k}{k})}{\sqrt{d}} \bigg)^{2k}
\end{equation*}
by \autoref{lem:almost-doubly-stochastic}. Selecting $g \leq d^{1/4}$ and $B = o(n)$, we have for all $k \leq d^{1/4}$:
\begin{equation*}
\bigg\lvert \frac{\iprod{ A_H^{2k - 1}, D_H - I}}{\trace A_H^{2k}} \bigg\rvert
\leq \frac{2 \cdot B / \sqrt{d}}{n \cdot \big( 2 / \sqrt{d} - O \big( \frac{\ln k}{k\sqrt{d}} \big) \big)^{2k}}
\leq \frac{2 / \sqrt{d}}{ \omega_n(1) \cdot \big( 2 / \sqrt{d} - O \big( \frac{\ln k}{k\sqrt{d}} \big) \big)^{2k}}
\leq o_{n,d}(1).
\end{equation*}
For large enough $n$, there then exists $\gamma > 0$ such that $\frac{\big\lvert \iprod{ A_H^{2 k - 1}, D_H - I} \big\rvert}{\trace A_H^{2k}} \leq \frac{\sqrt{d}}{n^{\gamma}}$ for all $k \leq d^{1/4}$. With the same choice of $S_1$, $\delta$, and $M$ from case (1), \autoref{lem.almostdoublystoc} implies the required lower bound on $\epsilon$.

Revisiting our assumption that $( A_H )_{a,b} \leq 2/\sqrt{d}$ for all $a, b \in [n]$, if there is an edge $(a, b)$ of $H$ such that $\big( A_H \big)_{a,b} > 2/\sqrt{d}$, then using \autoref{lem.doublystoc-smalledgewt} with same choice of $S_1$, $\delta$, and $M$ recovers the required bound on $\epsilon$.

Finally, we consider case (3) where $H$ is bipartite and $G = \bar{K}_{n/2, n/2}$. Let $v \in \R^V$ be the balanced vector indicating the partitions of $\bar{K}_{n/2, n/2}$; indices corresponding to the $n/2$ vertices on one side of the partition are $1$ and the remaining $n/2$ are $-1$. We have that $L_{\bar{K}_{n/2, n/2}} = I - \frac{1}{n} \cdot \ones \ones^{\T} + \frac{1}{n} \cdot v v^{\T}$. The condition that $H$ $\epsilon$\! spectrally sparsifies $\bar{K}_{n/2, n/2}$ then requires $L_H$ satisfy for all $X \succeq \mathbf{0}$
\[
    (1-\epsilon) \iprod{X, I - \tfrac{1}{n} \cdot \ones \ones^{\T} + \tfrac{1}{n} \cdot v v^{\T}}
    \leq \iprod{X, L_{H}}
    \leq (1+ \epsilon) \iprod{X, I - \tfrac{1}{n} \cdot \ones \ones^{\T} + \tfrac{1}{n} \cdot v v^{\T}}.
\]
If $\{ \ones, v \} \subseteq \ker X$, then
\[
    -\epsilon \trace X
    \leq \iprod{X, A_{H} - D_{H} + I}
    \leq \epsilon \trace X.
\]
Hence, we will use \autoref{lem.almostdoublystoc} to demonstrate the existence of $X \succeq \mathbf{0}$ such that $S_2 = \operatorname{span}(\{ \ones, v \}) \subseteq \ker X$ and $\epsilon \geq \frac{\iprod{X, A_H - D_H + I}}{\trace X}$ is large.

Observe that we have $\lVert v \rVert_{\infty} / \lVert v \rVert_2 \leq 2/\sqrt{n}$ for all $v \in S_2$. We also have that
\begin{equation*}
\big\lvert \iprod{A_H^{2k - 1}, D_H - I} \big\rvert
= 0
\end{equation*}
for all $k \leq d^{1/4}$ since all diagonal entries of $A_H^{2\ell - 1}$ are zero: $H$ is bipartite and thus cannot possess any cycles with odd length. For any choice of $\gamma > 0$, we have for all $k \leq d^{1/4}$
\begin{equation*}
\frac{\big\lvert \iprod{A_H^{2k - 1}, D_H - I} \big\rvert}{\trace A_H}
\leq O \bigg( \frac{\sqrt{d}}{n^{\gamma}} \bigg)
\end{equation*}
The weighted degree of any vertex in $H$ is at most $1 \pm \frac{2}{\sqrt{d}}$ and hence $A_H$ is $2/\sqrt{d}$-almost doubly stochastic. Applying \autoref{lem.almostdoublystoc} using $S_2$, $\delta = 1/2$, and $M = A_H$ yields $X \succeq \mathbf{0}$ such that
\begin{equation*}
2/\sqrt{d} - O \bigg( \frac{(\ln d)^3}{d} \bigg) - O\bigg( \frac{d^{\sqrt{d} / (\ln d)^2}}{n^{1/2} \cdot \sqrt{d}}\bigg)
\leq \frac{\iprod{ X, A_H - D_H + I }}{\trace X}
\leq \epsilon
\end{equation*}
as required.
\end{proof}

\section{Separation Between Cut and Spectral Sparsification}
\label{sec:separation}

We will now show a separation between cut and spectral sparsification of random $\log n$-regular graphs. To begin, we show prove that the complete graph can be cut sparsified past the Ramanujan bound.

\main*

\begin{proof}
Let $G$ be the complete graph on $n$ vertices and $\epsilon \leq \Big( 2 \sqrt{\frac{2}{\pi}} + o_{n,d}(1) \Big) / d$ and consider a random regular graph $H = (V,E)$ drawn from $\gReg{n}{d}$ for a large enough constant $d$ with edges weighted by $\frac{n-1}{d}$. By Theorem~\ref{thm:large-cuts}, there is at least a $1-e^{-\Omega(n/\log n)}$ chance $H$ $\epsilon$ cut sparsifies $G$ when restricting to cuts $S \subset V$ satisfying $0.01 n \le \lvert S \rvert \le \frac{n}{2}$.

Concurrently, Theorem~\ref{thm:small-cuts-crossing} states that for any fixed $k \leq 0.01n$, with probability at least $1-\binom{n}{k}^{-1.01}$, $H$ $\epsilon$ cut sparsifies $G$ on all cuts $S \subset V$ of the same size $|S| = k$. Performing a union bound over all sizes below, we find that with probability at least $1 - \sum_{k=1}^{0.01n} \binom{n}{k}^{-0.01}$, $H$ $\epsilon$ cut sparsifies $G$ on all $S \subset V$ with $|S| \le 0.01 n$. This probability is at least $1-O(n^{-0.01})$. By a final union bound over the cases $\lvert S \rvert \leq 0.01n$ and $0.01n \leq \lvert S \rvert \leq \frac{n}{2}$, $H$ is an $\epsilon$ cut sparsifier of $G$ with probability at least $1 - o_n(1)$ as required.
\end{proof}

We then show that a random $\log n$ regular graph satisfies the pseudo-girth conditions required by Theorem~\ref{th.speclower}.

\begin{theorem}\label{th.pseudogirth}
If $G$ is a random regular graph drawn from $\gReg{n}{\log n}$, and $g$ is a fixed constant then the following occur.
\begin{enumerate}[1.]
\item With probability $1$, for every vertex $v$ of $G$, the number of vertices of $G$ reachable from $v$ via paths of length at most $g$ is $O\big( (\log n)^g \big)$

\item Let $B$ be the set of vertices v such that v participates in a cycle contained in the vertex-induced subgraph of G, induced by vertices of distance at most $2g$ from $v$. Then $\lvert B \rvert \leq O \big( (\log n)^{4g+1} \big)$ with probability $1 - o_n(1)$ over the choice of $G$.
\end{enumerate}
\end{theorem}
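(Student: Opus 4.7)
The plan is to treat the two parts separately: Part~1 is essentially deterministic, while Part~2 follows from a first-and-second moment argument on the number of ``bad'' vertices (those violating the pseudo-girth property).

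For Part~1, I would observe that $G \sim \gReg{n}{\log n}$ is the disjoint union of $\log n$ perfect matchings, so every vertex of $G$ has combinatorial degree exactly $\log n$. Hence the ball of radius $g$ around any vertex $v$ contains at most $\sum_{\ell=0}^g (\log n)^\ell = O((\log n)^g)$ vertices for $n$ sufficiently large, and this holds for every realization of $G$ (so in particular with probability~1).

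For Part~2, let $F = n - |V''|$ denote the number of bad vertices $v$ whose $2g$-ball contains a cycle, and write $F = \sum_v I_v$ with $I_v = \bone[v \text{ is bad}]$. To bound $\E[F]$, I would fix $v$ and expose the $\log n$ matchings via BFS from $v$ up to depth $2g$, one half-edge match at a time. The total number of matches needed to cover $B_{2g}(v)$ is at most $|B_{2g}(v)| \cdot \log n = O((\log n)^{2g+1})$, and a cycle in the $2g$-ball arises exactly when one of these matches lands on an already-visited vertex. At each step, the conditional probability of landing on a visited vertex is at most $O(|B_{2g}(v)|/n) = O((\log n)^{2g}/n)$, so a union bound over the $O((\log n)^{2g+1})$ steps gives $\Pr[v \text{ bad}] = O((\log n)^{4g+1}/n)$, and hence $\E[F] = O((\log n)^{4g+1})$.

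For concentration, I would apply Chebyshev's inequality via a second moment estimate, splitting $\sum_{u \neq v} \text{Cov}(I_u, I_v)$ by the distance between $u$ and $v$. Pairs with $\text{dist}_G(u,v) \leq 4g$ (of which there are $O(n(\log n)^{4g})$) can be bounded trivially by $\text{Cov}(I_u, I_v) \leq \Pr[I_u = 1] = O((\log n)^{4g+1}/n)$, contributing $O((\log n)^{8g+1})$ in total. For more distant pairs, the relevant $2g$-balls are disjoint and the badness events are nearly independent; quantifying the small residual correlation arising from the perfect-matching constraint on half-edges shows this contribution to be negligible. Together this yields $\Var(F) = O((\log n)^{8g+1}) = o(\E[F]^2)$, so Chebyshev gives $F \leq C(\log n)^{4g+1}$ with probability $1 - o_n(1)$. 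The main obstacle is the bookkeeping of the residual correlation for distant pairs; if this proves delicate, a fallback is to use Markov's inequality in place of Chebyshev, giving the slightly weaker bound $F \leq h(n) \cdot (\log n)^{4g+1}$ for any $h(n) \to \infty$, which still suffices to make $g(B+F)/n = o_n(1)$ in the invocation of Theorem~\ref{th.lowerbound} inside the proof of Theorem~\ref{th.separation}.
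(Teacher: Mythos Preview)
Your Part~1 argument and your first-moment bound for Part~2 match the paper's proof essentially verbatim: the paper observes the degree bound gives Part~1 deterministically, and for Part~2 it reveals the $2g$-ball of $v$ by deferred decisions, bounding the cycle-creation probability at each step by $O((\log n)^{2g}/n)$ and taking a union bound over the $O((\log n)^{2g})$ (or $O((\log n)^{2g+1})$, depending on how carefully one counts edges among level-$2g$ vertices) decisions.

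The divergence is in what comes next. The paper simply applies Markov's inequality to $F$: with $\E[F]=O((\log n)^{4g})$ one gets $\Pr[F>(\log n)^{4g+1}]=O(1/\log n)=o_n(1)$ directly, and that is the entire proof of Part~2. You instead propose a second-moment/Chebyshev argument, which is unnecessary and whose sketch has a genuine soft spot: you split $\sum_{u\neq v}\mathrm{Cov}(I_u,I_v)$ according to whether $\mathrm{dist}_G(u,v)\le 4g$, but this distance is itself a function of the random graph $G$, so the decomposition is not over a deterministic index set and the ``nearly independent'' claim for distant pairs would need a careful conditional-revealing argument to justify. Your own fallback---Markov in place of Chebyshev---is exactly the paper's route; the only reason you regard it as ``slightly weaker'' is that your first-moment exponent is $4g+1$ rather than the paper's $4g$, and since only a polylogarithmic bound on $F$ is needed downstream, neither exponent matters for the application.
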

\begin{proof}
The first property immediately follows from the fact that the combinatorial degree is at most $\log n$. For the second part, fix a vertex $v$ and consider the probability, over the choice of $G$, that $v \in B$. By applying the principle of deferred decision, we first generate the $\log n$ neighbors of $v$, then the additional neighbors of those neighbors, and so on. Every time we make a decision about how to match a particular vertex $u$ in one of the $\log n$ matchings, the probability of hitting a previously seen vertex is at most $O\big( (\log n)^{2g}/n \big)$ and so the probability that we create a cycle is at most $O\big( (\log n)^{4g}/n \big)$. The conclusion of the theorem follows by applying Markov's inequality.
\end{proof}

We conclude by proving the separation between cut and spectral sparsification stated by Theorem \ref{th.separation}.

\separation*

\begin{proof}
Fix $d$. If $G$ is a random $\log n$-regular graph drawn from $\gReg{n}{\log n}$, then, for every fixed $d$ there is a $1-o_n(1)$ probability that there are $o_n(n)$ vertices that see a cycle within distance $d^{1/4}$ and there are $o_n(n)$ vertices in the ball of radius $d^{1/4}$ around each vertex. Note that the above properties will also hold for any edge-subgraph $H$ of $G$.

From Theorem \ref{th.speclower} we have that, with $1-o_n(1)$ probability over the choice of $G$, if a weighted edge-induced subgraph $H$ of $G$ of average degree $d$ is an $\epsilon$ spectral sparsifier of the clique, then $\epsilon \geq (2 - O(d^{-3/4}) - o_n(1))/\sqrt d$. From
\cite{bordenave2019new} we have that with $1-o_n(1)$ probability the graph $G$ is an $O(1/\sqrt {\log n})$ spectral sparsifier (and also cut sparsifier) of the clique, and so if a weighted edge-induced subgraph $H$ of $G$ of average degree $d$ is an $\epsilon$ cut sparsifier of the $G$, then again $\epsilon \geq (2 - O(d^{-3/4}) - o_n(1))/\sqrt d$

Since we constructed $G$ as the union of $\log n$ random matchings, $G$ contains, for large enough $n$, a random $d$-regular graph from $\gReg{n}{d}$ as an edge-induced subgraph (for example, consider the first $d$ of the $\log n$ matchings used to construct $G$). We can deduce from Theorem \ref{th.main} that, with $1-o_n(1)$ probability, $G$ contains as a weighted edge-induced subgraph a graph $H$ that has average degree $d$ and is a $(1.595\ldots + o_{n,d}(1))/\sqrt d$ cut sparsifier of the clique.

We conclude that with $1-o_n(1)$ probability over the choice of $G$, there is a weighted edge-induced subgraph $H$ of $G$ such that $H$ has average degree $d$ and is a $(1.595\ldots + o_{n,d}(1))/\sqrt d$ cut sparsifier of $G$
\end{proof}


\section{Acknowledgments}
The work of JS and LT on this project has received funding from the European Research Council (ERC) under the European Union's Horizon 2020 research and innovation programme (grant agreement No. 834861). AC is supported by NSF DGE 1746045. Most of this work was done while AC was a visiting student at Bocconi University. The authors would like to thank Andrea Montanari for pointing us to \cite{jagannath2017unbalanced}, and the Physics and Machine Learning group at Bocconi, particularly Enrico Malatesta, for patiently explaining the Parisi equations and the replica method to us.

\bibliographystyle{alpha}
\bibliography{main.bib}


\appendix 
\newpage

\section{Expected free energy}

In Section~\ref{sec.linear}, we needed the following statement in the proof of Theorem~\ref{thm:psi-to-random-graphs}.
\begin{lemma}
\label{lem:js17-fix}
For $\eps_n \to 0$ slowly enough as $n \to \infty$, it holds that for all $T$,
\[ \liminf_{n\to \infty} \E_W\;\max_{\sigma \in A_n(T(\alpha), \eps_n)}\, \frac{1}{n}H_W^{(0)}(\sigma) \le \inf_{\nu,\lambda} \cP^1_{T(\alpha)}(\nu, \lambda). \]
\end{lemma}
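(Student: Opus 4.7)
The plan is to upgrade the almost sure conclusion of \cite[Theorem~1.2]{jagannath2017unbalanced} to the required in-expectation bound via Gaussian concentration of measure. Throughout, write $X_n(W) := \max_{\sigma \in A_n(T(\alpha),\eps_n)} \tfrac{1}{n}H_W^{(0)}(\sigma) = \tfrac{1}{n^{3/2}} \max_\sigma \sigma^T W \sigma$ and $L := \inf_{\nu,\lambda} \cP^1_{T(\alpha)}(\nu, \lambda)$.

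First, I would check that $X_n$ is a Lipschitz function of $W$, viewed as the vector of its $\binom{n}{2}$ independent standard Gaussian off-diagonal entries. For each fixed $\sigma$, the map $W \mapsto \tfrac{1}{n^{3/2}}\sigma^T W\sigma$ is linear with $\ell^2$-Lipschitz constant at most $\|\sigma\sigma^T\|_F/n^{3/2} = \|\sigma\|^2/n^{3/2}$. Since $\sigma \in \{\pm 1 - M\}^n$ satisfies $\|\sigma\|_\infty \le 2$ and hence $\|\sigma\|^2 \le 4n$, this constant is $O(1/\sqrt n)$; a pointwise maximum over $\sigma$ preserves the same Lipschitz constant. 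By the Borell--Tsirelson--Ibragimov--Sudakov inequality applied to the Gaussian vector of entries of $W$,
\[\Pr_W\bigl[\,|X_n(W) - \E_W X_n(W)| > t\,\bigr] \;\le\; 2\exp(-c n t^2) \qquad \text{for all } t > 0\]
for an absolute constant $c > 0$. Since these tails are summable in $n$ for any fixed $t$, Borel--Cantelli yields $X_n(W) - \E_W X_n(W) \to 0$ almost surely.

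Next, the actual almost sure conclusion proved in \cite[Theorem~1.2]{jagannath2017unbalanced} states that, for $\eps_n \to 0$ slowly enough and for every $T$, one has $\liminf_{n \to \infty} X_n(W) = L$ almost surely. Combining this with the concentration statement above, we may write $X_n(W) = \E_W X_n(W) + b_n(W)$ with $b_n \to 0$ almost surely, so that, almost surely,
\[\liminf_{n \to \infty} \E_W[X_n(W)] \;=\; \liminf_{n \to \infty} \bigl(X_n(W) - b_n(W)\bigr) \;=\; \liminf_{n \to \infty} X_n(W) \;=\; L.\]
But the left-hand side is a deterministic number, so in fact $\liminf_n \E_W[X_n(W)] = L$ without any qualifier, which yields the claimed $\le$ (with equality, as expected).

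The anticipated obstacle is not conceptual but bookkeeping: one must verify that the rate $\eps_n \to 0$ needed for the a.s.\ version of \cite[Theorem~1.2]{jagannath2017unbalanced} can be chosen compatibly with our concentration argument, and that the Lipschitz bound above does not degrade as $\eps_n \to 0$. The latter is immediate since $\|\sigma\|_\infty \le 2$ holds regardless of $\eps_n$, so the Lipschitz constant $O(1/\sqrt n)$ is uniform in $\eps_n$; the former is a direct appeal to the slow-rate statement already invoked in the excerpt.
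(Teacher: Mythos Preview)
Your argument is correct, and it takes a genuinely different route from the paper's own proof. The paper does not invoke concentration at all: instead it works at finite inverse temperature $\beta$, uses the deterministic sandwich $\bigl|GS_N(A_N)-\tfrac{1}{\beta}F_N(\beta,\xi;A_N)\bigr|\le \tfrac{\log|\Sigma|}{\beta}$, takes expectation directly on the upper half, then passes to the limit using the fact that the \emph{expected} free energy $\E_J\,F_N(\beta,\xi;A_N(T,\eps_N))$ already converges to $F(\beta,\xi;T)$ by \cite[Equation~1.12]{jagannath2017unbalanced}, and only afterwards sends $\beta\to\infty$ to recover $\inf_{\nu,\lambda}\cP^1_{T(\alpha)}(\nu,\lambda)$. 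In other words, the paper exploits that an in-expectation limit is already available one level up (at positive temperature) and transports it to zero temperature; you instead stay at zero temperature and upgrade the almost-sure statement to an in-expectation one via Borell--TIS and Borel--Cantelli.

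Both approaches are clean. The paper's has the advantage of being a two-line consequence of facts already established inside \cite{jagannath2017unbalanced}; yours has the advantage of being fully black-box with respect to that paper (you only need the almost-sure conclusion of its Theorem~1.2, not the intermediate free-energy convergence), and as you note it actually delivers equality rather than just the required inequality. One small bookkeeping remark: when you view $W$ as the vector of its $\binom{n}{2}$ independent upper-triangular entries, each such entry $W_{ij}$ enters $\sigma^TW\sigma$ with coefficient $2\sigma_i\sigma_j$, so the Lipschitz constant is $\tfrac{2}{n^{3/2}}\bigl(\sum_{i<j}\sigma_i^2\sigma_j^2\bigr)^{1/2}\le \tfrac{\sqrt{2}}{n^{3/2}}\|\sigma\|^2$ rather than exactly $\|\sigma\|^2/n^{3/2}$; this of course does not affect the $O(n^{-1/2})$ bound or any subsequent step.
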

\comment{
We begin by introducing facts about Legendre transforms.
There are differing sign conventions in use for the Legendre transform.
We follow the $y^*$ convention \cite{legendre-transforms}.
\begin{definition}
$y^*(p)$ is the Legendre transform of $y(x)$ if
\[y^*(p) = \min_x y(x) - xp.\]
We also denote the inverse Legendre transform of $z(p)$ by $z^{\circ}(x)$.
\[z^{\circ}(x) = \max_p z(p) + xp.\]
\end{definition}

We prove a few facts about Legendre transforms and convex sets.
\begin{lemma}[Convex combination]
\label{lem:legendre-of-convex-combination}
If $a+b = 1$ then $af^* + bg^* \le (af + bg)^*$. 
\end{lemma}
\begin{proof}
Since $\min_y b\,g(y) - byp \le b\,g(x) - bxp$ for all $x$,
\[af^*(p) + bg^*(p) = \min_x a\,f(x)  - axp + \min_y b\,g(y) - byp \le \min_x a\,f(x)  - axp + b\,g(x) - bxp. \]
Since $a+b = 1$, the right side is equal to 
\[\min_x a\,f(x) + b\,g(x) - xp = (af + bg)^*(p). \]
\end{proof}

\begin{lemma}[Order-preserving property]
\label{lem:legendre-order-preserving}
If $f \ge g$, then $f^* \ge g^*$ and $f^{\circ} \ge g^{\circ}$.
If $f$ is concave on its domain and $f \ge g$ on $(a,b)$, then $f^{\circ} \ge g^{\circ}$ on the interior of the set of superderivatives of $f$ on $(a,b)$.
\end{lemma}
\begin{proof}
Let $y(p) = \argmin_x g(x) - xp$, so that $g^*(p) = g(y(p)) - yp$. 
Then \[f^*(p) = \min_x f(x) - xp \ge f(y(p)) - yp \ge g(y(p)) - yp = g^*(p).\]
The same argument works for the inverse Legendre transform with $y$ defined as $\argmax_x f(x) + xp$ instead:
\[g^{\circ}(p) = \max_x g(x) + xp \le g(y(p)) + yp \le f(y(p)) + yp = f^{\circ}(p).\]
If $f \ge g$ only on $(a,b)$, then the above holds for all $p$ such that $y(p) \in (a,b)$.
When $f$ is concave, $p$ must be a superderivative of $f$ at $y(p)$, so that $y(p) \in (a,b)$ is implied by $p$ being a superderivative of $f$ somewhere on $(a,b)$ and not being a superderivative of $f$ anywhere else.
Since the superderivatives of a concave function are monotonic and the superderivative at a point is a closed interval, no point in the interior of the set of superderivatives on $(a,b)$ is a superderivative of $f$ anywhere else.
\end{proof}

\begin{lemma}[Inverse]
\label{lem:legendre-inverse} $f^{*\circ} \le f$,
with equality if $f$ is convex.
\end{lemma}
\begin{proof}
Unrolling the definitions,
\[ f^{*\circ}(x) = \max_p \min_{y} f(y) + (x-y)p. \]
Since $\min_{y} f(y) + (x-y)p \le f(x) - (x-x)p $,
\[ f^{*\circ}(x) \le \max_p f(x) = f(x). \]
If $f$ is convex, then let $q$ be any subderivative of $f$ at $x$.
Then
\[f^{*\circ}(x) = \max_p \min_{y} f(y) + (x-y)p \ge \min_{y} f(y) + (x-y)q \ge f(x), \]
where the last inequality comes from noting that $f(y) \ge f(x) + q(y-x)$ holds for all $y$.
\end{proof}

\begin{lemma}[Convex hull]
\label{lem:legendre-convex-hull}
If $g$ is the closed convex hull of $f$, then $g^* = f^*$.
\end{lemma}
\begin{proof}
Since $g \le f$,
\[ g^*(p) = \min_x g(x) - xp \le \min_x f(x) - xp = f^*(p). \]
Let $y = \argmin_x g(x) - xp$.
Then there are $u$ and $v$ such that
\[g(y) =\frac{(v-y)f(u)+ (y-u)f(v)}{v-u}.\]
So 
\[g^*(p) = \frac{(v-y)f(u)+ (y-u)f(v)}{v-u} - yp
= \frac{v-y}{v-u}(f(u) - up) + \frac{y-u}{v-u}(f(v) - vp)\]
\[\ge \min(f(u) - up, f(v) - vp) \ge \min_x f(x) - xp = f^*(p).\]
\end{proof}

\begin{lemma}
\label{lem:convex-even-uniform-convergence}
Suppose that $f_1(x), f_2(x), \dots, f_i(x), \dots$ is a sequence of convex functions on $[-1,1]$ such that $f_i(x) = f_i(-x)$ and $f_i(1) = 0$ for all $i$.
Then if $f_i(x) \to f(x)$, the convergence is uniform on every interval $[-1+\eps, 1-\eps]$ with $\eps > 0$.
\end{lemma}
\begin{proof}
By Dini's theorem, if we can show that $f(x)$ is non-decreasing and continuous on $[0,1-\eps]$, then the convergence must be uniform on $[0,1-\eps]$.
The evenness of the $f_i$ then establishes uniform convergence on $[-1+\eps,1-\eps]$.

For all $u$ and $v$ and $x$ satisfying $u \le x \le v$, convexity of $f_i$ tells us that 
\[ f_i(x) \le \frac{v-x}{v-u}f_i(u) + \frac{x-u}{v-u}f_i(v). \]
Taking the limit and seeing that the closed inequality holds in the limit, we must then have 
\[ f(x) \le \frac{v-x}{v-u}f(u) + \frac{x-u}{v-u}f(v). \]
This shows that $f$ is convex.
Also since $f_i(1) = f_i(-1) = 0$ for all $i$, $f(1) = f(-1) = 0$ and so $f$ is finite.
Every finite convex function on an open set is continuous, so $f$ must be continuous on $[0,1-\eps]$ for every $\eps > 0$ and every even convex function is minimized at $0$, so $f$ must also be nondecreasing on $[0,1]$.
\end{proof}

\begin{lemma}
\label{lem:convex-hull-limit}
Suppose that $f_1(x), f_2(x), \dots, f_i(x), \dots$ is a sequence of functions on an interval and $\liminf_{i \to \infty} f_i = f$.
Let $g_i$ be the closed convex hull of $f_i$ and let $g$ be the closed convex hull of $f$.
Then $\liminf_{i \to \infty} g_i \le g$.
\end{lemma}
\begin{proof}
For every $x$, there exist $u$ and $v$ such that 
\[g(x) = \frac{v-x}{v-u}f(u) + \frac{x-u}{v-u}f(v).\]
Since $f = \liminf f_i$, for every $\eps > 0$, there is a $N$ such that $f_i(u) \ge f(u) - \eps$ and $f_i(v) \ge f(v) - \eps$ for all $i > N$.
And also $f_i \ge g_i$ for all $i$ since $g_i$ is the closed convex hull of $f_i$.
Therefore 
\[g(x) \ge \frac{v-x}{v-u}g_i(u) + \frac{x-u}{v-u}g_i(v) - \eps\]
for every $i > N$.
Since $g_i$ is convex, we have $\frac{v-x}{v-u}g_i(u) + \frac{x-u}{v-u}g_i(v) \ge g_i(x)$ for all $i$.
Therefore, $g(x) \ge g_i(x) - \eps$ for every $i > N$ and therefore $\liminf g_i \le g$.
\end{proof}

\begin{lemma}[Interchange of limit and Legendre transform]
\label{lem:legendre-liminf}
Suppose $f_1(x), f_2(x), \dots, f_i(x), \dots$ is a sequence of functions on $[-1,1]$ and $\liminf_{i \to \infty} f_i = f$.
Suppose $f_i(x) = f_i(-x)$ for all $x$ and $f(1) = 0$.
Then $\liminf_{i \to \infty} f_i^* \le f^*$ on the interior of the set of subderivatives of the closed convex hull of $f$ on $(-1,1)$.
\end{lemma}
\begin{proof}
Let $g_i$ be the closed convex hull of $f_i$ and let $g$ be the closed convex hull of $f$.
By Lemma~\ref{lem:legendre-convex-hull}, $f_i^* = g_i^*$ and $g^* = f^*$, so it would be enough to show
\[ \liminf_{i \to \infty} g_i^* \le g^*. \]
By Lemma~\ref{lem:convex-hull-limit}, we see that 
\[ \liminf_{i \to \infty} g_i \le g. \]
so that by Lemma~\ref{lem:legendre-order-preserving},
\[ \left(\liminf_{i \to \infty} g_i\right)^* \le g^*. \]
By Lemma~\ref{lem:convex-even-uniform-convergence}, the convergence of $g_i$ is uniform on $[-1+\eps,1-\eps]$ for every $\eps > 0$, so 
\[ \min_{x \in [-1+\eps,1-\eps]} \liminf_{i \to \infty} g_i(x) - xp = \liminf_{i \to \infty} \min_{x \in [-1+\eps,1-\eps]} g_i(x) - xp \]
so that 
\[ \liminf_{i \to \infty} g_i^*(p) = \left(\liminf_{i \to \infty} g_i\right)^* \le g^*(p) \]
whenever $p$ is such that the minimization is not attained on $x = -1$ or $x = 1$.
That is when
\[\min_{x \in [-1,1]} \liminf_{i \to \infty} g_i(x) - xp <  - |p|,\]
Since the convergence is uniform, $\liminf_{i \to \infty} g_i(x)$ is convex and therefore this condition holds when $p$ is a subderivative of $\liminf_{i \to \infty} g_i(x)$ on $(-1,1)$ but not a subderivative at $-1$ or $1$.
And also since the convergence is uniform, $g$ and $\liminf_{i \to \infty} g_i$ have the same derivatives where defined and therefore the same subderivatives on $(-1,1)$.
\end{proof}
\begin{lemma}
\label{lem:superderivatives}
If $g$ is the closed convex hull of a finite function $f$ and $J$ is the set of subderivatives of $g$ on an interval $I$, then the set of superderivatives of $f^*$ on $J$ contains $-I$.
\end{lemma}
\begin{proof}
The set of superderivatives of $f^*$ on $J$ is equal to
\[\{x: \forall p \in \dom f^* \mathrel{.} f^*(p) - f^*(q) \le x(p-q) \mid q \in J\}.\]
Since $f^* = g^*$ and expanding the definition of $g^*$, this is where
\[\{x: \forall p \in \dom f^* \mathrel{.} \left(\min_{y} g(y) - yp\right) - \left(\min_z g(z) - zq\right) \le x(p-q) \mid q \in J\}.\]
By choosing $y = -x$, this is a superset of 
\[\{x: g(-x) - \left(\min_z g(z) - zq\right) \le -xq \mid q \in J\}.\]
By converting the minimization into a universal quantifier, this is equal to
\[\{x: \forall z\in\dom f \mathrel{.}  g(-x) - g(z) + zq \le -xq \mid q \in J\}.\]
Since $J$ is the set of subderivatives of $g$ on $I$, 
\[J = \{q: \forall z \in \dom f \mathrel{.} g(z) - g(x) \ge q(z-x) \mid x \in I\}.\]
Therefore the condition to be a superderivative of $f^*$ on $J$ is satisfied by all $x \in -I$.
\end{proof}
}

\begin{proof}
In the proof of \cite[Theorem 1.2]{jagannath2017unbalanced}, it is stated that
\[ \frac{1}{\beta}F_N(\beta, \xi; A_N) - \frac{\log |\Sigma|}{\beta} \le GS_N(A_N) \le \frac{1}{\beta}F_N(\beta, \xi; A_N) + \frac{\log |\Sigma|}{\beta}. \]
By linearity of expectation,
\[ \E_J GS_N(A_N) \le \E_J \frac{1}{\beta}F_N(\beta, \xi; A_N) + \frac{\log |\Sigma|}{\beta}. \]
Taking the limit on both sides,
\[ \liminf_{N \to \infty} \E_J GS_N(A_N) \le \liminf_{N \to \infty} \E_J \frac{1}{\beta}F_N(\beta, \xi; A_N) + \frac{\log |\Sigma|}{\beta}. \]
Since $\E_J \frac{1}{\beta}F_N(\beta, \xi; A_N(T,\eps_N)) \to \frac{1}{\beta} F(\beta, \xi; T)$ as $N \to \infty$ by \cite[Equation 1.12]{jagannath2017unbalanced}, we substitute
\[ \liminf_{N \to \infty}  \E_J GS_N(A_N) \le \frac{1}{\beta}F(\beta, \xi; T) + \frac{\log |\Sigma|}{\beta}. \]
so, taking $\beta \to \infty$,
\[\liminf_{N \to \infty} \E_J GS_N(A_N) \le E(\xi; T). \]

That last inequality completes the proof, together with the facts that, translating from the notation of Theorem~\ref{thm:psi-to-random-graphs} to that of \cite[Theorem 1.2]{jagannath2017unbalanced},
\[\E_W\;\max_{\sigma \in A_n(T(\alpha), \eps_n)}\, \frac{1}{n}H_W^{(0)}(\sigma) = \E_J \sup_T GS_N(A_N) \] 
and, as defined in the proof of \cite[Theorem 1.2]{jagannath2017unbalanced},
\[E(\xi; T) = \inf_{\nu,\lambda} \cP^1_{T(\alpha)}(\nu, \lambda). \qedhere \]
\end{proof}

\section{Analytic Inequalities}
\label{sec.analytic}

In this section, we prove Lemmas~\ref{lem:log-taylor} and~\ref{lem:log-calc}. These are inequalities used to bound the exponent of the tail probability in Lemma~\ref{lem:concentration-generic} (subsequently deriving Lemma~\ref{lem:concentration-cases}) under two cases: when $k \geq \Omega(n / \sqrt{d})$ and when $k < O(n / \sqrt{d})$. We first restate, and prove Lemma~\ref{lem:log-taylor}.

\logtaylor*

\begin{proof}
Proceed by expanding $\delta \ln \big( \frac{\delta}{C} + 1 \big)$ via its Taylor approximation 
\begin{equation*}
\delta \ln \bigg( \frac{\delta}{C} + 1 \bigg)
= \delta \cdot \bigg( \sum_{t=1}^\infty (-1)^{t+1} \frac{\delta^t}{C^t t} \bigg)
= \sum_{t=1}^\infty (-1)^{t+1} \frac{\delta^{t+1}}{C^t t}
\end{equation*}

similarly for $C \ln \big( \frac{\delta}{c} + 1 \big)$, we have 
\begin{equation*}
C \cdot \ln \bigg( \frac{\delta}{C} + 1 \bigg)
= C \cdot \bigg( \sum_{t=1}^\infty (-1)^{t+1} \frac{\delta^t}{C^t t} \bigg)
= \sum_{t=1}^\infty (-1)^{t+1} \frac{\delta^t}{C^{t-1} t}
= \delta + \sum_{t=1}^\infty (-1)^t \frac{\delta^{t+1}}{C^t (t+1)}
\end{equation*}

Combining the two expansions, we derive 
\begin{equation*}
\big( \delta + C \big) \ln \bigg( \frac{\delta}{C} + 1 \bigg)
= \delta + \sum_{t=1}^\infty (-1)^{t+1} \bigg( \frac{\delta^{t+1}}{C^t} \bigg) \bigg( \frac{1}{t} - \frac{1}{t+1} \bigg)
\geq \delta + \frac{\delta^2}{3C} \qedhere
\end{equation*}
\end{proof}

The following is a proof of Lemma~\ref{lem:log-calc}.

\logcalc*

\begin{proof}
Denote $f(C, \delta) = \big( \delta + C \big) \ln \big( \frac{\delta}{C} + 1 \big) - \delta - \frac{1}{2C} \cdot \delta \ln \delta$. It suffices to demonstrate $f(C, \delta) \geq 0$ for all $\delta \geq C \geq 1$. To see this, first note $f(C, \delta) \geq 0$ for all $\delta = C \geq 1$ as we have 
\begin{equation*}
f(C, \delta) 
= \big( \delta + C \big) \ln \bigg( \frac{\delta}{C} + 1 \bigg) - \delta - \frac{1}{2C} \cdot \delta \ln \delta
= 2 \delta \cdot \ln(2\delta) - \delta - \frac{\ln \delta}{2} 
\end{equation*}

which is true for any $\delta \geq 1$. We next compute $\frac{\partial f}{\partial \delta}$ as follows.
\begin{equation*}
\frac{\partial f}{\partial \delta} 
= \ln \bigg( \frac{\delta}{C} + 1 \bigg) - \frac{1}{2C} \cdot \big( \ln \delta + 1 \big)
= \ln \bigg( \frac{\delta / C + 1}{\delta^{1/2C}} \bigg) - \frac{1}{2C}
\end{equation*}

If we can show that $\frac{\partial f}{\partial \delta} \geq 0$ for all $\delta \geq C \geq 1$, then we would have that $f$ is non-negative along $\delta = C$, and non-decreasing along the positive $\delta$ direction past the $\delta = C$ line. It must then be that $f$ is non-negative for all $\delta \geq C \geq 1$. Towards this, observe it is equivalent to demonstrate
\begin{equation*}
\bigg( \frac{\delta}{C} + 1 \bigg)^{2C} \geq \delta e
\end{equation*}

With $g(C, \delta) = \big( \frac{\delta}{C} + 1 \big)^{2c} - \delta e$, we notice that for all $\delta = C \geq 1$ we have 
\begin{equation*}
g(C, \delta) 
= \bigg( \frac{\delta}{C} + 1 \bigg)^{2C} - \delta C
= 2^{2\delta} - \delta e
\geq 1
\end{equation*}

with the second equality holding via $\frac{\delta}{C} \geq 1$. Meanwhile, observe that 
\begin{equation*}
\frac{\partial g}{\partial \delta} 
= 2 \bigg( \frac{\delta}{C} + 1 \bigg)^{2C-1} - e
\geq 2 \cdot 2^{2-1} - e
\geq 0
\end{equation*}

Consequently, $g(C, \delta) \geq 0$ and so $\big( \frac{\delta}{C} + 1 \big)^{2C} \geq \delta e$ implying $\frac{\partial f}{\partial \delta} \geq 0$ as required.
\end{proof}

\section{Almost Doubly Stochastic Matrices}
\label{sec.almost-doubly-stoc}

We compute a lower bound on the trace for almost doubly stochastic matrices. 
This will follow the proof of \autoref{lem:doubly-stochastic}, differing only in a few parameters.
We will again use the same notation for walks introduced in \autoref{sec:walks}, but now instead of probabilities of transitions in a Markov chain, we simply have weights of weighted walks.

\begin{lemma}
\label{lem:almost-Z}
Let $M \in \R^{n \times n}$ be a symmetric $\frac{2}{\sqrt{d}}$-almost doubly stochastic matrix such that
$M_{a,b} \le \frac{2}{\sqrt{d}}$ for all $a,b \in [n]$.
For $\tau \in \cT_k$, let
\[ Z = \frac{1}{n}\sum_{\omega \in \Omega_{\tau}} q(M, \tau, \omega).\]
Then 
\[ Z \ge 1 - \frac{4k}{\sqrt{d}}. \]
\end{lemma}
\begin{proof}
Let $\Omega_{\tau,j} \subseteq [n]^{j+1}$ and  $q(M,\tau,\omega,j) = \prod_{i \in [j]} M_{a(\tau,\omega,i),b(\tau,\omega,i)}$ be as in \autoref{lem:Z}.
We set up an induction with the inductive hypothesis that 
\[ \frac{1}{n}\sum_{\omega \in \Omega_{\tau,j+1}} q(M, \tau, \omega, j+1) \ge 1-\frac{4j}{\sqrt{d}}. \]

In the base case where $j = 0$, there are $n$ elements of $\Omega_{\tau,1}$, and $q(M,\tau,\omega,0)$ is an empty product, so the sum is $n$.

In the inductive step, we again partition $\Omega_{\tau,j+1}$ into prefix sets $S_{\omega}$ for $\omega \in \Omega_{\tau,j}$ and find the  $z_{\omega} \in [n]$ so that $\omega \circ (z_{\omega}) \not\in S_{\omega}$. 
So
\[
\sum_{s \in S_{\omega}} q(M, \tau, s, j+1) = \sum_{x \ne z_{\omega}} q(M,\tau,\omega,j) M_{a(\tau,\omega,j),x}
\]
Since $M$ is $2/\sqrt{d}$-almost doubly stochastic and the entries of $M$ are at most $2/\sqrt{d}$,
\[
\sum_{s \in S_{\omega}} q(M, \tau, s, j+1) = (1 - M_{a(\tau,\omega,j),z})\, q(M,\tau,\omega,j) \ge \left(1-\frac{4}{\sqrt{d}}\right) q(M,\tau,\omega,j)
\]
So since $\{S_{\omega}\}$ is a partition of $\Omega_{\tau,j+1}$, 
\begin{align*}
\frac{1}{n}\sum_{\omega \in \Omega_{\tau,j+1}} q(M, \tau, \omega,j+1)
&= \frac{1}{n}\sum_{\omega \in \Omega_{\tau,j}} \sum_{s \in S_{\omega}} q(M, \tau, s, j+1)
\\&\ge \left(1-\frac{4}{\sqrt{d}}\right)\frac{1}{n}\sum_{\omega \in \Omega_{\tau,j}} q(M, \tau, \omega, j)
\\&\ge \left(1-\frac{4}{\sqrt{d}}\right)\left(1-\frac{4(j-1)}{\sqrt{d}}\right)
\\&\ge 1 - \frac{4j}{\sqrt{d}}. \qedhere
\end{align*}
\end{proof}

Using this, we complete the trace lower bound proof for almost doubly stochastic matrices.

\begin{proof}[Proof of \autoref{lem:almost-doubly-stochastic}]
We again define $S_{\tau}$ and derive the inequality
\begin{equation}
\label{eq:almost-catalan-subset}
    \frac{1}{n}\trace M^{2k} \ge \frac{1}{n}\sum_{\tau \in \cT_k} \sum_{\omega \in \Omega_{\tau}} p(M, W(\tau, \omega))
    = \sum_{\tau \in \cT_k} S_{\tau} 
\end{equation}
as we did in the proof of \autoref{lem:doubly-stochastic}.
We again consider each fixed $\tau$ and let $Z = \tfrac{1}{n}\sum_{\omega \in \Omega_{\tau}} q(M, \tau, \omega)$, and $\#(e,\tau,\omega)$ be the number of different $i$ for which $e = (a(\tau,\omega,i),b(\tau,\omega,i))$, to derive the inequality 
\begin{equation}
\label{eq:almost-time-for-stationarity}
S_{\tau} \ge
Z\prod_{e \in [n] \times [n]} M_{e}^{\sum_{\omega \in [n]^{k+1}} q(M, \tau, \omega)\#(e,\tau,\omega)/(nZ)}.
\end{equation}
We again let
\[ q(M,\tau,\omega,j) = \prod_{i \in [j]} M_{a(\tau,\omega,i),b(\tau,\omega,i)}\]
so that
\[ q(M, \tau, \omega)\#(e,\tau,\omega) = \sum_{j \in [k]} q(M,\tau,\omega,j)\Ind[e = (a(\tau,\omega,j),b(\tau,\omega,j))].\]
Now, instead of a probability, $q(M,\tau,\omega,j)$ is the sum of the weights of ``restarting" walks, with the weight of the sequence of the first $j$ elements of $\omega$ is the product of the edge weights incurred by starting at $\omega_0 \in [n]$ and then, at the $i$th step, either
\begin{itemize}
    \item advances to $\omega_i$, multiplying the weight by $M_{\omega_{i-1},\omega_i}$, if $i$ is an increasing step of $\tau$, or
    \item if instead $i$ is a decreasing step of $\tau$, retreats to some previous step $i' < i-1$ and incurs a weight factor of $M_{\omega_{i'},\omega_i}$ instead.
\end{itemize}
In the vector view that we get by summing over all possible sequences, we start at the all-ones vector $v_0 = \ones$ and at step $i$ either
\begin{itemize}
    \item multiply by $M$ to get $v_i = M_{i-1}$, if $i$ is an increasing step of $\tau$, or
    \item retreats to some previous step $i' < i-1$ so that if $i' - (i-1) = c$ then our vector $v_i = \ones^{\T}M^av_{i'}Mv_{i'}$.
\end{itemize}
Therefore, inductively, the sum
\[\sum_{\omega \in [n]^{k+1}} q(M,\tau,\omega,j)\Ind[e = (a(\tau,\omega,j),b(\tau,\omega,j))]\]
is equal to some product $M_{a,b}e_a^{\T}M^{c_0}\ones\prod_{i} \ones^{\T}M^{c_i}\ones$ where $\sum c_i = j$.
Since $M$ is $\frac{2}{\sqrt{d}}$-almost doubly stochastic, we have $|\ones^{\T}M\ones - \ones^{T}\ones| \le \frac{2n}{\sqrt{d}}$,
Therefore, inductively, $|\ones^{\T}M^c\ones - \ones^{T}\ones| \le \frac{2cn}{\sqrt{d}}$ and
\[ |e_a^{\T}M^{c_0}\ones\prod_{i} \ones^{\T}M^{c_i}\ones - 1| \le \frac{2j}{\sqrt{d}} \]
so that the exponent in \eqref{eq:almost-time-for-stationarity} is 
\begin{align*}
\frac{1}{n}\sum_{\omega \in [n]^{k+1}} q(M, \tau, \omega)\#(e,\tau,\omega)
&= \frac{1}{n}\sum_{j \in [k]} M_{e}\ones^{\T}M^{c_0}\ones\prod_{i} \ones^{\T}M^{c_i}\ones
\end{align*}
So that
\[\frac{1}{n}\sum_{\omega \in [n]^{k+1}} q(M, \tau, \omega)\#(e,\tau,\omega) =  \gamma kMe\]
for $|\gamma - 1| \le 2k/\sqrt{d}$, noting that $\gamma$ is only a function of $\tau$ and not of the edge $e$.

Plugging this into \eqref{eq:almost-time-for-stationarity}, we have
\[ S_{\tau} \ge
Z\prod_{e \in [n] \times [n]} M_{e}^{k\gamma M_e/(nZ)}
\ge Z\left(\prod_{e \in [n] \times [n]} M_{e}^{M_e}\right)^{\gamma k/(nZ)}.\]
Since $f(x) = x^x$ is log-convex, we have $\prod_{i=1}^{N} x_i^{x_i} \ge \bar{x}^{N\bar{x}}$ for any sequence $x_1, \dots, x_N$, with average value $\bar{x} = \frac{1}{N}\sum_{i=1}^N x_i$.
Applying this to the $dn$ non-zero entries $M_e$ with average value $1/d$, we find 
\[ S_{\tau}
\ge Z(\bar{M_e}^{nd\bar{M_e}})^{\gamma k/(nZ)} = \left(\frac{Z}{d^{k/Z}}\right)^{\gamma}. \]
Now plugging this into \eqref{eq:almost-catalan-subset} along with the definition of $S_{\tau}$,
\[ \frac{1}{n}\trace M^{2k} \ge \sum_{\tau \in \cT_k} S_{\tau} \ge |\cT_k|\left(\frac{Z}{d^{k/Z}}\right)^{\gamma}. \]
The number $|\cT_k|$ of different Dyck paths of length $2k$ is given by the $k$th Catalan number, equal to $\frac{1}{k+1}\binom{2k}{k}$, which is asymptotically at least $ (1 - O(k^{-5/2}))\cdot 4^k/(k\sqrt{\pi k})$ by Stirling's approximation.
Therefore,
\[ \frac{1}{n}\trace M^{2k} \ge (1-O(k^{-5/2}))\frac{4^k}{k\sqrt{\pi k}}\,\left(\frac{Z}{d^{k/Z}}\right)^{\gamma}.\]
By \autoref{lem:almost-Z} and the fact that $1 - 2k/\sqrt{d} \le \gamma \le 1 + 2k/\sqrt{d}$,
\[ \frac{1}{n}\trace M^{2k} \ge (1-O(k^{-5/2}))\left(1 - \frac{2k}{\sqrt{d}}\right)^{(1 + 2k/\sqrt{d})}\frac{4^k}{k\sqrt{\pi k}\,d^{k(1 + 2k/\sqrt{d})/(1 - 2k/\sqrt{d})}}.\]
Assuming now that $k \le d^{1/4}$ and $(\ln d)^2 \ge 4$ and $d \ge 4^4$,
\begin{align*}
\frac{1}{n}\trace M^{2k}
&\ge (1-o_k(1))\frac{4^k}{k\sqrt{\pi k}\,d^{k(1 + 8k/\sqrt{d})}}
\\&= (1-o_k(1))e^{-(\frac{8k^2\ln d}{\sqrt{d}})}\frac{4^k}{k\sqrt{\pi k}\,d^k}
\\&\ge \left(\frac{1-o_k(1)}{\sqrt{\pi}k\sqrt{k}}\right)\frac{4^k}{d^{k-8}}
\\&\ge \left((e^{-(16\ln d + 3\ln k + 1 + o(1))/(4k)})\frac{2}{\sqrt{d}}\right)^{2k}
\\&\ge \left(\frac{2 - \frac{16\ln d + 3\ln k + 3 + o(1)}{2k}}{\sqrt{d}}\right)^{2k}. \qedhere
\end{align*}
\end{proof}

\section{Assumptions Regarding the Sparsifier}
\label{sec.sparsifier-assumptions}

In this section, we prove the lemmas allowing us to make assumptions on the structure of the sparsifier $H$. We begin by proving \autoref{lem.doublystoc-smallwtddegree} which states one can assume $H$ has weighted degree bounded between $1 \pm 2/\sqrt{d}$, otherwise the sparsifier's error is lower bounded appropriately.

\smallwtddegree*

\begin{proof}
Let us suppose that there is $i \in V$ such that $\big( D_H \big)_{i,i} > 1 + 2/\sqrt{d}$. We then have
\begin{equation*}
e_i^{\T} L_H e_i > 1 + \frac{2}{\sqrt{d}}
\end{equation*}

If $H$ is an $\epsilon$\! spectral sparsifier of $G$ then for any $x \in \R^V$,
\begin{equation*}
x^{\T} L_H x \leq (1 + \epsilon) \; x^{\T} L_G x.
\end{equation*}
When $G = \bar K_n$, we have $L_{\bar K_n} = I - \frac{1}{n} \cdot \ones \ones^{\T}$. Choosing our test vector $x$ to be $e_i$, and applying the condition that $H$ sparsifies $\bar K_n$ using the test vector $e_i$, we determine that
\begin{equation*}
1 + \frac{2}{\sqrt{d}}
< e_i^{\T} L_H e_i
\leq (1 + \epsilon) \; e_i^{\T} L_{\bar K_n} e_i
\leq (1 + \epsilon) \; e_i^{\T} \bigg( I - \frac{1}{n} \cdot \ones \ones^{\T} \bigg) e_i
= (1 + \epsilon) - \frac{1 + \epsilon}{n},
\end{equation*}
or equivalently,
\begin{equation*}
\frac{2}{\sqrt{d}} \cdot \frac{n}{n-1} + \frac{1}{n}
= \frac{2}{\sqrt{d}} \cdot \bigg( 1 + \frac{1}{n-1} \bigg). + \frac{1}{n} 
\leq \epsilon
\end{equation*}
The LHS is at least $2/\sqrt{d}$ for any $n > 0$, thus $\epsilon \geq \frac{2}{\sqrt{d}}$ as required. When $G = \bar{K}_{n/2, n/2}$, its Laplacian matrix is given by $L_{\bar K_{n/2, n/2}} = I - \frac{1}{n} \ones \ones^{T} + \frac{1}{n} v v^{\T}$ where $v$ is the balanced vector indicating the partition. If $H$ sparsifies $\bar K_{n/2, n/2}$, we have that
\begin{equation*}
1 + \frac{2}{\sqrt{d}}
< e_i^{\T} L_H e_i
\leq (1 + \epsilon) \; e_i^{\T} \bigg( I - \frac{1}{n} \cdot \ones \ones^{\T} + \frac{1}{n} \cdot v v^T \bigg) e_i
\leq (1 + \epsilon) - \frac{1+\epsilon}{n} + \frac{1+\epsilon}{n}
\end{equation*}
or equivalently $\epsilon > 2/\sqrt{d}$ as required. The analysis for when $\big( D_H \big)_{i,i} < 1 - 2/\sqrt{d}$ is symmetric; choose $x = e_i$ and use the condition that $H$ sparsifies $G$ only if $x^{\T} L_H x \geq (1 - \epsilon) \; x^{\T} L_G x$ for any $x \in \R^V$.
\end{proof}

We next prove \autoref{lem.doublystoc-smalledgewt} which implies a doubly stochastic $M$ has entries bounded by $2/\sqrt{d}$, otherwise there exists $X \succeq \mathbf{0}$ such that the ratio between $\iprod{ X, M - D + I}$ and $\trace X$ is large. In context of our lower bound on spectral sparsification error, this allows us to assume $H$ has edge weights bounded by $2 / \sqrt{d}$.

\smalledgewt*

\begin{proof}
Denote $\Pi^{\perp}$ by the orthogonal projector away from $S$, let $e_{a,b.\pm} = e_a \pm e_b$, and consider
\[ 
X = \Pi^{\perp} \cdot e_{ab+} e_{ab+}^{\T} \cdot \Pi^{\perp}
\qquad\text{and}\qquad 
Y = \Pi^{\perp} \cdot e_{ab-} e_{ab-}^{\T} \cdot \Pi^{\perp}
. 
\]
As $\trace X + \trace Y \le 4$, showing that 
\[ \iprod{X-Y,M-D+I} = \iprod{X,M-D+I} - \iprod{Y,M-D+I} \ge  \frac{4 \big( 2 - O(\sqrt{d}/n^{\delta}) \big)}{\sqrt{d}} = \frac{8}{\sqrt{d}} - O(1/n^{\delta})  \]
will suffice, as then by an arithmetic-mean--harmonic-mean inequality,
\[ \frac{\iprod{X,M-D+I}}{\trace X} - \frac{\iprod{Y,M-D+I}}{\trace Y} \ge \frac{4}{\sqrt{d}} - O(1/n^{\delta}) \]
and at least one of the two terms is at least $2/\sqrt{d} - O(1/n^{\delta})$ in absolute value. Note that 
\[ 2 \cdot \iprod{e_{a} e_{b}^{\T} + e_{b} e_{a}^{\T} , M-D+I} \ge 8/\sqrt{d},\]
so that it remains to show that, taking $\Delta = (X-Y) - 2(e_{a} e_{b}^{\T} + e_{b} e_{a}^{\T})$,
\[\iprod{\Delta, M-D+I} \ge - O(1/n^{\delta}). \]

Denote $r = \dim S$ and fix an orthonormal basis $v_1, \ldots, v_r$ for $S$. Writing $\Delta$ under this, we derive:
\begin{align*}
\Delta
&= \bigg( I - \sum_{i=1}^r v_i v_i^{\T} \bigg) \cdot 2(e_{a} e_{b}^{\T} + e_{b} e_{a}^{\T}) \cdot \bigg( I - \sum_{i=1}^r v_i v_i^{\T} \bigg) - 2(e_{a} e_{b}^{\T} + e_{b} e_{a}^{\T}) \\
&= - 2\sum_{i=1}^r \Big( v_i^{\T}e_{a}(v_i e_b^{\T} + e_bv_i^{\T}) +  v_i^{\T} e_{b} (v_i e_a^{\T} + e_av_i^{\T}) - \big( v_i^{\T} e_{a} \big)\big( v_i^{\T} e_{b} \big) v_i v_i^{\T} \Big).
\end{align*}
So,
\begin{align*}
&|\iprod{\Delta,M-D+I}| 
\\&\le \iprod{-2\sum_{i=1}^r \Big( v_i^{\T}e_{a}(v_i e_b^{\T} + e_bv_i^{\T}) +  v_i^{\T} e_{b} (v_i e_a^{\T} + e_av_i^{\T})-\big( v_i^{\T} e_{a} \big)\big( v_i^{\T} e_{b} \big) v_i v_i^{\T},M-D+I}
\\&\le 2\sum_{i=1}^r \left( \|v_i\|_{\infty} \cdot \Big( \big| 2v_i^{\T}(M-D+I) e_{a} \big| + \big| 2v_i^{\T} (M-D+I) e_b \big| \Big) + \|v_i\|_{\infty}^2 \cdot \big( v_i^{\T}(M-D+I)v_i \big) \right)
\\&\le 2\sum_{i=1}^r \left( \|v_i\|_{\infty} \cdot 4 \big( 1+4/\sqrt{d} \big) \cdot \|v_i\|_{\infty} + n \cdot \|v_i\|_{\infty}^4 \right)
\\&\le 8 \big( 1+4/\sqrt{d} \big) \cdot \frac{r}{n^{1/2+\delta}} + \big( 1+2/\sqrt{d} \big) \cdot \frac{r}{n^{2\delta}}
.
\end{align*}
and as $r \leq n^\delta$, we have $\iprod{ \Delta, M - D + I } \geq - O(1/n^{\delta})$ as required.
\end{proof}

\end{document}